\documentclass[12pt]{article}

\usepackage[letterpaper,top=1in,bottom=1in,left=1in,right=1in]{geometry}
\usepackage{upgreek}

\usepackage{titlesec}
\titleformat{\section}{\normalfont\scshape\large}{\S\thesection.}{0.5em}{}
\titleformat{\subsection}{\normalfont\scshape\normalsize}{\thesubsection.}{0.5em}{}
\titleformat{\subsubsection}{\normalfont\itshape\normalsize}{\thesubsubsection.}{0.5em}{}

\usepackage{amsthm}
\newtheorem{theorem}{Theorem}[section]
\newtheorem{lemma}[theorem]{Lemma}

\theoremstyle{definition}
\newtheorem{definition}[theorem]{Definition}
\newtheorem{example}[theorem]{Example}

\theoremstyle{remark}
\newtheorem{remark}[theorem]{Remark}

\usepackage[numbers]{natbib}
\title{Loop-Checking and Counter-Model Extraction for Intuitionistic Tense Logics via Nested Sequents}
\author{Tim S. Lyon\\
\small Technische Universit{\"a}t Dresden\\
\small \texttt{timothy\_stephen.lyon@tu-dresden.de}}
\date{}

\usepackage{amssymb, amsmath}
\usepackage{mathtools}
\usepackage{bussproofs, rotating, xcolor, upgreek}
\usepackage{comment}
\usepackage[ruled, linesnumbered]{algorithm2e}
\usepackage{mathrsfs}

\usepackage{tikz, tikz-cd, color}
\usetikzlibrary{arrows, automata}
\usetikzlibrary{trees}
\usetikzlibrary{fit}
\usetikzlibrary{positioning,arrows,calc,decorations.markings}
\tikzset{
modal/.style={>=stealth',shorten >=1pt,shorten <=1pt,auto,node distance=1.5cm,semithick},
world/.style={circle,draw,minimum size=0.5cm,fill=gray!15},
point/.style={circle,draw,inner sep=0.5mm,fill=black},
reflexive above/.style={->,loop,looseness=7,in=120,out=60},
reflexive below/.style={->,loop,looseness=7,in=240,out=300},
reflexive left/.style={->,loop,looseness=7,in=150,out=210},
reflexive right/.style={->,loop,looseness=7,in=30,out=330}%,
}

\usepackage{enumitem}

\usepackage{changepage,enumerate,relsize}
\usepackage{multicol}

\usepackage{wrapfig}
\theoremstyle{definition}
\theoremstyle{remark}
  \newtheorem{innercustomthm}{Theorem}
\newenvironment{customthm}[1]
  {\renewcommand\theinnercustomthm{#1}\innercustomthm}
  {\endinnercustomthm}

\DeclareSymbolFont{extraup}{U}{zavm}{m}{n}
\DeclareMathSymbol{\vardiamond}{\mathalpha}{extraup}{87}

\definecolor{tim}{RGB}{0, 0, 250}

\newcommand{\iffi}{\textit{iff} }
\newcommand{\etc}{$\ldots$ }
\newcommand{\dfn}{Definition}
\newcommand{\fig}{Figure}
\newcommand{\lem}{Lemma}

\newcommand{\D}{\mathsf{D}}
\newcommand{\T}{\mathsf{T}}

\newcommand{\B}{\mathsf{B}}

\newcommand{\axs}{\mathcal{A}}

\newcommand{\sbl}{\{}
\newcommand{\sbr}{\}}

\newcommand{\nikt}{\mathsf{NIK_{t}}}
\newcommand{\niktc}{\mathsf{NIK_{t}C}}

\newcommand{\calc}{\mathsf{NIK_{t}C}}

\newcommand{\ns}{\mathcal{G}}
\newcommand{\nsii}{\mathcal{H}}
\newcommand{\nsiii}{\mathcal{K}}
\newcommand{\nsiv}{\mathcal{J}}
\newcommand{\nsinp}{\mathcal{I}}

\newcommand{\sar}{\vdash}

\newcommand{\sat}{\vDash} %\Vdash^{\albet}}

\newcommand{\da}{\downarrow}

\newcommand{\prf}{\pi} %\mathcal{D}}
            
\newcommand{\ikt}{\mathsf{IK_{t}}}
\newcommand{\iktc}{\mathsf{IK_{t}C}}

\newcommand{\isf}{\mathsf{IS4}}

\newcommand{\fd}{\circ} %\mathsf{F}}
\newcommand{\bd}{\bullet} %\mathsf{P}}
\newcommand{\forb}{\mathsf{x}} %\circledast}

\newcommand{\charx}{x}

\newcommand{\conv}[1]{\overline{#1}}

\newcommand{\iimp}{\rightarrow} %\supset}
\newcommand{\imp}{\rightarrow}
\newcommand{\inot}{\neg}

\newcommand{\xbox}{[\forb]}

\newcommand{\xdia}{\langle \forb \rangle}

\newcommand{\wbox}{{\scriptstyle\square}}
\newcommand{\wdia}{{\mathbin{\text{\rotatebox[origin=c]{45}{$\scriptstyle\square$}}}}} %\diamondsuit}

\newcommand{\bbox}{{\scriptstyle\blacksquare}}
\newcommand{\bdia}{\mathbin{\text{\rotatebox[origin=c]{45}{$\scriptstyle\blacksquare$}}}}

\newcommand{\lang}{\mathscr{L}}
\newcommand{\langt}{\mathscr{L}_{T}}

\newcommand{\atm}{\mathsf{Atm}}

\newcommand{\sufo}{\mathsf{sufo}}

\newcommand{\id}{\mathsf{id}}

\newcommand{\boximpr}{[\imp]\mathsf{R}}

\newcommand{\conl}{\land\mathsf{L}}
\newcommand{\conr}{\land\mathsf{R}}
\newcommand{\disl}{\lor\mathsf{L}}
\newcommand{\disr}{\lor\mathsf{R}}

\newcommand{\iimpl}{{\iimp}\mathsf{L}}

\newcommand{\iimpr}{{\iimp}\mathsf{R}}
\newcommand{\wdial}{\wdia\mathsf{L}}

\newcommand{\bdial}{\bdia\mathsf{L}}
\newcommand{\wboxr}{\wbox\mathsf{R}}
\newcommand{\bboxr}{\bbox\mathsf{R}}
\newcommand{\botl}{\bot\mathsf{L}}

\newcommand{\ddr}{\mathsf{d}}

\newcommand{\names}{\mathtt{Names}}

\newcommand{\nshole}{\{\cdot\}}

\newcommand{\mint}{\iota}
\newcommand{\frames}{\mathscr{F}}

\newcommand{\simplepath}[1]{\overset{\mathclap{#1}}{\leadsto}}
\newcommand{\prpath}[1]{\hspace{3pt}\raise-3pt\hbox{$\simplepath{#1}$}\hspace{3pt}}

\newcommand{\set}[1]{\{#1\}}

\newcommand{\lb}{(} %\langle}
\newcommand{\rb}{)} %\rangle}
\newcommand{\ru}{\mathsf{r}}

\newcommand{\proves}{\Vdash}

\newcommand{\branch}{\mathcal{B}}

\newcommand{\infs}[1]{\mathsf{In}(#1)} %InF
\newcommand{\outfs}[1]{\mathsf{Out}(#1)} %OutF
\newcommand{\cons}{\mathsf{C}}

\newcommand{\depk}[2]{\mathsf{dp}_{#2}(#1)}
\newcommand{\mdep}[1]{\mathsf{md}(#1)}

\newcommand{\idc}{(id)}
\newcommand{\botlc}{(\bot L)}
\newcommand{\dislc}{(\lor L)}
\newcommand{\disrc}{(\lor R)}
\newcommand{\conlc}{(\land L)}
\newcommand{\conrc}{(\land R)}

\newcommand{\iimplc}{({\iimp} L)}
\newcommand{\ddrc}{(d)}
\newcommand{\xdialc}{(\xdia L)}
\newcommand{\xboxlc}{(\xbox L)}
\newcommand{\xdiarc}{(\xdia R)}

\newcommand{\bp}{\mathtt{bp}} 
\newcommand{\bpv}{\ctv_{\bp}}
\newcommand{\bpe}{\cte_{\bp}}

\newcommand{\bps}{\mathcal{S}}

\newcommand{\ct}{\mathtt{ct}}
\newcommand{\ctv}{\mathcal{V}}
\newcommand{\cte}{\prec}
\newcommand{\ctl}{\mathcal{L}}

\newcommand{\hequiv}{\simeq}
\newcommand{\hec}[1]{[\![#1]\!]}
\newcommand{\enum}{\eta}

\newcommand{\mor}{\gamma}
\newcommand{\sthom}{\alpha} %h_{s}}
\newcommand{\wkhom}{\beta} %h_{w}}
\newcommand{\vechom}{\vec{\sthom}}
\newcommand{\homnat}{\bar{\wkhom}}

\newcommand{\hreach}[1]{\leadsto^{#1}} %\overset{#1}{\longrightarrow}} %\tikztwoheadrightarrow}} %\xtwoheadrightarrow{#1}{}}

\newcommand{\tr}[1]{T_{#1}}
\newcommand{\ltr}{T}
\newcommand{\trv}{V}
\newcommand{\tre}{\lessdot} %<_{\ltr}}
\newcommand{\trep}[1]{\lessdot_{#1}}
\newcommand{\trl}{L}

\newcommand{\X}{\mathsf{X}}
\newcommand{\rable}{\twoheadrightarrow}

\newcommand{\rableosf}{\twoheadrightarrow_{\fd}}

\newcommand{\rableosx}{\twoheadrightarrow_{\forb}^{\cons}}

\newcommand{\prove}{\mathtt{Prove}_{\cons}}
\newcommand{\true}{\mathtt{True}}
\newcommand{\false}{\mathtt{False}}

\newcommand{\dbr}{\mathsf{db}}
\newcommand{\tval}{\mathtt{T}}
\newcommand{\fval}{\mathtt{F}}

\newcommand{\pathi}{w} %\boldsymbol{w}}
\newcommand{\pathii}{u} %\boldsymbol{u}}
\newcommand{\pathiii}{v} %\boldsymbol{v}}
\newcommand{\pathiv}{v'} %\boldsymbol{v}'}

\newcommand{\prgr}[1]{\mathsf{pg}(#1)}
\newcommand{\rt}{w}

\newcommand{\bptr}[1]{\enum\tr{#1}}

\newcommand{\fcons}{\frames_{\cons}}

\newcommand{\xdial}{\langle \forb \rangle\mathsf{L}}
\newcommand{\xdiar}{\langle \forb \rangle\mathsf{R}}
\newcommand{\wdiar}{\wdia\mathsf{R}}
\newcommand{\bdiar}{\bdia\mathsf{R}}
\newcommand{\xdiari}{\langle \forb \rangle\mathsf{R}_{1}}
\newcommand{\xdiarii}{\langle \forb \rangle\mathsf{R}_{2}}
\newcommand{\xboxl}{[\forb]\mathsf{L}}
\newcommand{\bboxl}{\bbox\mathsf{L}}
\newcommand{\wboxl}{\wbox\mathsf{L}}
\newcommand{\xboxli}{[\forb]\mathsf{L}_{1}}
\newcommand{\xboxlii}{[\forb]\mathsf{L}_{2}}
\newcommand{\xboxr}{[\forb]\mathsf{R}}

\begin{document}

\maketitle
\thispagestyle{empty}

% ABSTRACT
\noindent
\begin{minipage}{\textwidth}
\noindent
\textbf{Abstract.} This paper develops a novel nested sequent proof-search methodology for intuitionistic tense logics (ITLs), supporting finite counter-model extraction. We introduce a new loop-checking method that detects repeating nested sequents using homomorphisms, thereby bounding the height of derivations during proof-search. Due to the non-invertibility of some inference rules, the algorithm does not construct a single derivation, but a generalized structure we call a \emph{computation tree}. We show how proofs and counter-models can be extracted from computation trees when proof-search succeeds or fails, respectively. This establishes the finite model property for each ITL of the form $\ikt \cup \axs$, with $\axs \subseteq \set{\T,\B,\D}$.
\end{minipage}

%%%INTRO
\section{Introduction}\label{sec:intro}

%Notes:
%1. To avoid loop checking we make use of finite tree calculus with diagonal formula in box rule -> automatic termination of modal relation
%2. Cite papers on int gram logics and related calculi

Numerous authors have proposed logics that provide an intuitionistic treatment of modal reasoning~\cite{BovDov84,Bul65,Dov85,Fit48,PloSti86,Fis77,Fis84,Sim94}; see Stewart et al.~\cite{StePaiAle15} for an overview. Among these, the formulations due to Fischer Servi~\cite{Fis84}, Plotkin and Stirling~\cite{PloSti86}, and Ewald~\cite{Ewa86} have been particularly influential, especially following Simpson’s work~\cite{Sim94}, which placed such logics on a solid philosophical footing. Simpson’s key insight was to interpret modal operators within an intuitionistic meta-theory. This perspective carries core intuitionistic principles into the modal setting--such as the disjunction property and the failure of the law of excluded middle--and disrupts the classical duality between necessity and possibility. Beyond their foundational significance, intuitionistic modal logics (IMLs) have found applications in computer science, including program verification~\cite{FaiMen95}, reasoning about functional programs~\cite{Pit91}, and the design of programming languages~\cite{DavPfe01}.

Ewald’s intuitionistic tense logics (ITLs) form a prominent sub-class of IMLs that integrate constructive and temporal reasoning within a single framework. This is achieved by extending intuitionistic logic with temporal operators quantifying over both past and future states in Kripke-style models~\cite{Ewa86}. Recently, proof systems for a broad class of ITLs were introduced in the formalism of nested sequents~\cite{Lyo25}, though proof-search methods for these systems remain completely unexplored.

Designing proof-search procedures that support counter-model extraction for IMLs is notoriously difficult for two main reasons. First, these logics are semantically characterized by bi-relational Kripke models in which the intuitionistic accessibility relation is transitive. This transitivity can lead to the non-termination of proof search in sequent calculi, necessitating sophisticated loop-checking mechanisms to ensure termination. Second, sequent systems for intuitionistic modal logics make pervasive use of non-invertible inference rules (e.g.,~\cite{KuzStr19,Sim94,Str13,Lyo21a}), rendering the standard paradigm of constructing a \emph{single} derivation witnessing provability or refutability inapplicable. In the intuitionistic setting, non-invertibility introduces intrinsic branching in proof search, obscuring both the extraction of proofs from successful searches and the construction of counter-models from failed ones. In this paper, we overcome both obstacles and present proof-search algorithms for ITLs that support finite counter-model extraction.

While Simpson proved decidability for several extensions of $\mathsf{IK}$ via proof-search in labeled sequent calculi, his method does not yield counter-models when proof search fails. Instead, his procedure shows that only finitely many candidate derivations exist for a given formula and then checks whether any of them is a valid proof. The basic logic $\mathsf{IK}$ has more recently been shown to admit proof-search in a labeled sequent calculus that \emph{does} support counter-model extraction~\cite{GirEtAl24}, but little is known beyond this case.\footnote{The paper~\cite{GirEtAl23} attempted to establish decidability of $\isf$ via labeled proof-search, but the proof is known to contain an error (personal communication).}

To bridge this gap, we develop a novel proof-search methodology for Ewald's $\ikt$ extended with any set of axioms $\axs \subseteq \set{\T, \B, \D}$, supporting finite counter-model extraction. The ability to extract finite counter-models from failed proof search establishes the finite model property (FMP), a key meta-theoretic property with consequences for expressivity, complexity, and automated reasoning. We illustrate the inner workings of this methodology within the nested sequent formalism.

%\footnote{Leivant~\cite[p.~361]{Lei81} had already introduced a notational variant of nested sequents in his proof-theoretic work on propositional dynamic logic, where formulae are prefixed by so-called execution sequences.} 

The nested sequent formalism was introduced independently by Kashima~\cite{Kas94} and Bull~\cite{Bul92}, though later work by Br\"unnler~\cite{Bru09} and Poggiolesi~\cite{Pog09} was instrumental to the popularity of the formalism. A nested sequent is a tree of Gentzen sequents, and thus generalizes Gentzen's original formalism~\cite{Gen35a,Gen35b}. Such sequents yield elegant calculi that minimize syntactic overhead, produce compact proofs, and where termination of proof-search is more easily established (cf.~\cite{Lyo21thesis,LyoOst23}). Moreover, such systems have proven well-suited for extracting counter-models from failed proof-search for classical modal logics~\cite{Bru09,LyoGom22,TiuIanGor12}. These favorable computational properties arise in part from the \emph{analyticity} of the systems--i.e., their satisfaction of the \emph{subformula property}--which ensures that every formula %appearing 
in the premise of a rule is a subformula of its conclusion.

The nested systems used in this paper are variants of systems that already exist in the literature. In~\cite{Lyo22,Lyo25}, \emph{single-conclusioned} nested systems were provided for intuitionistic grammar logics~\cite{Lyo21b}, which is a class of logics subsuming %$\ikt \cup \axs$ for $\axs \subseteq \set{\T,\B,\D}$. 
the ITLs considered here. By contrast, the systems in this paper are \emph{multi-conclusioned}, meaning, nested sequents may contain multiple formulae in the consequent. Kuznets and Stra{\ss}burger~\cite{KuzStr19} introduced multi-conclusioned systems for standard IMLs without temporal modalities, so our systems can be viewed as generalizations of these. %We note that, beyond nested sequent systems, various \emph{labeled sequent systems}--which use graphs of Gentzen sequents in proofs--have been provided for IMLs and ITLs~\cite{Lyo22,MarMorStr21,Sim94}.

\paragraph{Contributions.} This paper makes the following contributions:

$\bullet$ We present a correct and terminating proof-search algorithm for \emph{non-invertible} nested sequent systems. In such systems, not all rules are invertible; that is, the validity of a rule's conclusion does not necessarily imply the validity of its premises. Invertibility typically enables counter-model extraction from a single branch of a derivation obtained by failed proof-search, thereby simplifying the construction of counter-models. Lacking this property, we must adopt an alternative strategy. Our solution is to forego the construction of a single derivation and instead build a more general structure, which we call a \emph{computation tree}. This tree compactly represents all possible derivations of the input. %To the best of our knowledge, this 
This constitutes the first proof-search algorithm developed specifically for non-invertible nested sequent systems.

%$\bullet$ Our proof-search algorithm is \emph{uniform} over the logics we consider, that is, one does not need to consider ad hoc cases to establish correctness or termination of proof-search based on idiosyncrasies of the logics. %We also conjecture that this method can be easily generalized to uniformly establish decidability for $\ik \cup \axsii$ and $\ikt \cup \axsii$ for each $\axsii \subseteq \axs$

$\bullet$ An important aspect of our algorithm is that it performs loop-checking with nested sequents. We introduce a new technique that relies on certain homomorphisms to detect repeating nested sequents in branches of a computation tree.

$\bullet$ We show how computation trees can be pruned to recover proofs when proof-search succeeds and deems the input valid. Moreover, we show how finite bi-relational models can be extracted from computation trees when proof-search fails and the input is deemed invalid. While nested sequents can be used to define the modal relation in a bi-relational model, the intuitionistic relation is defined by sequences of homomorphisms between nested sequents. %computation tree.

$\bullet$ Finally, in his 1986 paper, Ewald~\cite{Ewa86} attempted to prove the finite model property of $\ikt \cup \axs$, for $\axs \subseteq \set{\T,\B,\D}$, by means of a filtration argument. As noted by Simpson~\cite[p.~158]{Sim94}, an error in Ewald’s proof was identified by Colin Stirling. To the best of our knowledge, this proof has not been subsequently corrected in the literature. In this paper, we provide a correct proof of the finite model property for these intuitionistic tense logics. % \tim{which gives a $\twoexptime$ upper bound on the complexity of each logic.}

% $\ikt \cup \axs$ with $\axs \subseteq \set{\T,\B,\D}$.

%Following is not true actually:
%The counter-models exhibit a `quasi-finite model property,' namely, the intuitionistic relation is finite, while the modal relation can be infinite.

\paragraph{Paper Organization.} In Section~\ref{sec:prelims}, we present the preliminaries for ITLs. % $\ikt \cup \axs$ with $\axs \subseteq \set{\T,\B,\D}$. 
Section~\ref{sec:systems} introduces nested sequent systems and discusses soundness and completeness. Section~\ref{sec:decid} presents our proof-search algorithm and establishes its correctness and termination. % \tim{in $\twoexptime$}. 
Finally, in Section~\ref{sec:conclusion}, we discuss directions for future work.

%%%PRELIMINARIES
\section{Preliminaries}\label{sec:prelims}

%\resizebox{\columnwidth}{!}{
%%%NOTES

%\subsection{Syntax and Semantics}

We let $\atm := \{p, q, r, \ldots\}$ be a denumerable set of \emph{atoms} and define the tense language $\lang$ via the following grammar in BNF:
$$
A ::= p \mid \bot \mid A \lor A \mid A \land A \mid A \iimp A \mid \wdia A \mid \bdia A \mid \wbox A \mid \bbox A
$$
where $p$ ranges over the set $\atm$ of atoms. We refer to $\wdia$ and $\wbox$ as \emph{forward} modalities, and $\bdia$ and $\bbox$ as \emph{backward} modalities. We use $A$, $B$, $C$, \etc to denote formulae in $\lang$ and define %$\top := \bot \iimp \bot$, 
$\inot A := A \iimp \bot$ and $A \leftrightarrow B := (A \iimp B) \land (B \iimp A)$, as usual. The \emph{length} of a formula $A$, denoted $\ell(A)$, is the number of symbols contained in $A$. We define the set of \emph{subformulae} of $A$, denoted $\sufo(A)$, as usual; $B$ is a \emph{subformula} of $A$ \iffi $B \in \sufo(A)$. We define the \emph{modal depth} of a formula $A$, denoted $\mdep{A}$, recursively as follows:
\begin{description}

\item[$\bullet$] $\mdep{p} = \mdep{\bot} := 0$ with $p \in \atm$;

\item[$\bullet$] $\mdep{A \circ B} := \max\set{\mdep{A},\mdep{B}}$ with $\circ \in \set{\lor, \land, \iimp}$;

\item[$\bullet$] $\mdep{\nabla A} := \mdep{A} + 1$ with $\nabla \in \set{\wdia,\bdia,\wbox,\bbox}$.

\end{description}
We define $\mdep{\Gamma} := \max\set{\mdep{A} \mid A \in \Gamma}$ for a finite set $\Gamma$ of formulae. Formulae are interpreted over \emph{bi-relational frames} and \emph{models} (cf.~\cite{Ewa86}), which we refer to as \emph{frames} and \emph{models} for simplicity.

\begin{definition}[Frame]\label{def:bi-relational-model} A \emph{frame} is a tuple $F = (W, \leq, R)$ where:
\begin{description}%[leftmargin=!, labelwidth=0em, labelsep=.25em]

\item[$\bullet$] $W$ is a non-empty set of \emph{worlds} $\{w, u, v, \ldots\}$;

\item[$\bullet$] The \emph{intuitionistic relation} $\leq \ \subseteq W \times W$ is a pre-order; %, i.e., it is reflexive and transitive;

\item[$\bullet$] The \emph{modal relation} $R \subseteq W \times W$ satisfies:

\end{description}
\begin{description}[leftmargin=!, labelwidth=0em, labelsep=.25em]

\item[$(F1)$] For all $w, w', v \in W$, if $w \leq w'$ and $w R v$, then there exists a $v' \in W$ such that $w' R v'$ and $v \leq v'$;

\item[$(F2)$] For all $w, v, v' \in W$, if $w R v$ and $v \leq v'$, then there exists a $w' \in W$ such that $w \leq w'$ and $w' R v'$.

%\item[(F3)] $w R_{\charx} u$ \iffi $u R_{\conv{\charx}} w$.

\end{description}
\end{definition}

\begin{definition}[Model]\label{def:bi-relational-model} We define a \emph{model} based on a frame $F$ to be a pair $M = (F, V)$ such that 
%\begin{itemize}
%\item $F = (W, \leq, \{R_{\charx} \mid \charx \in \albet\})$ is a frame;
$V : W \to 2^{\atm}$ is a \emph{valuation function} satisfying the \emph{monotonicity condition}: (M) for each $w, u \in W$, if $w \leq u$, then $V(w) \subseteq V(u)$.
%\end{itemize}
%We will often times write a model $M$ as $(W, \leq, \{R_{\charx} \mid \charx \in \albet\},V)$.
\end{definition}

The (F1) and (F2) conditions %are depicted in \fig~\ref{fig:f1-f2} and 
ensure the monotonicity of complex formulae (see \lem~\ref{lem:persistence}) in our models, which is a property characteristic of intuitionistic logics. See Simpson~\cite[Section~3]{Sim94} for a discussion of these conditions.

%We interpret formulae from $\lang$ over models via the following clauses.

%\cite{GabSheSkv09}
\begin{definition}[Semantic Clauses]\label{def:semantic-clauses} Let $M = (W, \leq, R, V)$ be a model with $w \in W$. The \emph{satisfaction relation} $M,w \sat A$ between $w \in W$ and a formula $A \in \lang$ is inductively defined as follows:
\begin{description}[leftmargin=!, labelwidth=0em, labelsep=.25em]

\item[$\bullet$] $M,w \sat p$ \iffi $p \in V(w)$, for $p \in \atm$;

\item[$\bullet$] $M,w \not\sat \bot$;

\item[$\bullet$] $M,w \sat A \lor B$ \iffi $M,w \sat A$ or $M,w \sat B$;

\item[$\bullet$] $M,w \sat A \land B$ \iffi $M,w \sat A$ and $M,w \sat B$;

%\item $M, w \sat \inot A$ \iffi for all $w' \in W$, if $w \leq w'$, then $M,w' \not\sat A$;

\item[$\bullet$] $M,w \sat A \iimp B$ \iffi $\forall u \in W$, if $w \leq u$ and $M,u \sat A$, then $M,u \sat B$;

\item[$\bullet$] $M,w \sat \wdia A$ \iffi $\exists u \in W$ such that $wRu$ and $M,u \sat A$;

\item[$\bullet$] $M,w \sat \bdia A$ \iffi $\exists u \in W$ such that $uRw$ and $M,u \sat A$;

\item[$\bullet$] $M,w \sat \wbox A$ \iffi $\forall u, v \in W$, if $w \leq u$ and $u R v$, then $M,v \sat A$;

\item[$\bullet$] $M,w \sat \bbox A$ \iffi $\forall u, v \in W$, if $w \leq u$ and $v R u$, then $M,v \sat A$;

\item[$\bullet$] $M \sat A$ \iffi $\forall w \in W$, $M, w \sat A$.

\end{description}
%\end{itemize}
%A formula $A$ is defined to be \emph{globally true on $M$}, written $M \sat A$, \iffi $M,u \sat A$ for all worlds $u \in W$ of $M$. A formula $A$ is defined to be \emph{valid}, written $\sat A$, \iffi $A$ is globally true on every bi-relational $\albet$-model. Last, we say that a set $\fseti$ of formulae \emph{semantically implies} a formula $A$, written $\fseti \sat A$, \iffi for all bi-relational $\albet$-models $M$ and each $w \in W$ of $M$, if $M, w \sat B$ for each $B \in \fseti$, then $M,w \sat A$.
A formula $A$ is \emph{valid} relative to a class of frames $\frames$, written $\frames \sat A$, \iffi for all models $M$ based on a frame $F \in \frames$, $M \sat A$.
\end{definition}

%The following %can be proven by a straightforward induction on the length of $A$ 
%is a well-known property of ITLs~\cite{Ewa86}.

\begin{lemma}[\cite{Ewa86}]\label{lem:persistence}
Let $M$ be a model with $w,u \in W$. If $w \leq u$ and $M, w \sat A$, then $M, u \sat A$.
\end{lemma}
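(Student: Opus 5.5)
We argue by induction on the length $\ell(A)$ of $A$, fixing a model $M = (W, \leq, R, V)$ and worlds $w \leq u$ with $M, w \sat A$, and showing $M, u \sat A$.

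The base cases are immediate. For an atom $p$, monotonicity (M) of $V$ gives $p \in V(u)$. For $\bot$ the claim holds vacuously, since $M, w \not\sat \bot$. The cases $A \lor B$ and $A \land B$ follow directly from the induction hypothesis applied to the relevant disjunct or to both conjuncts.

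The cases $A \iimp B$, $\wbox A$ and $\bbox A$ need only transitivity of $\leq$, not the induction hypothesis.
\begin{itemize}
\item For $A \iimp B$, let $u \leq u'$ with $M, u' \sat A$. Then $w \leq u'$ by transitivity, so $M, u' \sat B$.
\item For $\wbox A$, let $u \leq u'$ and $u' R v$. Then $w \leq u'$, so $M, v \sat A$ by the clause for $\wbox$ at $w$.
\item For $\bbox A$, the argument is identical, with $v R u'$ in place of $u' R v$.
\end{itemize}

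The substantive cases are the diamonds, where the frame conditions (F1) and (F2) enter.
\begin{itemize}
\item For $\wdia A$, suppose $w R v$ and $M, v \sat A$. Since $w \leq u$ and $w R v$, condition (F1) yields $v'$ with $u R v'$ and $v \leq v'$. The induction hypothesis on $A$ gives $M, v' \sat A$, so $M, u \sat \wdia A$.
\item For $\bdia A$, suppose $v R w$ and $M, v \sat A$. Here $v R w$ and $w \leq u$ match the hypotheses of (F2), with $v$ playing the role of the source world. This yields $v'$ with $v \leq v'$ and $v' R u$. The induction hypothesis gives $M, v' \sat A$, so $M, u \sat \bdia A$.
\end{itemize}

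The main point requiring care is the backward diamond. One must check that the orientation of (F2) matches the semantic clause for $\bdia$, which it does as shown. This is precisely why both frame conditions are required: (F1) handles $\wdia$ and (F2) handles $\bdia$.
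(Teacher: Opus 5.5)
Your proof is correct. The paper gives no proof of its own here, only the citation to Ewald~\cite{Ewa86}, and your structural induction is the standard argument behind that result: monotonicity of $V$ for atoms, transitivity of $\leq$ for $\iimp$, $\wbox$, $\bbox$, and (F1), (F2) for $\wdia$, $\bdia$ respectively, with the orientation of (F2) checked correctly. The only tidying needed is to state the induction hypothesis for all pairs of worlds $w \leq u$ rather than for a pair fixed in advance, since in the diamond cases you apply it at $v \leq v'$.
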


We consider three conditions that may be imposed on frames, namely, reflexivity ($\T$), symmetry ($\B$), and seriality ($\D$), listed below:
\begin{description}%[leftmargin=!, labelwidth=0em, labelsep=.25em]

\item[$(\T)$] for all $w \in W$, $wRw$;

\item[$(\B)$] for all $w,u \in W$, if $wRu$, then $uRw$;

\item[$(\D)$] for all $w \in W$, there exists a $u \in W$ such that $wRu$.

\end{description}
We let $\cons \subseteq \set{\T,\B,\D}$ and let $\fcons$ be the set of all frames satisfying the conditions in $\cons$. We define the intuitionistic tense logic $\iktc := \set{A \in \lang \mid \fcons \sat A}$. Since reflexivity implies seriality, this gives rise to six distinct ITLs: $\ikt$, $\ikt\mathsf{D}$, $\ikt\mathsf{B}$, $\ikt\mathsf{T}$, $\ikt\mathsf{DB}$, and $\ikt\mathsf{TB}$. Each logic is known to admit a sound and complete axiom system~\cite{Ewa86}. %We do not recall these axiom systems here (cf.~\cite{Ewa86,Fis84,PloSti86})

%with the $\T$ axiom $(A \iimp \wdia A) \land (\wbox A \iimp A)$ and $\four$ axiom $(\wdia \wdia A \iimp \wdia A) \land (\wbox A \iimp \wbox \wbox A)$.

%%%CALCULUS
\section{Nested Sequent Systems}\label{sec:systems}

%NOTES:

We let $\Gamma$, $\Delta$, $\Sigma$, $\ldots$ be finite sets of formulae from $\lang$ and define nested sequents inductively as follows:

\smallskip

\noindent
(1) Each \emph{Gentzen sequent} of the form $\Gamma \sar \Delta$ is a nested sequent;

\noindent
(2) Each expression $\Gamma \sar \Delta, (\forb_{1})[\ns_{1}], \ldots, (\forb_{n})[\ns_{n}]$ is a nested sequent if $\ns_{i}$ is a nested sequent and $\forb_{i} \in \set{\fd,\bd}$.\footnote{We let $i \in [n]$ be a shorthand for $1 \leq i \leq n$.}

\smallskip

We use the symbols $\ns$, $\nsii$, $\nsiii$, $\ldots$ to denote nested sequents. For a nested sequent of the form $\Gamma \sar \Delta, (\forb_{1})[\ns_{1}], \ldots, (\forb_{n})[\ns_{n}]$, we refer to $\Gamma$ as the \emph{antecedent} and to $\Delta, (\forb_{1})[\ns_{1}], \ldots, (\forb_{n})[\ns_{n}]$ the \emph{consequent}. We call $(\fd)[\ns]$ a \emph{forward nesting}, $(\bd)[\ns]$ a \emph{backward nesting}, and define a \emph{nesting} to be either a forward or backward nesting. We let $\forb \in \set{\fd,\bd}$, and define $\conv{\forb} := \fd$ if $\forb = \bd$ and $\conv{\forb} := \bd$ if $\forb = \fd$. %A \emph{modal (nested) sequent} is a nested sequent using only formulae from the language $\langm$ and forward nestings. 
Moreover, we let $\xdia \in \set{\wdia,\bdia}$ and $\xbox \in \set{\wbox,\bbox}$ such that $\xdia = \wdia$ and $\xbox = \wbox$ if $\forb = \fd$, and $\xdia = \bdia$ and $\xbox = \bbox$ if $\forb = \bd$. %We let $\seqset$ denote the set of all Gentzen sequents.

\begin{remark} Although $\Gamma$, $\Delta$, etc. are treated as \emph{sets} in Gentzen sequents, the collection of nestings $(\forb_{1})[\ns_{1}], \ldots, (\forb_{n})[\ns_{n}]$ occurring in the consequent of a nested sequent is assumed to form a \emph{multiset}. This distinction will be useful later on for detecting loops during proof-search, which is required for termination.
\end{remark}

%Additionally, we call $\Gamma$ a \emph{left-collection} \iffi it is the antecedent of some nested sequent, and we call $\Delta, (\charx_{1})[\ns_{1}], \ldots, (\charx_{n})[\ns_{n}]$ a \emph{right-collection} \iffi it is the consequent of some nested sequent; this terminology will be helpful below. Given a nested sequent $\ns$, we let $\rcol$ denote its consequent; i.e., we use $\rcol$, $\rcolii$, $\nsiii^{+}$, $\ldots$ to denote right-collections. We let $\col$, $\colii$, $\coliii$, $\ldots$ denote either left- or right-collections.

A \emph{component} of $\ns = \Gamma \sar \Delta, (\forb_{1})[\nsii_{1}], \ldots, (\forb_{n})[\nsii_{n}]$ is defined to be a Gentzen sequent appearing therein, that is, a component of $\ns$ is an element of the multiset $c(\ns)$, where $c$ is defined as follows and $\uplus$ denotes the multiset union:
$$
c( \Gamma \sar \Delta, (\forb_{1})[\nsii_{1}], \ldots, (\forb_{n})[\nsii_{n}]) := \{ \Gamma \sar \Delta \} \uplus \!\! \biguplus_{i \in [n]} \!\! c(\nsii_{i}).
$$
We let $\names := \set{w,u,v,\ldots}$ be a denumerable set of pairwise distinct labels, called \emph{names}. Given a nested sequent $\ns$ of the form $\Gamma \sar \Delta, (\forb_{1})[\nsii_{1}], \ldots, (\forb_{n})[\nsii_{n}]$, we call $\Gamma \sar \Delta$ the \emph{root} of $\ns$ and assume that every component of $\ns$ is assigned a unique name from $\names$. We sometimes refer to a component as a \emph{$w$-component} if $w$ is its name. The incorporation of names in our nested sequents is crucial for extracting models from failed proof-search. We define $\names(\ns)$ to be the set of all names assigned to components in $\ns$. Nested sequents are special kinds of labeled trees, called \emph{seq-trees}.

\begin{definition}[Seq-Tree] %We define a \emph{labeled tree} to be a triple $\ltr = (\trv,\tre,\trl)$ such that $\trv$ is a non-empty set of \emph{vertices}, $\tre \ \subseteq \trv \times \trv$ is a finite set of \emph{edges}, $\trl$ is a function mapping vertices and edges to a set of labels, and $(\trv,\tre)$ is a directed tree, i.e., there exists a unique vertex $\rt \in \trv$ called the \emph{root} such that for any other vertex $u \in \trv$, a single path exists from $\rt$ to $u$ in $\ltr$. A \emph{seq-tree} is defined to be a labeled tree $\ltr = (\trv,\tre,\trl)$ such that $\trl : \trv \cup \tre \to \seqset \cup \set{\fd,\bd}$, ${\trl}{\restriction_{\trv}} : \trv \to \seqset$, and ${\trl}{\restriction_{\tre}} : \tre \to \set{\fd,\bd}$.
Let $\ns =  \Gamma \sar \Delta, (\forb_{1})[\nsii_{1}], \ldots, (\forb_{n})[\nsii_{n}]$ be a nested sequent with $\rt$ the name of the root and $u_{i}$ the name of the root of $\nsii_{i}$ for $i \in [n]$. We define its corresponding \emph{seq-tree} $\tr{\ns} = (\trv,\tre,\trl)$ as follows:
\begin{description}

\item[$\bullet$] $\trv := \{w\} \cup V_{1} \cup \cdots \cup \trv_{n}$

\item[$\bullet$] $\tre := \{(w,u_{i}) \ | \ i \in [n]\} \cup \tre_{1} \cup \cdots \cup \tre_{n}$

\item[$\bullet$] $\trl := \set{(w,\Gamma \sar \Delta), (w,u_{i},\forb_{i})} \cup \trl_{1} \cup \cdots \cup \trl_{n}$

\end{description}
%$\trl(w) = \Gamma \sar \Delta$, $\trl(w,u_{i}) = \forb_{i}$, and 
such that $\tr{\nsii_{i}} = (\trv_{i}, \tre_{i}, \trl_{i})$ for $i \in [n]$. We write $u \in \tr{\ns}$ and $(u,v) \in \tr{\ns}$ as a shorthand for $u \in \trv$ and $(u,v) \in \tre$, respectively.
\end{definition}

A nested sequent $\ns$ and its associated seq-tree $\tr{\ns}$ are simply two representations of the same underlying structure, so we will identify them interchangeably throughout the paper. We remark that we will apply standard terminology for trees when discussing seq-trees and nested sequents, e.g., leaf, successor, ancestor, etc. (see \cite[Chapter 11]{Ros19}). We note that the third component $\trl$ of a seq-tree $\tr{\ns} = (\trv,\tre,\trl)$ is effectively a function that maps names (i.e., vertices) to Gentzen sequents and ordered pairs of names (i.e., edges) to $\fd$ or $\bd$. Therefore, we will often write $\trl(w) = \Gamma \sar \Delta$ for $(w, \Gamma \sar \Delta) \in \trl$ and $\trl(w,u) = \forb$ for $(w, u, \forb) \in \trl$.

The \emph{depth} of a seq-tree or nested sequent is equal to the number of nodes along a maximal path from the root to a leaf. Let $w$ be the root of a nested sequent $\ns$. We recursively define the \emph{depth of components} in $\ns$ as follows: (1) $w \in \depk{\ns}{0}$, and (2) if $(u,v) \in \tr{\ns}$ and $u \in \depk{\ns}{k}$, then $v \in \depk{\ns}{k+1}$. We define the \emph{modal depth} of a nested sequent $\ns$ to be the sum of the modal depth of each component with its depth in $\ns$, which is formally defined as follows:
$$
\mdep{\ns} := \max\set{\mdep{\Gamma, \Delta} + k \mid u \in \tr{\ns}, \trl(u) = \Gamma \sar \Delta, u \in \depk{\ns}{k}}.
$$

We define the \emph{input formulae} and \emph{output formulae} of a seq-tree $\ltr = (\trv,\tre,\trl)$ to be all formulae occurring in antecedents of vertex labels (shown below left) and consequents of vertex labels (shown below right), respectively.
$$
\infs{\ltr} := \!\!\!\!\!\!\!\!\!\!\!\! \bigcup_{w \in \trv, \ \trl(w) =  \Gamma \sar \Delta} \!\!\!\!\!\!\!\!\!\!\!\! \Gamma
\qquad
\outfs{\ltr} := \!\!\!\!\!\!\!\!\!\!\!\! \bigcup_{w \in \trv, \ \trl(w) = \Gamma \sar \Delta} \!\!\!\!\!\!\!\!\!\!\!\! \Delta
$$
For $w \in \trv$ with $\trl(w) = \Gamma \sar \Delta$, $\infs{w,\ltr} := \Gamma$ and $\outfs{w,\ltr} := \Delta$. % to be all input and output formulae in that component, respectively.

%Observe that every nested sequent encodes a labeled tree whose nodes are (named) Gentzen sequents.

\begin{figure*}[t]
\noindent

\begin{center}
\begin{tabular}{c c c} % @{\hskip 1em} c}
\AxiomC{$\phantom{\ns}$}
\RightLabel{$\id$}
\UnaryInfC{$\ns \sbl \Gamma, p \sar p, \Delta \sbr$}
\DisplayProof

&

\AxiomC{$\phantom{\ns}$}
\RightLabel{$\botl$}
\UnaryInfC{$\ns \sbl \Gamma, \bot \sar \Delta \sbr$}
\DisplayProof

&

\AxiomC{$\ns \sbl \Gamma, A \sar \Delta \sbr$}
\AxiomC{$\ns \sbl \Gamma, B \sar \Delta \sbr$}
\RightLabel{$\disl$}
\BinaryInfC{$\ns \sbl \Gamma, A \lor B \sar \Delta \sbr$}
\DisplayProof
\end{tabular}
\end{center}

\begin{center}
\begin{tabular}{c c} % @{\hskip 1em} c}
\AxiomC{$\ns \sbl \Gamma \sar A, B, \Delta \sbr$}
\RightLabel{$\disr$}
\UnaryInfC{$\ns \sbl \Gamma \sar A \lor B, \Delta \sbr$}
\DisplayProof

&

\AxiomC{$\ns \sbl \Gamma, A, B \sar \Delta \sbr$}
\RightLabel{$\conl$}
\UnaryInfC{$\ns \sbl \Gamma, A \land B \sar \Delta \sbr$}
\DisplayProof
\end{tabular}
\end{center}
\begin{center}
\begin{tabular}{c c} % @{\hskip 1em} c}
\AxiomC{$\ns \sbl \Gamma \sar A, \Delta \sbr$}
\AxiomC{$\ns \sbl \Gamma \sar B, \Delta \sbr$}
\RightLabel{$\conr$}
\BinaryInfC{$\ns \sbl \Gamma \sar A \land B, \Delta \sbr$}
\DisplayProof

&

\AxiomC{$\ns \sbl \Gamma, A \iimp B \sar A, \Delta \sbr$}
\AxiomC{$\ns \sbl \Gamma, B \sar \Delta \sbr$}
\RightLabel{$\iimpl$}
\BinaryInfC{$\ns \sbl \Gamma, A \iimp B \sar \Delta \sbr$}
\DisplayProof
\end{tabular}
\end{center}
\begin{center}
\begin{tabular}{c c} % @{\hskip 1em} c}
\AxiomC{$\ns^{\downarrow} \sbl \Gamma, A \sar B \sbr$}
\RightLabel{$\iimpr$}
\UnaryInfC{$\ns \sbl \Gamma \sar A \iimp B, \Delta \sbr$}
\DisplayProof

&

\AxiomC{$\ns \sbl \Gamma \sar \Delta, (\forb) [ A \sar \ ] \sbr$}
\RightLabel{$\xdial$}
\UnaryInfC{$\ns \sbl \Gamma, \xdia A \sar \Delta \sbr$}
\DisplayProof
\end{tabular}
\end{center}
\begin{center}
\begin{tabular}{c c}
\AxiomC{$\ns^{\da} \sbl \Gamma \sar (\forb) [ \ \sar A ] \sbr$}
\RightLabel{$\xboxr$}
\UnaryInfC{$\ns \sbl \Gamma \sar \xbox A, \Delta \sbr$}
\DisplayProof

&

\AxiomC{$\ns \sbl \Gamma \sar \xdia A, \Delta \sbr_{w} \sbl \Sigma \sar A, \Pi \sbr_{u} $}
\RightLabel{$\xdiar^{\dag(\cons)}$}
\UnaryInfC{$\ns \sbl \Gamma \sar \xdia A, \Delta \sbr_{w} \sbl \Sigma \sar \Pi \sbr_{u} $}
\DisplayProof
\end{tabular}
\end{center}
\begin{center}
\begin{tabular}{c c}
\AxiomC{$\ns \sbl \Gamma, \xbox A \sar \Delta \sbr_{w} \sbl \Sigma, A \sar \Pi \sbr_{u}$}
\RightLabel{$\xboxl^{\dag(\cons)}$}
\UnaryInfC{$\ns \sbl \Gamma, \xbox A \sar \Delta \sbr_{w} \sbl \Sigma \sar \Pi \sbr_{u}$}
\DisplayProof

&

\AxiomC{$\ns \sbl \Gamma \sar \Delta, (\fd)[ \  \sar \ ] \sbr$}
\RightLabel{$\ddr$}
\UnaryInfC{$\ns \sbl \Gamma \sar \Delta \sbr$}
\DisplayProof
\end{tabular}
\end{center}

%%%%%%%%
\iffalse
\begin{center}
\begin{tabular}{c c}
\AxiomC{$\ns \sbl \Gamma \sar \xdia A, \Delta, (\forb)[\Sigma, A \sar \Pi] \sbr$}
\RightLabel{$\xdiari$}
\UnaryInfC{$\ns \sbl \Gamma \sar \xdia A, \Delta, (\forb)[\Sigma \sar \Pi] \sbr$}
\DisplayProof

&

\AxiomC{$\ns \sbl \Gamma \sar A, \Delta, (\conv{\forb})[\Sigma, \xdia A \sar \Pi] \sbr$}
\RightLabel{$\xdiarii$}
\UnaryInfC{$\ns \sbl \Gamma \sar \Delta, (\conv{\forb})[\Sigma, \xdia A \sar \Pi] \sbr$}
\DisplayProof
\end{tabular}
\end{center}

\begin{center}
\begin{tabular}{c c}
\AxiomC{$\ns \sbl \Gamma, \xbox A \sar \Delta, (\forb)[\Sigma, A \sar \Pi] \sbr$}
\RightLabel{$\xboxli$}
\UnaryInfC{$\ns \sbl \Gamma, \xbox A \sar \Delta, (\forb)[\Sigma \sar \Pi] \sbr$}
\DisplayProof

&

\AxiomC{$\ns \sbl \Gamma, A \sar \Delta, (\conv{\forb})[\Sigma, \xbox A \sar \Pi] \sbr$}
\RightLabel{$\xboxlii$}
\UnaryInfC{$\ns \sbl \Gamma \sar \Delta, (\conv{\forb})[\Sigma, \xbox A \sar \Pi] \sbr$}
\DisplayProof
\end{tabular}
\end{center}
\fi
%%%%%%%%

\smallskip

\begin{flushleft}
\textbf{Side Condition:} $\dag(\cons) := w \rable_{\forb}^{\cons} u$
\end{flushleft}

\caption{Nested Sequent Rules.\label{fig:nested-calculus}} %The side condition on $\wdiar$, $\bdiar$, $\wboxl$, and $\bboxl$ is $\dag(w,u,\X) := w \rableosf^{\X} u$
\end{figure*}

A \emph{context} is a nested sequent with \emph{holes}; a hole $\nshole$ takes the place of a Gentzen sequent in a nested sequent (cf.~\cite{Bru09,GorPosTiu11}). For example, $\ns\nshole = A,B \sar (\fd)[\nshole, (\bd)[C \sar D]]$ is a context with one hole and $\nsii\nshole\nshole = \nshole, (\fd)[\nshole], (\bd)[A \sar B]$ is a context with two holes. In a context $\ns\nshole\cdots\nshole$ with $n$ holes, we can substitute the nested sequents $\nsii_{1}$, $\ldots$, $\nsii_{n}$ for each hole, respectively, to obtain a nested sequent $\ns\{\nsii_{1}\}\ldots\{\nsii_{n}\}$. For example, if we substitute the nested sequent $\nsii = E \sar F, (\fd)[G \sar ]$ in the context $\ns\nshole$ above, we obtain the nested sequent:
$$
\ns\set{\nsii} = A,B \sar (\fd)[E \sar F, (\fd)[G \sar ], (\bd)[C \sar D]]
$$
We sometimes write $\ns\{\nsii_{1}\}_{w_{1}}\ldots\{\nsii_{n}\}_{w_{n}}$ to indicate that the root of $\nsii_{i}$ is $w_{i}$ for $i \in [n]$. As seen later on, contexts and holes are helpful for formulating our inference rules.

\begin{definition}[Extended Semantics]\label{def:sequent-semantics} Let $\tr{\ns} = (\trv,\tre,\trl)$ be the seq-tree of $\ns$ with $\rt$ the name of its root and let $M = (W, \leq, R, V)$ be a model. An \emph{$M$-interpretation} is a function $\mint : \trv \rightarrow W$. We say $\ltr$ is satisfied on $M$ with $\mint$, written $M, \mint \sat \ltr$, \iffi if conditions (1) and (2) hold, then (3) holds:
\begin{description}

\item[$(1)$] for each $u \tre v$ with $\trl(u,v) = \fd$, $\mint(u)R\mint(v)$;

\item[$(2)$] for each $u \tre v$ with $\trl(u,v) = \bd$, $\mint(v)R\mint(u)$;

\item[$(3)$] for some $v \in \trv$ with $\trl(v) = \Gamma \sar \Delta$, $M, \mint(v) \sat \bigwedge \Gamma \iimp \bigvee \Delta$.

\end{description}
We say $\tr{\ns}$ is \emph{$\fcons$-valid} \iffi for every model $M$ based on a frame in $\fcons$ and $M$-interpretation $\mint$, $M, \mint \sat \tr{\ns}$; otherwise, $\tr{\ns}$ is \emph{$\frames$-invalid}. For a nested sequent $\ns$, we define $M, \mint \sat \ns$ \iffi $M, \mint \sat \tr{\ns}$, and say $\ns$ is $\fcons$-valid \iffi $\tr{\ns}$ is $\fcons$-valid.
\end{definition}

%\begin{definition}[Formula Interpretation] We define the formula interpretation $\fm(\ns)$ of a nested sequent $\ns$ % = \Gamma \sar \Delta, [\nsii_1],\dots,[\nsii_n]$ 
%accordingly:
%$$
%\fm(\Gamma \sar \Delta, (\charx_{1})[\nsii_1], \ldots, (\charx_{n})[\nsii_n]) := \bigwedge \Gamma \iimp (\bigvee \Delta \lor \bigvee_{i=1}^{n} [\charx_{i}] \,\fm(\nsii_{i}) )
%$$
%where $\bigwedge \emptyset := \top$ and $\bigvee \emptyset := \bot$, as usual.
%\end{definition}

%\begin{proposition}\label{prop:equiv-model-form-interp}
%Let  $\ns = \Gamma \sar \Delta,(\charx_{1})[\nsii_1], \ldots, (\charx_{n})[\nsii_n]$ be a nested sequent with $w$ the name of the root, $M = (W, \leq, \{R_{\charx} \ | \ \charx \in \albet\}, V)$ be a model, and $\mint$ be an $M$-interpretation. Then, $M, \mint \vDash \ns$ \iffi $M, \mint(w) \vDash \fm(\ns)$.
%\end{proposition}

The rules used in our nested sequent systems are displayed in \fig~\ref{fig:nested-calculus}. The \emph{initial rules} are $\id$ and $\botl$, which are used to generate the axioms of the system, and the remaining rules are \emph{logical rules} forming left and right pairs. We remark that the modal rules $\xdial$, $\xboxl$, $\xdiar$, and $\xboxr$ take $\forb$ as a parameter, and so, each instantiates to either a forward or backward modal rule: $\xdial \in \set{\wdial,\bdial}$, $\xboxl \in \set{\wboxl,\bboxl}$, $\xdiar \in \set{\wdiar,\bdiar}$, $\xboxr \in \set{\wboxr,\bboxr}$. For completeness, explicit presentations of these rules have been included in the appendix. See Example~\ref{ex:propagation-graph-path} for a $\bboxl$ application.

As usual, we refer to the formula(e) explicitly introduced in the conclusion of a rule as \emph{principal}, and refer to those displayed in the premise(s), used to infer the principal formula, as \emph{auxiliary}. For example, $\xbox A$ is principal in $\xboxl$ while $A$ is auxiliary. We let $\ns^{\da}\nshole\cdots\nshole$ denote the context obtained from $\ns\nshole\cdots\nshole$ by deleting all consequent formulae from all components; this operation is used in the $\iimpr$ and $\xboxr$ rules to ensure soundness. $\xdial$ and $\xboxr$ bottom-up `unpack' a modal formula, which introduces a new component that we assume is named with a fresh name from $\names$.

The $\xdiar$ and $\xboxl$ rules are special logical rules called \emph{propagation rules} (cf.~\cite{CiaLyoRamTiu21,GorPosTiu11}). Propagation rules enable formulae to be (bottom-up) propagated to certain components in a nested sequent. These rules take $\cons$ as a parameter determining \emph{how} formulae are to be propagated via the side condition $w \rable_{\forb}^{\cons} u$, defined below.

\begin{definition}\label{def:closure} Let $\ns$ be a nested sequent and $\tr{\ns} = (\trv,\tre,\trl)$ with $w,u \in \trv$. We define the \emph{$\cons$-closure} of $\tr{\ns}$, written $\tr{\ns}^{\cons} = (\trv_{\cons},\tre_{\cons},\trl_{\cons})$, to be the minimal graph such that $\trv_{\cons} := \trv$, $\tre \subseteq \tre_{\cons}$, $\trl \subseteq \trl_{\cons}$, and the following are satisfied, for all $w,u \in \trv$:
\begin{description}[leftmargin=!, labelwidth=0em, labelsep=.25em]

\item[$(1)$] if $w \tre u$, then $(w, u), (u, w) \in \ \tre_{\cons}$;

\item[$(2)$] if $w \tre u$ and $\trl(w,u) = \forb$, then $\trl_{\cons}(w,u) = \forb$ and $\trl_{\cons}(u,w) = \conv{\forb}$;

%\item[$(3)$] if $(w,u) \in \tre$ and $\trl(w,u) = \forb$, then $\trl_{\cons}(u,w) = \conv{\forb}$;

\item[$(3)$] if $\T \in \cons$, then there exist two loops $(w,w), (w,w) \in \tre_{\cons}$ with $\trl_{\cons}(w,w) = \fd$ and $\trl_{\cons}(w,w) = \bd$;

\item[$(4)$] if $\B \in \cons$, $w \tre u$, and $\trl(w,u) = \forb$, then $(w, u), (u, w) \in \ \tre_{\cons}$ with $\trl_{\cons}(w,u) = \conv{\forb}$ and $\trl_{\cons}(u,w) = \forb$.

\end{description}
We define $\tr{\ns} \models w \rable_{\forb}^{\cons} u$ \iffi $(w,u) \in \tre_{\cons}$ and $\trl_{\cons}(w,u) = \forb$. For simplicity, we write $w \rable_{\forb}^{\cons} u$ when $\tr{\ns}$ is clear from the context. % or $\trl_{\cons}(u,w) = \conv{\forb}$.
\end{definition}

If a propagation rule is applied top-down (bottom-up), we assume that $\dag(\cons) =$ `$\tr{\ns} \models w \rable_{\forb}^{\cons} u$' with $\ns$ the premise (conclusion, resp.). We provide an example to make the application of such rules clear.

\begin{example}\label{ex:propagation-graph-path} Let $\ns = \Gamma, \bbox p, p \sar \Delta, (\fd)[\Sigma, p \sar \Pi]$ with $w$ the name of the root, $u$ the name of $\Sigma, p \sar \Pi$, and suppose $\B, \T \in \cons$. A graphical depiction of the $\cons$-closure $\tr{\ns}^{\cons}$ is given below:
%\resizebox{0.97\linewidth}{!}{%
\[
\begin{tikzcd}[column sep=5em]
  \overset{w}{\boxed{\Gamma, \bbox p, p \sar \Delta}}
    \arrow[r, bend left=18, dotted, "\fd"]
    \arrow[r, bend right=18, dotted, swap, "\bd"]
%   \arrow[loop left, out=180, in=225, looseness=3, dotted, swap, "\fd"]
    \arrow[loop left, out=160, in=200, looseness=3, dotted, swap, "{\bd,\fd}"]
  &
  \overset{u}{\boxed{\Sigma, p \sar \Pi}}
    \arrow[l, bend left=18, dotted, "\bd"]
    \arrow[l, bend right=18, dotted, swap, "\fd"]
%   \arrow[loop right, in=0, out=-45, looseness=2.5, dotted, "\fd"]
    \arrow[loop right, out=20, in=340, looseness=4, dotted, "{\bd,\fd}"]
\end{tikzcd}
\]
We indicate the double edges arising from $\B$ as double arrows and the double loops arising from $\T$ are indicated as loops with both labels $\fd$ and $\bd$. One can see that $\tr{\ns} \models w \rable_{\bd}^{\cons} w$ and $\tr{\ns} \models w \rable_{\bd}^{\cons} u$, so we are permitted to (top-down) apply $\bboxl$ to $\ns$ to infer both $\Gamma, \bbox p \sar \Delta, (\fd)[\Sigma, p \sar \Pi]$ and $\Gamma, \bbox p, p \sar \Delta, (\fd)[\Sigma \sar \Pi]$, respectively.
\end{example}

\begin{definition}[Nested Sequent System] Let $\cons \subseteq \set{\T,\B,\D}$. The nested sequent system $\nikt$ is the set of all rules, except $\ddr$, in Figure~\ref{fig:nested-calculus} and where the side condition $\dag(\emptyset)$ is imposed on $\xdiar$ and $\xboxl$. For $\cons \neq \emptyset$, we define $\niktc$ to be the system obtained by extending $\nikt$ such that (1) $\ddr \in \niktc$ \iffi $\D \in \cons$ and (2) the side condition $\dag(\cons)$ is imposed on $\xdiar$ and $\xboxl$.
\end{definition}

A \emph{derivation} in a nested sequent calculus of $\ns$ is a (potentially infinite) tree whose nodes are labeled with nested sequents such that: (1) The root is labeled with $\ns$, and (2) Every parent node is the conclusion of a rule of the calculus with its children the corresponding premises. We define a \emph{branch} $\branch = \ns_{0}, \ns_{1}, \ldots, \ns_{n}, \ldots$ to be a maximal path of nested sequents in a derivation such that $\ns_{0}$ is the conclusion of the derivation and each nested sequent $\ns_{i+1}$ (if it exists) is a child of $\ns_{i}$.

A \emph{proof} is a finite derivation where all leaves are instances of initial rules. We use $\prf$ and annotated versions thereof to denote both derivations and proofs with the context differentiating the usage. If a proof of a nested sequent $\ns$ exists in a nested sequent calculus, then we write $\calc\proves\ns$ to indicate this. The \emph{height} of a derivation is defined as the number of nested sequents along a maximal branch in the derivation, which may be infinite if the derivation is infinite.

\begin{example} We give a proof of $(\bdia p \iimp \wdia p) \land (p \iimp \wbox \wdia p)$, which is provable in $\calc$ when $\B \in \cons$. Suppose $\B \in \cons$, and observe that $\tr{\ns} \models w \rable_{\fd}^{\cons} u$ and $\tr{\nsii} \models w \rable_{\fd}^{\cons} u$ with $\ns = \ \sar \wdia p, (\bd)[p \sar p]$, $\nsii = \ p \sar p, (\fd)[ \sar \wdia p]$, $w$ the name of the root of each nested sequent, and $u$ the name of the only non-root component. Therefore, $\wdiar$ may be applied in the left and right branches.
\begin{center}
%\resizebox{\columnwidth}{!}{
%\begin{tabular}{@{\hskip -.1cm} c}
\AxiomC{}%$\phantom{\ns}$}
\RightLabel{$\id$}

\UnaryInfC{$\sar \wdia p, (\bd)[p \sar p]$}
\RightLabel{$\wdiar$}
\UnaryInfC{$\sar \wdia p, (\bd)[p \sar  ]$}
\RightLabel{$\bdial$}
\UnaryInfC{$\bdia p \sar \wdia p$}
\RightLabel{$\iimpr$}
\UnaryInfC{$\sar \bdia p \iimp \wdia p$}

\AxiomC{}%$\phantom{\ns}$}
\RightLabel{$\id$}

\UnaryInfC{$p \sar p, (\fd)[ \sar \wdia p]$}
\RightLabel{$\wdiar$}
\UnaryInfC{$p \sar (\fd)[ \sar \wdia p]$}
\RightLabel{$\wboxr$}
\UnaryInfC{$p \sar \wbox \wdia p$}
\RightLabel{$\iimpr$}
\UnaryInfC{$\sar p \iimp \wbox \wdia p$}

\RightLabel{$\conr$}
\BinaryInfC{$\sar (\bdia p \iimp \wdia p) \land (p \iimp \wbox \wdia p)$}
\DisplayProof
%}
%\end{tabular}
\end{center}
\end{example}

%\subsection{Soundness and Completeness}

The nested systems presented in this section are variants of systems that appear in the literature. In particular, \cite{Lyo22,Lyo25} introduce single-conclusioned nested calculi for intuitionistic grammar logics~\cite{Lyo21b}, a class of logics that includes $\iktc$ for $\cons \subseteq \set{\T,\B,\D}$. Each system 
$\niktc$ considered here is obtained as a multi-conclusioned version of one of these earlier calculi. Given this close relationship, it is natural that each nested system $\niktc$ is sound and complete.

%The nested systems discussed in this section are variants of nested systems that already exist in the literature. In~\cite{Lyo22,Lyo25}, single-conclusioned nested systems were provided for intuitionistic grammar logics~\cite{Lyo21b}, which is a class of logics including $\iktc$ for $\cons \subseteq \set{\T,\B,\D}$. Each system $\niktc$ is a multi-conclusioned version of one of the aforementioned systems. Due to the connection between the nested systems presented in this section and the existing ones in the literature, it is unsurprising that each nested system $\calc$ is sound and complete. 

%Each nested system $\calc$ can be proven sound in the usual way; a proof can be found in the appendix.

\begin{theorem}[Soundness]\label{thm:soundness-nested} If $\calc \proves \ns$, then $\ns$ is $\fcons$-valid.
\end{theorem}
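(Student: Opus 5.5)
We prove the statement by induction on the height of the proof of $\ns$ in $\calc$. For the base case we show that the initial rules $\id$ and $\botl$ yield $\fcons$-valid sequents. For the inductive step we show that every other rule preserves $\fcons$-validity top-down. The argument is easiest in contrapositive form: if the conclusion is falsified by some model $M=(W,\leq,R,V)$ based on a frame in $\fcons$ and an $M$-interpretation $\mint$, then some premise is falsified by a (possibly modified) model and interpretation. Concretely, falsifying $\ns$ with $\mint$ means two things: $\mint$ respects every edge of $\tr{\ns}$ (conditions (1) and (2) of Definition~\ref{def:sequent-semantics}), and for every component $v$ with label $\Gamma \sar \Delta$ we have $M,\mint(v)\not\sat \bigwedge\Gamma \iimp \bigvee\Delta$. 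Hence there is a world $x_v \geq \mint(v)$ satisfying all of $\Gamma$ and none of $\Delta$.

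\textbf{Simplification.} We first replace $\mint$ by the interpretation $v\mapsto x_v$, so that every component is refuted \emph{at} its image. This replacement need not respect the edges. To repair this, we use (F1) and (F2) together with Lemma~\ref{lem:persistence}, walking down the tree from the root. Suppose $u$ has been moved up to $x_u\geq\mint(u)$, and $(u,v)$ is a $\fd$-edge. By (F1) there is $y\geq \mint(v)$ with $x_u R y$. We then compose with a further move up to refute $v$'s label, again using (F2)/(F1) to keep the edge. I expect this repair to be the main technical obstacle: moving a child up via (F2) forces its parent up, and that in turn disturbs other children. The cleaner route, which I would take instead, avoids global repositioning altogether. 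In each rule case we only need one witness world in one component. We keep $\mint$ fixed everywhere else, and use persistence (Lemma~\ref{lem:persistence}) to see that antecedent formulae remain true and consequent formulae remain false at worlds above, wherever that is needed.

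\textbf{Rule cases.} For $\id$ and $\botl$: the component $\Gamma,p\sar p,\Delta$ (resp.\ $\Gamma,\bot\sar\Delta$) is satisfied at every world, so condition (3) always holds. The propositional rules $\disl,\disr,\conl,\conr,\iimpl$ are local to one component. Take the witness $x\geq\mint(w)$ refuting the conclusion's $w$-component; at $x$ one premise's component is refuted by the clauses of Definition~\ref{def:semantic-clauses}. For $\iimpl$, if $x\sat A$ then $x\sat B$, giving the right premise. The issue is that $x$ differs from $\mint(w)$. Here it suffices to note that validity of $\ns$ only requires refuting $\bigwedge\Gamma\iimp\bigvee\Delta$ at $\mint(w)$, i.e.\ at \emph{some} $x\geq\mint(w)$, and the same $x$ witnesses the premise. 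So $\mint$ itself still falsifies the premise, and no repositioning is needed. The same reasoning handles $\xdiar$ and $\xboxl$. Given $w \rable_{\forb}^{\cons} u$, the closure conditions of Definition~\ref{def:closure} together with $\mint$ respecting the edges (and $R$ being reflexive/symmetric when $\T$/$\B\in\cons$) give $\mint(w)R\mint(u)$ in the appropriate direction. Since $\xdia A$ is false at $\mint(w)$ (it is false at the witness above $\mint(w)$, hence by persistence at $\mint(w)$), $A$ is false at $\mint(u)$ and therefore at every world above it. So adding $A$ to the $u$-consequent keeps $u$ refuted. For $\xboxl$, $\xbox A$ true at $\mint(w)$ gives $A$ true at every $R$-related world above, hence at the witness for $u$ after applying (F1)/(F2).

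The genuinely non-local rules are $\iimpr$, $\xboxr$, $\xdial$ and $\ddr$. For $\xdial$ and $\ddr$ we extend $\mint$ to the fresh component: we map it to the $R$-successor/predecessor witnessing $\xdia A$ at the refuting world, or to an $R$-successor given by seriality $\D$. For $\iimpr$ and $\xboxr$ the premise uses $\ns^{\da}$, so only antecedents matter outside the active component. We take a witness $x\geq \mint(w)$ with $x\sat A$ and $x\not\sat B$ (resp.\ $x\leq$-$R$ witness $x\leq y$, $y R z$, $z\not\sat A$). We then move the whole tree up by using (F1) and (F2) recursively along edges from $w$: each neighbour of a moved node is moved to a $\leq$-larger world while preserving the $R$-edge. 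Since the premise has empty consequents elsewhere, persistence guarantees that all antecedents stay true, and nothing else has to remain false. This recursive propagation through the tree (well-founded because the tree is finite and each node is reached once from $w$) is the step I would write out carefully.
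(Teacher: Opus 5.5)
Your case analysis breaks at the point where you fix the semantics. You read a refuted component $v$ labelled $\Gamma\sar\Delta$ as ``some $x_v\geq\mint(v)$ satisfies $\Gamma$ and refutes $\Delta$''. In the $\xdiar$ case you then pass from ``$A$ is false at $\mint(u)$'' to ``false at every world above it''. That runs Lemma~\ref{lem:persistence} backwards: persistence moves truth up along $\leq$, not falsity. So nothing stops $x_u\sat A$, and then no world above $\mint(u)$ need refute the premise's $u$-component.

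The same defect appears in three other places:
\begin{itemize}
\item In $\xboxl$, you only know $\xbox A$ at $x_w$, not at $\mint(w)$, and the world $z$ with $zRx_u$ supplied by (F2) need not lie above $x_w$.
\item In $\xdial$, the $\xdia$-witness is an $R$-neighbour of $x_w$, not of $\mint(w)$, so keeping $\mint(w)$ fixed violates the edge conditions.
\item Your $\iimpr$/$\xboxr$ step (``persistence guarantees that all antecedents stay true'') silently assumes antecedents already hold at $\mint(v)$ itself.
\end{itemize}

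Under your reading, which is the literal wording of condition (3) of Definition~\ref{def:sequent-semantics}, this cannot be patched, because the statement is then false. The sequent $\sar\wdia p,(\fd)[p\sar\ ]$ is provable in $\nikt$ by $\wdiar$ along the edge followed by $\id$. Now take $W=\{a,a',b,b'\}$ with $a\leq a'$, $b\leq b'$, $R=\{(a,b),(a',b')\}$, $V(b')=\{p\}$ and $V$ empty elsewhere; this satisfies (F1), (F2) and (M). Set $\mint(w)=a$ and $\mint(u)=b$. Then $a\not\sat\wdia p$, and $b\not\sat p\iimp\bot$ since $b'\sat p$, so no component satisfies (3).

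The missing idea is to read falsification pointwise: $\mint$ respects the edges and, for every component $v$ labelled $\Gamma\sar\Delta$, $M,\mint(v)\sat\bigwedge\Gamma$ and $M,\mint(v)\not\sat B$ for each $B\in\Delta$. This is what the paper's proof actually uses. It invokes $M,\mint\sat\bigwedge\Gamma_i$ in the $\iimpr$ case and concludes directly from $M,\mint(u)\not\sat A$ in the $\xdiar$ case. For a single formula $\sar A$ the two readings agree by persistence, so results about formulae are unaffected.

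With the pointwise reading your plan goes through and coincides with the paper's:
\begin{itemize}
\item the propositional rules are checks at $\mint(w)$;
\item $\xdiar$ and $\xboxl$ are checks at $\mint(u)$, via the closure edge plus reflexivity or symmetry of $R$ when $\T$ or $\B\in\cons$;
\item $\xdial$ and $\ddr$ send the fresh node to an $R$-neighbour of $\mint(w)$ itself;
\item your (F1)/(F2) propagation outward from $w$ for $\iimpr$ and $\xboxr$ is precisely the paper's argument, with persistence keeping antecedents true and $\ns^{\da}$ deleting every consequent that would otherwise have to stay false.
\end{itemize}
Your ``Simplification'' detour then becomes unnecessary.
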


We note that cut-free completeness is a corollary of the decidability proof given in the following section. This is because if $\calc \not\proves A$, then our decision procedure yields a model $M$ based on a frame in $\fcons$ falsifying $A$, which is equivalent to establishing completeness. We therefore have:

\begin{theorem}[Completeness]\label{thm:completeness-nested}  If $\fcons \sat A$, then $\calc \proves A$.
\end{theorem}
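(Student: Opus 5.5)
We prove the contrapositive: if $\calc \not\proves A$, we build a model $M$ on a frame in $\fcons$ with a world falsifying $A$, so $\fcons \not\sat A$. The paper already points to this: completeness will follow from the decision procedure of the next section. The plan is therefore to run the proof-search algorithm $\prove$ on the nested sequent $\sar A$, whose root component is named $w$, and use its two guaranteed outcomes.

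\textbf{Step 1 (termination).} We need $\prove$ to halt on $\sar A$ and return either $\true$ or $\false$. The bound comes from loop-checking with homomorphisms. Every formula in every component lies in $\sufo(A)$, and nesting depth is bounded by $\mdep{A}$, since $\xdial$ and $\xboxr$ only unpack subformulae. Hence only finitely many nested sequents exist up to homomorphic equivalence. Each branch of the computation tree must therefore hit a repeat after finitely many steps, and König's lemma gives a finite computation tree.

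\textbf{Step 2 (success gives a proof).} If $\prove$ returns $\true$, we prune the computation tree. At each non-invertible choice point (the $\iimpr$ or $\xboxr$ alternatives applied to consequent formulae), we keep one successful alternative, and at invertible or branching rules we keep all premises. This yields a genuine $\calc$-proof of $\sar A$. So if $\calc \not\proves A$, the procedure must return $\false$.

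\textbf{Step 3 (failure gives a counter-model; the main obstacle).} From a failed computation tree we extract $M = (W,\leq,R,V)$ as follows.
\begin{itemize}
\item \emph{Worlds.} $W$ consists of the names occurring in the saturated, non-axiomatic leaf-level nested sequents reached along failing choices.
\item \emph{Modal relation.} $R$ is read off the $\cons$-closure edges $\rable_{\fd}^{\cons}$ inside each such nested sequent.
\item \emph{Intuitionistic relation.} $\leq$ is the reflexive-transitive closure of the links given by the homomorphisms. These links come from (a) passing from a sequent to the sequents spawned by $\iimpr$ or $\xboxr$, which map the antecedents forward, and (b) loop-check back-edges, where a repeated sequent is identified with an earlier one via a homomorphism.
\item \emph{Valuation.} $V(u)$ is the set of atoms in the antecedent of $u$. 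Monotonicity holds because homomorphisms preserve antecedents.
\end{itemize}

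We must then check three things. First, (F1) and (F2) hold: homomorphisms preserve $\fd$/$\bd$ edges, so $R$-successors lift along $\leq$. Second, the frame conditions $\T$, $\B$, $\D$ hold: they follow from the closure clauses and from saturation under $\ddr$. Third, a truth lemma holds: every antecedent formula at $u$ is satisfied at $u$, and every consequent formula at $u$ is falsified at $u$. The truth lemma is proved by induction on formulae using saturation.
\begin{itemize}
\item The $\iimp$ and $\wbox$/$\bbox$ cases on the right use the fact that the failing $\iimpr$ or $\xboxr$ branch provides a $\leq$-successor that falsifies the formula.
\item The left $\iimp$ and box cases need persistence along $\leq$. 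Here we use that homomorphisms preserve antecedents, and that saturation under $\xboxl$ and $\xdiar$ relative to $\rable^{\cons}$ propagates formulae correctly.
\end{itemize}

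The hardest part will be the loop-check back-edges. We must show that identifying a repeated sequent with its earlier homomorphic copy still preserves the truth lemma and keeps (F1) and (F2) valid. I would handle this by carrying the homomorphism's inclusions of both antecedents and consequents through the induction.

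Since $A$ sits in the consequent of the root $w$, the truth lemma gives $M,w \not\sat A$. Since $W$ is finite, $\fcons \not\sat A$, which completes the contrapositive.
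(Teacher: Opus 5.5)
Your overall strategy is the paper's. There, completeness is a corollary of three results: termination (Theorem~\ref{thm:termination}), proof extraction from a successful run (Theorem~\ref{thm:success-gives-proof}), and counter-model extraction from a failed run (Theorem~\ref{thm:false-implies-not-prove}). Your Steps 1 and 2 match the first two.

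Step 3, however, has a genuine gap concerning seriality. You say $\D$ follows from the closure clauses and saturation under $\ddr$. But the search only enforces $\ddrc$ at components of depth at most $\mdep{\nsinp}$; it has to, or it would not terminate. (This also makes your Step 1 depth bound $\mdep{A}+1$ when $\D \in \cons$.) When $\D \in \cons$, every saturated sequent therefore has $\fd$-children at depth $\mdep{\nsinp}+1$. For $\cons = \set{\D}$, these children have no $R$-successor if $R$ is read off the $\cons$-closure alone. (With $\B$ the closure edge back to the parent rescues them; with $\T$ the loops do.) Concretely, in $\ikt\mathsf{D}$ the input $\sar p$ fails at the stable sequent $\sar p, (\fd)[\ \sar \ ]$. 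Your frame then has $wRu$ with $u$ a dead end, so it is not in $\fcons$. The missing ingredient is the paper's clause (3)(b): add a reflexive $R$-loop at every component of depth $\mdep{\nsinp}+1$. This is harmless for the truth lemma, because such components have empty labels (Lemma~\ref{lem:dpeth-bounded}). It also survives (F1)/(F2), because the weak morphisms underlying $\leq$ preserve depth and send such leaves to such leaves (Lemma~\ref{lem:morpic-properties}-(5)).

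Second, worlds cannot simply be names, nor can they be taken from leaves only. Bottom-up rules never rename or delete components, so the same name $w$ is the root of every sequent in the computation tree, with different labels in different sequents. Take $\sar q \lor (q \iimp r)$ in $\ikt$. The search reaches the saturated sequent $\sar q \lor (q \iimp r), q, q \iimp r$, whose only $\dbr$ premise $q \sar r$ is stable.
\begin{itemize}
\item If $W$ is read from leaves, then $W = \set{w}$ with $q \in V(w)$, so $M, w \sat q \lor (q \iimp r)$. Your final step fails, since $A$ is no longer in the leaf's consequent.
\item If $W$ is read from all saturated sequents but worlds are names, the single world $w$ must both satisfy and falsify $q$.
\end{itemize}
You need one world per pair consisting of a name and a saturated sequent in the failing part of the tree; this is the paper's $\enum$-renaming. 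In this example the natural morphism gives $(w,1) \leq (w,2)$, and $(w,1)$ falsifies $A$ as required. With these two repairs your outline coincides with the paper's blueprint construction.
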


%\paragraph{Separation Property.} Our nested sequent systems also satisfy the \emph{separation property}, originally introduced in the context of nested calculi for classical tense logics~\cite{GorPosTiu11}. A tense calculus~$\niktsfour$ or $\niktsfour$ has the separation property if and only if, every derivation of $A \in \langm$ is a derivation of $A$ in~$\nikfour$ or $\nisfour$ (resp.). It is straightforward to verify that $\niktsfour$ or $\niktsfour$ satisfy the separation property: if the conclusion of a derivation is a modal formula, then none of the rules $\set{\bdial,\bdiar,\bboxl,\bboxr}$ introducing tense modalities can ever be applied in the derivation. This property is significant in our setting because is means that a proof-search algorithm for $\niktsfour$ or $\niktsfour$ coincides with a proof-search algorithm for $\nikfour$ or $\nisfour$ (resp.) when the input is a modal formula, since the rules for tense modalities are rendered irrelevant. 

%%%Tranding Rules
\section{Proof-Search and Decidability}\label{sec:decid}

%%%%%ALG%%%%%%%
%%%%%%%%%%%%%%%

\begin{algorithm}[t]%\label{alg:Proof-search}
\KwIn{A Nested Sequent: $\ns$}
\KwOut{A Boolean: $\true$, $\false$}

\If{$A \in \infs{\tr{\ns}} \cap \outfs{\tr{\ns}}$ or $\bot \in \infs{\tr{\ns}}$}
     {\Return $\true$;}

\If{$\tr{\ns}$ is stable or a repeat}
     {\Return $\false$;}

\If{$A \lor B \in \infs{w,\tr{\ns}}$, but $A, B \not\in \infs{w,\tr{\ns}}$}
{
    Let $\ns_{1} := \ns \triangleleft_{w} (A \sar \ )$ and $\ns_{2} := \ns \triangleleft_{w} (B \sar \ )$;\\
    \Return $\prove(\ns_{1}) \ \&\& \ \prove(\ns_{2})$;
}

\If{$A \lor B \in \outfs{w,\tr{\ns}}$, but $\set{A,B} \not\subseteq \outfs{w,\tr{\ns}}$}
{
    Let $\ns' := \ns \triangleleft_{w} ( \ \sar A,B )$; \Return $\prove(\ns')$;
}

\If{$A \land B \in \infs{w,\tr{\ns}}$, but $\set{A,B} \not\subseteq \infs{w,\tr{\ns}}$}
{
    Let $\ns' := \ns \triangleleft_{w} (A,B \sar \ )$; \Return $\prove(\ns')$;
}

\If{$A \land B \in \outfs{w,\tr{\ns}}$, but $A, B \not\in \outfs{w,\tr{\ns}}$}
{
    Let $\ns_{1} := \ns \triangleleft_{w} ( \ \sar A)$ and $\ns_{2} := \ns \triangleleft_{w} ( \ \sar B)$;\\
    \Return $\prove(\ns_{1}) \ \&\& \ \prove(\ns_{2})$;
}

\If{$A \iimp B \in \infs{w,\tr{\ns}}$, but $A \not\in \outfs{w,\tr{\ns}}$ and $B \not\in \infs{w,\tr{\ns}}$}
{
    Let $\ns_{1} := \ns \triangleleft_{w} ( \ \sar A)$ and $\ns_{2} := \ns \triangleleft_{w} (B \sar \ )$;\\
    \Return $\prove(\ns_{1}) \ \&\& \ \prove(\ns_{2})$;
}

\caption{$\prove$ (Part 1)}
\end{algorithm}
%%%%%%%%%%%%%%
%%%%%%%%%%%%%%

%%%%%%ALG%%%%%%%%
%%%%%%%%%%%%%%
\setcounter{algocf}{0}
\begin{algorithm}[t]%\label{alg:Proof-search-ii}
\setcounter{AlgoLine}{24}

\If{$\xdia A \in \infs{w,\tr{\ns}}$, and no $u \in \tr{\ns}$ exists with %$\prgr{\tr{\ns}} \models 
$w \rableosx u$ and $A \in \infs{v,\tr{\ns}}$}
{
    Let $\ns' := \ns \triangleleft_{w} ( \ \sar (\forb)[A \sar \ ])$ with $u$ the fresh name of the new component; \Return $\prove(\ns')$;
}

\If{$\xdia A \in \outfs{w,\tr{\ns}}$, %for some $u \in \tr{\ns}$, 
$w \rableosx u$, and $A \not\in \outfs{u,\tr{\ns}}$}
{
    Let $\ns' := \ns \triangleleft_{u} ( \ \sar A)$; \Return $\prove(\ns')$;
}

\If{$\xbox A \in \infs{w,\tr{\ns}}$, $w \rableosx u$, and $A \not\in \infs{u,\tr{\ns}}$}
{
    Let $\ns' := \ns \triangleleft_{u} (A \sar \ )$; \Return $\prove(\ns')$;
}

\If{$\D \in \cons$, $w \in \depk{\ns}{k}$, $k \leq \mdep{\nsinp}$, and no $u \in \tr{\ns}$ exists such that $w \tre u$ and $\trl(w,u) = \fd$}
{
    Let $\ns' := \ns \triangleleft_{w} (\fd)[ \  \sar \ ]$ such that $u$ is the fresh name of the new component;\\
    \Return $\prove(\ns')$;
}

\If{$\tr{\ns}$ is saturated}
{
Let $A_{i} \iimp B_{i} \in \outfs{w_{i},\tr{\ns}}$ for $1 \leq i \leq n$ be all output $\iimp$-formulae in $\ns$;\\
Set $\ns_{i} := \ns^{\da} \triangleleft_{w} (A_{i} \sar B_{i})$;\\
Let $[\forb_{i}] C_{i} \in \outfs{u_{i},\tr{\ns}}$ for $n + 1 \leq i \leq n + k$ be all output $\wbox$- or $\bbox$-formulae in $\ns$;\\ % such that $u_{i}$ is not blocked, and 
Set $\ns_{i} := \ns^{\da} \triangleleft_{w} ( \ \sar  (\fd)[\ \sar A])$ with $u_{i}'$ a fresh name of the new component;\\
\Return $\prove(\ns_{1}) \ \| \cdots \| \ \prove(\ns_{n+k})$;
}

\caption{$\prove$ (Part 2)}
\end{algorithm}
%%%%%%%%%%%%%%
%%%%%%%%%%%%%%

We now present a proof-search algorithm deciding $\iktc$ for all $\cons \subseteq \set{\T,\B,\D}$. The first section details the internal workings of the algorithm and introduces computation trees, the structures generated during proof search that encode multiple derivations of the input. The second section establishes bounds on the depth of nested sequents and shows how branches of computation trees are finitely bounded via a novel loop-checking mechanism based on homomorphic mappings between nested sequents. %, thereby ensuring termination. 
The final section describes how counter-models are extracted from computation trees when proof-search fails.

\subsection{Computation Trees and Proof Extraction}\label{subsec:comp-trees}

We now introduce two properties, \emph{$\cons$-saturated} and \emph{$\cons$-stable}, that identify redundant rule applications during proof-search and help ensure termination. In essence, a nested sequent $\ns$ is $\cons$-saturated if the only non-redundantly applicable rules are the non-invertible rules $\iimpr$ and $\xboxr$. When such sequents are encountered, the algorithm applies both rules bottom-up in all possible combinations, corresponding to disjunctive branching; that is, it suffices for one bottom-up application to succeed in order to construct a proof of $\ns$. A nested sequent is $\cons$-stable if no rules are non-redundantly applicable to it, indicating that proof-search on the corresponding branch may safely terminate.

%Let us now define our saturation conditions. We remark that when $\D \in \cons$, i.e., $\ddr \in \calc$

\begin{remark}
For the remainder of Section~\ref{sec:decid}, we fix the symbol $\nsinp$ to denote the input to $\prove$, unless stated otherwise.
\end{remark}

\begin{figure*}[t]
\centering

\AxiomC{$\ns^{\da}\set{\Gamma_{i} \sar (\forb_{i,j})[\emptyset \sar A_{i,j}]}_{w_{i}} \mid i \in [n], j \in [k_{i}]$ \quad
$\ns^{\da}\set{\Gamma_{i}, C_{i,j} \sar D_{i,j}}_{w_{i}} \mid i \in [n], j \in [n_{i}]$}
%$\set{\ns^{\da}\set{\Gamma_{i} \sar (\fd)[\emptyset \sar A_{i,j}]}_{w_{i}} \mid i \in [m], j \in [k_{i}]}$}
%\AxiomC{$\set{\ns^{\da}\set{\Gamma_{i}, B_{i,j} \sar C_{i,j}}_{w_{i}} \mid i \in [m], j \in [n_{i}]}$}
\RightLabel{$\dbr$}
%\dashedLine
\UnaryInfC{$\ns\set{\Gamma_{1} \sar \Pi_{1}}_{w_{1}}\ldots\set{\Gamma_{n} \sar \Pi_{n}}_{w_{n}}$}
\DisplayProof
\begin{flushleft}
%\textbf{such that:}
such that
\end{flushleft}
\begin{itemize}

\item $\Pi_{i} := [\forb_{i,1}] A_{i,1}, \ldots, [\forb_{i,k_{i}}] A_{i,k_{i}}, B_{i,1} \imp C_{i,1}, \ldots, B_{i,n_{i}} \imp C_{i,n_{i}}, 
\Delta_{i}$;

\item $\Delta_{i} \cap \set{\wbox A, \bbox B, C \iimp D \mid A, B, C, D \in \lang} = \emptyset$;

\item $\names(\ns) = \set{w_{1}, \ldots, w_{n}}$.

\end{itemize}

\caption{Disjunctive branching rule $\dbr$.\label{fig:dbr}}
\end{figure*}

\begin{definition}[Saturated, Stable]\label{def:saturated-seq} Let $\ns$ be a nested sequent and $\tr{\ns} = (\trv, \tre, \trl)$. We define $\tr{\ns}$ to be \emph{$\cons$-saturated} \iffi it satisfies the following, for all $w \in \trv$:
\begin{description}[leftmargin=!, labelwidth=0em, labelsep=.25em] %, labelindent=0pt]

\item[$\idc$] $\infs{w,\tr{\ns}} \cap \outfs{w,\tr{\ns}} = \emptyset$;

\item[$\botlc$] $\bot \not\in \infs{w,\tr{\ns}}$;

\item[$\dislc$] if $A \lor B \in \infs{w,\tr{\ns}}$, then $\set{A,B} \cap \infs{w,\tr{\ns}} \neq \emptyset$;

\item[$\disrc$] if $A \lor B \in \outfs{w,\tr{\ns}}$, then $A, B \in \outfs{w,\tr{\ns}}$;

\item[$\conlc$] if $A \land B \in \infs{w,\tr{\ns}}$, then $A, B \in \infs{w,\tr{\ns}}$;

\item[$\conrc$] if $A \land B \in \outfs{w,\tr{\ns}}$, then $\set{A,B} \cap \outfs{w,\tr{\ns}} \neq \emptyset$; %$A \in \outfs{w,\tr{\ns}}$ or $B \in \outfs{w,\tr{\ns}}$;

\item[$\iimplc$] if $A \iimp B \in \infs{w,\tr{\ns}}$, then $A \in \outfs{w,\tr{\ns}}$ or $B \in \infs{w,\tr{\ns}}$;

\item[$\xdialc$] if $\xdia A \in \infs{w,\tr{\ns}}$, then $\exists u \in \tr{\ns}$ with $w \tre u$, $\trl(w,u) = \forb$, and $A \in \infs{u,\tr{\ns}}$;

\item[$\xdiarc$] if $\xdia A \in \outfs{w,\tr{\ns}}$ with $w \rableosx u$, then $A \in \outfs{u,\tr{\ns}}$;

\item[$\xboxlc$] if $\xbox A \in \infs{w,\tr{\ns}}$ with $w \rableosx u$, then $A \in \infs{u,\tr{\ns}}$;

\item[$\ddrc$] if $w \in \depk{\ns}{k}$ and $k \leq \mdep{\nsinp}$, then $\exists u \in \tr{\ns}$ with $w \tre u$ and $\trl(w,u) = \fd$.

\end{description}
$\tr{\ns}$ is \emph{stable} \iffi (1) $\outfs{\tr{\ns}} \cap \set{A \imp B, \wbox A, \bbox A \mid A,B \in \langt} = \emptyset$ and (2) it is saturated. Last, $\ns$ is saturated or stable \iffi $\tr{\ns}$ is saturated or stable (resp.).
\end{definition}

We now define the $\triangleleft_{w}$ operation that lets us plug data into nested sequents at certain vertices. This will be helpful in formulating our algorithm. Let the following be nested sequents: % with $X$ and $Y$ denoting their respective nestings:

\smallskip

\noindent
$\ns := \Gamma \sar \Delta, \underbrace{(\forb_{1})[\nsii_{1}], \ldots, (\forb_{n})[\nsii_{n}]}_{X}$

\smallskip

\noindent
$\nsiii := \Sigma \sar \Pi, \underbrace{(\forb_{n{+}1})[\nsiv_{n{+}1}], \ldots, (\forb_{n{+}k})[\nsiv_{n{+}k}]}_{Y}$

\noindent
We define their \emph{composition} as $\ns \odot \nsiii := \Gamma, \Sigma \sar \Delta, \Pi, X, Y$. Given a nested sequent $\ns = \ns\set{\nsii}_{w}$, we let $\ns \triangleleft_{w} \nsiii = \ns\set{\nsii \odot \nsiii}_{w}$.

Our proof-search algorithm $\prove$ is presented as Algorithm 1. Due to its length, the algorithm is split across two columns. We say that \emph{proof-search succeeds} when it outputs $\true$, and that \emph{proof-search fails} when it outputs $\false$.

Lines 1–3 implement the $\id$ and $\botl$ rules. Lines 4–6 test whether the current sequent is stable or a repeat. In either case, all further rule applications would be redundant, and the algorithm can safely halt on that branch. (NB. We will discuss repeats in detail in the next section.)

Lines 7-10, 11-13, 14-16, 17-26, 17-20, and 21-24 encode the rules $\disl$, $\disr$, $\conl$, $\conr$, $\iimpl$, respectively. Lines 25-27, 28-30, 31-33, and 34-37 encode $\xdial$, $\xdiar$, $\xboxl$, and $\ddr$, respectively. In the binary rules $\disl$, $\conr$, and $\iimpl$, the algorithm invokes two recursive calls via the line: $\prove(\ns_{1}) \ \&\& \ \prove(\ns_{2}).$ Here the symbol $\&\&$ stands for conjunction. This corresponds to \emph{conjunctive branching} in the procedure: a proof is found only if both recursive calls succeed. %return $\true$.

One novel aspect of our algorithm concerns lines 38-44, which encode simultaneous applications of the $\iimpr$ and $\xboxr$ rules. The algorithm invokes $n+k$ many recursive calls via the line: 
$$
\prove(\ns_{1}) \ \| \cdots \| \ \prove(\ns_{n+k}).
$$
Here the symbol $\|$ stands for disjunction. This corresponds to \emph{disjunctive branching} in the procedure: a proof is found if at least one recursive calls succeeds. These lines encode a bottom-up application of the $\dbr$ rule shown in Figure~\ref{fig:dbr}. Observe that $\prove$ applies rules \emph{bottom-up}; thus, when we speak of \emph{rule applications} in this section (Section~\ref{sec:decid}), 
we mean bottom-up rule applications, unless otherwise specified.

\begin{example} To demonstrate the functionality of $\dbr$, we give an example application with principal formulae $B \imp C$, $\wbox F$, and $\bbox G$:
\begin{center}
%\resizebox{\textwidth}{!}{
\AxiomC{$A, B \sar C, (\fd)[D, E \sar \ ]
\quad
A \sar (\fd)[D, E \sar (\fd)[\ \sar F]]
\quad
\ns
$}
\RightLabel{$\dbr$}
\UnaryInfC{$A \sar B \imp C, (\fd)[D, E \sar \wbox F, \bbox G]$}
\DisplayProof
\end{center}
where $\ns = A \sar (\fd)[D, E \sar (\bd)[\ \sar G]]$.
\end{example}

Since lines 38–44 correspond to applications of the $\dbr$ rule, we may regard the structure generated by a run of $\prove$ as a derivation that employs all rules in $(\calc \setminus \set{\iimpr,\xboxr}) \cup \set{\dbr}$. This perspective motivates the definition of a \emph{computation tree}, a structure that plays a crucial role in extracting both proofs and counter-models from terminating proof-search.

\begin{definition}[Computation Tree]\label{def:computation-tree} A \emph{computation tree} (relative to $\calc$) is a structure $\ct := (\ctv,\cte,\ctl)$ such that $\ctv$ is a non-empty set of seq-trees, $\cte \ \subseteq \ctv \times \ctv$, and $\ctl : \ctv \to \set{\tval,\fval}$, which satisfies the following condition: each parent node $\ltr \in \ctv$ is the conclusion of a rule in $(\calc \setminus \set{\iimpr,\xboxr}) \cup \set{\dbr}$ with its children (if they exist) the corresponding premises. %We let $\exrel^{+}$ and $\exrel^{*}$ be the transitive and reflexive-transitive closure of $\exrel$, respectively.
\end{definition}

%We observe that a computation tree corresponds to a derivation in the `calculus' $(\calc \setminus \set{\iimpr,\wboxr,\bboxr}) \cup \set{\dbr}$. 

To establish the correctness of our proof-search algorithm (Theorems \ref{thm:success-gives-proof} and~\ref{thm:false-implies-not-prove}), it is useful to have access to the \emph{entire} computation tree generated by $\prove(\nsinp)$, i.e., the full tree structure consisting of all branches explored during a terminating run of the algorithm. We therefore define $\ct(\nsinp) := (\ctv,\cte,\ctl)$ to be the computation tree built by $\prove(\nsinp)$ during its execution such that for all $\tr{\ns} \in \ctv$, (1) $\ctl(\tr{\ns}) = \tval$ \iffi $\prove(\ns) = \true$ and (2) $\ctl(\tr{\ns}) = \fval$ \iffi $\prove(\ns) = \false$. For completeness, the formal definition of $\ct(\nsinp)$ is given in the appendix.

%As shown in the figure, one can process $\ct(\nsinp)$ in a root-first manner to prune extraneous branches in $\ct(\nsinp)$. This is accomplished by selecting and retaining a single premise of a $\dbr$ application. In the extracted proof, shown right, observe that pruning $\dbr$ applications yields $\iimpr$, $\wboxr$, or $\bboxr$ applications. We now prove that this pruning operation can be done in general to extract a proof of the input when proof-search succeeds.

\begin{figure*}
    \centering

\begin{center}
\begin{tabular}{c}
\AxiomC{$q \sar r : \fval$}

\AxiomC{}
\RightLabel{$\botl$}
\UnaryInfC{$\sar (\fd)[ \sar (\fd) [ \bot \sar \bot]] : \tval$}
\RightLabel{$\dbr$}
\UnaryInfC{$\sar (\fd)[ \sar (\fd) [ \sar \bot \iimp \bot]] : \tval$}

\AxiomC{$\sar (\fd)[ \sar (\bd)[ \sar \bot] : \fval$}
\RightLabel{$\dbr$}
\BinaryInfC{$\sar (\fd)[ \sar \wbox (\bot \iimp \bot), \bbox \bot] : \tval$}
\RightLabel{$\disr$}
\UnaryInfC{$\sar (\fd)[ \sar \wbox (\bot \iimp \bot) \lor \bbox \bot] : \tval$}

\RightLabel{$\dbr$}
\BinaryInfC{$\sar q \iimp r, \wbox (\wbox (\bot \iimp \bot) \lor \bbox \bot) : \tval$}
\RightLabel{$\disr$}
\UnaryInfC{$\sar (q \iimp r) \lor \wbox (\wbox (\bot \iimp \bot) \lor \bbox \bot) : \tval$}
\DisplayProof
\end{tabular}
\end{center}
\begin{center}
\begin{tabular}{c}
\AxiomC{}
\RightLabel{$\botl$}
\UnaryInfC{$\sar (\fd)[ \sar (\fd) [ \bot \sar \bot]]$}
\RightLabel{$\iimpr$}
\UnaryInfC{$\sar (\fd)[ \sar (\fd) [ \sar \bot \iimp \bot]]$}
\RightLabel{$\wboxr$}
\UnaryInfC{$\sar (\fd)[ \sar \wbox (\bot \iimp \bot), \bbox \bot]$}
\RightLabel{$\disr$}
\UnaryInfC{$\sar (\fd)[ \sar \wbox (\bot \iimp \bot) \lor \bbox \bot]$}

\RightLabel{$\wboxr$}
\UnaryInfC{$\sar q \iimp r, \wbox (\wbox (\bot \iimp \bot) \lor \bbox \bot)$}
\RightLabel{$\disr$}
\UnaryInfC{$\sar (q \iimp r) \lor \wbox (\wbox (\bot \iimp \bot) \lor \bbox \bot)$}
\DisplayProof
\end{tabular}
\end{center}

\caption{Computation tree and proof extraction example.\label{fig:ct-example-and-proof-extract}}
\end{figure*}

\begin{example} Let $\nsinp := \ \sar (q \iimp r) \lor \wbox (\wbox (\bot \iimp \bot) \lor \bbox \bot)$. The computation tree $\ct(\nsinp) = (\ctv,\cte,\ctl)$ corresponding to $\prove(\nsinp)$ is shown top in \fig~\ref{fig:ct-example-and-proof-extract}. We have included the labels $\tval$ and $\fval$ to the right of sequents, arising from the labeling function $\ctl$. 

As shown in the figure, one can process $\ct(\nsinp)$ in a root-first manner to prune extraneous branches in $\ct(\nsinp)$. This is accomplished by selecting and retaining a single premise of a $\dbr$ application. In the extracted proof, shown bottom, observe that pruning $\dbr$ applications yields $\iimpr$ or $\xboxr$ instances. We now prove that this pruning operation can be done in general to extract a proof of the input when proof-search succeeds.
\end{example}

\begin{theorem}\label{thm:success-gives-proof}
If $\prove(\nsinp) = \true$, then $\calc \proves \nsinp$.
\end{theorem}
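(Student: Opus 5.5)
The plan is to extract a proof by pruning the computation tree $\ct(\nsinp)$ root-first, and to argue by induction that every node labelled $\tval$ has a proof in $\calc$. Since $\prove$ terminates (this is established in the next section; for the present statement it suffices that $\prove(\nsinp)$ returned $\true$, so the recursion on the relevant calls is finite), $\ct(\nsinp)$ is a finite tree. I would therefore proceed by induction on the height of the subtree of $\ct(\nsinp)$ rooted at a node $\tr{\ns}$ with $\ctl(\tr{\ns}) = \tval$. The claim is that $\calc \proves \ns$.

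\textbf{Leaf cases.} If $\prove(\ns)$ returned $\true$ at lines 1--3, then either some formula lies in both $\infs{\tr{\ns}}$ and $\outfs{\tr{\ns}}$, or $\bot \in \infs{\tr{\ns}}$. I read line 1 as requiring the shared formula to occur in the same component $w$. If it is an atom, $\ns$ is an instance of $\id$; if $\bot$ is an input formula, $\ns$ is an instance of $\botl$. If the shared formula $A$ is complex, I need an auxiliary lemma stating that the generalized initial sequent $\ns\set{\Gamma, A \sar A, \Delta}$ is derivable for every $A$. I would prove this lemma by induction on $\ell(A)$, in the standard way.
\begin{itemize}
\item For $A = \xbox B$: apply $\xboxr$, which deletes consequents by the $\ns^{\da}$ operation but keeps antecedents. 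Then propagate $B$ into the new $\forb$-child by $\xboxl$, since $w \rableosx u$ holds for a direct $\forb$-edge. Finish with the induction hypothesis.
\item For $A = \xdia B$: use $\xdial$, then $\xdiar$.
\item For $A = B \iimp C$: use $\iimpr$, then $\iimpl$.
\end{itemize}
Returning $\false$ never yields $\tval$, so the stable and repeat cases (lines 4--6) cannot occur at a $\tval$ node.

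\textbf{Conjunctive and unary steps.} If $\ctl(\tr{\ns}) = \tval$ and the step was a unary rule or one with $\&\&$ ($\disl$, $\disr$, $\conl$, $\conr$, $\iimpl$, $\xdial$, $\xdiar$, $\xboxl$, $\ddr$), then all children are labelled $\tval$. The induction hypothesis gives proofs of the children. I must then check that the algorithmic step $\ns \triangleleft_w (\cdots)$ is a genuine bottom-up instance of the corresponding rule in Figure~\ref{fig:nested-calculus}. One subtlety is that the algorithm keeps the principal formula (e.g., $A \lor B$ stays in the antecedent), whereas $\disl$ and $\conl$ delete it. I would handle this with height-preserving admissibility of weakening, proved by a routine induction; alternatively, since sequents are sets, I can simply note that the premise with the principal formula retained is still a sound instance once weakening is admissible. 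For $\xdiar$ and $\xboxl$, the side condition $w \rableosx u$ checked by the algorithm is exactly $\dag(\cons)$. For $\ddr$, the rule belongs to $\calc$ only when $\D \in \cons$, which is precisely when line 34 fires.

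\textbf{Disjunctive step ($\dbr$).} This is the main step. If $\ctl(\tr{\ns}) = \tval$ and lines 38--44 fired, then some premise $\ns_i$ has label $\tval$, and by the induction hypothesis $\calc \proves \ns_i$. I retain only this premise, as in the example of \fig~\ref{fig:ct-example-and-proof-extract}, and show that $\ns_i$ follows from $\ns$ by a single application of $\iimpr$ or $\xboxr$. By definition, $\ns_i$ is $\ns^{\da}$ with either $A_i \sar B_i$ plugged into $w_i$, or a fresh nesting $(\forb)[\ \sar C_i]$ attached at $u_i$. This is exactly the premise of $\iimpr$ (resp.\ $\xboxr$) with principal formula $A_i \iimp B_i$ (resp.\ $[\forb_i] C_i$), whose conclusion is $\ns$. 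Here the rule's $\Delta$ absorbs the remaining output formulae, and $\ns^{\da}$ deletes all consequents, consistent with the rule's $\ns^{\da}$ context.

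The main obstacle I foresee is this bookkeeping around $\ns^{\da}$ and the retained principal formulae. Everything else is routine, so getting these details right is crucial.
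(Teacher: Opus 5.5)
Your proposal is correct and follows the paper's proof. Both prune $\ct(\nsinp)$ root-first (you phrase it as an induction over the tree), keep every premise of a $\tval$ node produced by a rule of $\calc \setminus \set{\iimpr,\xboxr}$, and keep a single $\tval$ premise of each $\dbr$ node, which thereby becomes an $\iimpr$ or $\xboxr$ instance.

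Your per-component reading of line~1 and your generalized-identity lemma are refinements the argument actually needs. The paper just asserts that all leaves are $\id$ or $\botl$ instances, but line~1 also fires on complex shared formulae such as $p \land q \sar p \land q$. Read globally, it would even accept $\wdia p \sar p, (\fd)[p \sar \ ]$, making the statement false for the input $\wdia p \sar p$.

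The one misstep is your appeal to weakening for the retained principal formulae, which is the wrong tool. Weakening adds formulae, whereas you would need to remove, e.g., $A \lor B$ from the algorithm's premise. No lemma is needed at all: antecedents and consequents are sets, so taking the rule's context $\Gamma$ (or $\Delta$) to already contain the principal formula makes each algorithmic step a literal instance of $\disl$, $\disr$, $\conl$, $\conr$, $\iimpl$ or $\xdial$.
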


\begin{proof} We suppose that $\prove(\nsinp) = \true$ and let $\ct(\nsinp) := (\ctv,\cte,\ctl)$ be the computation tree built by $\prove(\nsinp)$. Let us construct a proof $\prf := (\ctv',\cte')$ with $\ctv' \subseteq \ctv$ and $\cte' \subseteq \cte$ by processing $\ct(\nsinp)$ in a root-first manner and pruning extraneous branches. We define $\prf = (\ctv',\cte')$ as follows:
\begin{description}[leftmargin=!, labelwidth=0em, labelsep=.25em]

\item[$(1)$] Let $\nsinp \in \ctv'$ and observe that $\ctl(\nsinp) = \tval$ by assumption;

\item[$(2)$] If $\ctl(\ns) = \tval$ and $\ns \in \ctv'$ is the conclusion of a unary rule $ \ru \in \calc \setminus \set{\iimpr,\xboxr}$ in $\ct(\nsinp)$ with $\ns' \in \ctv$ the premise, then let $\ns' \in \ctv'$ and $(\ns,\ns') \in \cte'$;

\item[$(3)$] If $\ctl(\ns) = \tval$ and $\ns \in \ctv'$ is the conclusion of a binary rule $\ru \in \calc \setminus \set{\iimpr,\xboxr}$ in $\ct(\nsinp)$ with $\ns', \ns'' \in \ctv$ the premises, then let $\ns', \ns'' \in \ctv'$ and $(\ns,\ns'), (\ns,\ns'') \in \cte'$;

\item[$(4)$] If $\ctl(\ns) = \tval$ and $\ns \in \ctv'$ is the conclusion of $\dbr$ in $\ct(\nsinp)$ with $\ns_{1}, \ldots, \ns_{n} \in \ctv$ the premises, then we choose a premise $\ns_{i}$ such that $\ctl(\ns_{i}) = \tval$ (which is guaranteed to exist because $\ctl(\ns) = \tval$), and let $\ns_{i} \in \ctv'$ and $(\ns,\ns_{i}) \in \cte'$. 

\end{description}
In the above definition, one starts at the root of $\ct(\nsinp)$ and retains rule applications in $\calc \setminus \set{\iimpr,\xboxr}$ via clauses (2) and (3), while pruning branches in clause (4) and only retaining a single premise so that the $\dbr$ application becomes either an $\iimpr$ or $\xboxr$ application. Thus, all rule applications in $\prf$ will be in $\calc$. Moreover, every nested sequent in $\prf = (\ctv',\cte')$ is guaranteed to be labeled with $\tval$ via $\ctl$ by definition; hence, all leaves are instances of $\id$ or $\botl$.
\end{proof}

\subsection{Detecting Repeats and Termination}\label{subsec:repeats-termination}

In this section, we introduce a new loop-checking mechanism that ensures termination of proof-search. We first prove that the depth of all nested sequents generated during proof-search is uniformly bounded. Despite this bound, nested sequents may still reoccur %(up to homomorphic equivalence) 
along a branch of a computation tree generated by the algorithm. Detecting and handling such repetitions--henceforth called \emph{repeats}--is therefore essential for guaranteeing termination.

A repeat arises along a branch of a computation tree when a nested sequent is homomorphically equivalent to one of its ancestors on that branch. This differs from standard loop-checking techniques in nested sequent calculi (cf.~\cite{Bru09,TiuIanGor12}), which operate \emph{within} a nested sequent and identify repeating nodes, rather than distinct nested sequents along branches.

\begin{lemma}\label{lem:dpeth-bounded}
Let $\ct(\nsinp) = (\ctv, \cte, \ctl)$ be the computation tree of $\prove(\nsinp)$. Then, the following two statements hold:
\begin{description}[leftmargin=!, labelwidth=0em, labelsep=.25em]

\item[$(1)$] If $\ns \in \ctv$ and $B \in \depk{\ns}{k}$, then $\mdep{B} \leq \mdep{\nsinp} - k$;

%\item[(2)] If $\D \in \axs$, $\ns \in \exv$ is saturated, and $w$ is a leaf in $\ns$, then $w \in \depk{\ns}{k+1}$ with $\mdep{\nsinp} = k$.

\item[$(2)$] If $\D \in \cons$ and $\ns \in \ctv$ is saturated, then $w$ is a leaf in $\ns$ \iffi $w \in \depk{\ns}{k+1}$ with $\mdep{\nsinp} = k$.

\end{description}
\end{lemma}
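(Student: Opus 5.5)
Statement (1) is an invariant of the computation tree, so I would prove it by induction on the distance of $\ns$ from the root of $\ct(\nsinp)$, reading ``$B \in \depk{\ns}{k}$'' as: $B$ occurs in a component at depth $k$. I would phrase the invariant as $\mdep{\Gamma,\Delta} + k \leq \mdep{\nsinp}$ for every component $\Gamma \sar \Delta$ at depth $k$. Equivalently, $\mdep{\ns} \leq \mdep{\nsinp}$, where the modal depth of a nested sequent is understood as defined, so that childless components count as well.

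The base case is immediate, since the root is $\nsinp$ itself. For the inductive step I would go through each bottom-up rule application performed by $\prove$.

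\begin{itemize}
\item The propositional rules $\disl$, $\disr$, $\conl$, $\conr$ and $\iimpl$ only add proper subformulae (of no greater modal depth) to the same component, so the invariant is preserved.
\item $\xdial$ takes $\xdia A$ at depth $k$ and creates a child at depth $k+1$ containing $A$. Here $\mdep{A} + k + 1 = \mdep{\xdia A} + k \leq \mdep{\nsinp}$.
\item For $\dbr$, the premise for $\xbox A$ behaves exactly like $\xdial$. The premise for $A \iimp B$ stays in the same component, and the operation $\ns^{\da}$ only deletes formulae.
\item $\ddr$ adds an empty component, which contributes nothing to the maximum, so the invariant is trivially kept.
\end{itemize}

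The delicate cases are the propagation rules $\xdiar$ and $\xboxl$. They move $A$ from $w$ to some $u$ with $w \rableosx u$, and $u$ may sit one level \emph{higher} than $w$: it can be the parent of $w$ via a converse edge, or via the $\B$-edges of the closure. For $\T$ we have $u = w$, which is harmless. When $u$ is a child of $w$, the depth of $u$ is $k+1$, and $\mdep{A}+k+1 = \mdep{\xbox A}+k$, so that case is fine.

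The upward case does not follow from the naive invariant. I would therefore strengthen it by induction to a statement about edges: if $w \tre u$, then every formula at $u$ has modal depth at most $\mdep{\nsinp} - \mathrm{depth}(u)$. Since the parent has depth one less, a formula moved from $u$ up to its parent gains at least $2$ of slack relative to the parent, so upward propagation preserves the bound. So the claimed inequality in (1) survives every rule, and I expect this upward case to be the main obstacle to check carefully.

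For (2), assume $\D \in \cons$ and that $\ns$ is saturated.
\begin{itemize}
\item Right-to-left: a component at depth $\mdep{\nsinp}+1$ cannot have a child. Components are created with empty contents only by $\ddr$, and $\ddr$ is invoked only when $k \leq \mdep{\nsinp}$. By (1), every other new component lies at depth at most $\mdep{\nsinp}$, since it carries a formula of modal depth at least $0$. Hence nothing can sit at depth $\mdep{\nsinp}+2$.
\item Left-to-right: let $w$ be a leaf. If its depth $k$ satisfied $k \leq \mdep{\nsinp}$, saturation condition $\ddrc$ would give a $\fd$-child, a contradiction. Moreover $k \leq \mdep{\nsinp}+1$ by the previous bullet. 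So $k = \mdep{\nsinp}+1$.
\end{itemize}
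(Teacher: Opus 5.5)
Your proof is correct and follows the paper's route. For (1) you do a root-first induction over $\ct(\nsinp)$ with a rule-by-rule check; for (2) you combine $\ddrc$ with the fact that components created by $\xdial$ or $\xboxr$ carry a formula and so, by (1), lie at depth at most $\mdep{\nsinp}$, while $\ddr$ only creates components one level below that. One correction: the upward propagation case already follows from the naive invariant, as the paper notes ($\mdep{A} < \mdep{\xdia A}$ and $A$ lands at depth $k-1$), and your ``strengthened'' edge-wise invariant is the same invariant restated, so no strengthening is needed.
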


\begin{proof} Let us prove (1) first. We proceed by a root-first induction on $\ct(\nsinp)$. % that if $\nsii \in \exv$ and $B \in \depk{\nsii}{k}$ then $\mdep{B} \leq \mdep{\ns} - k$.

\textit{Base case.} Let $B \in \depk{\nsinp}{k}$. Then, $\mdep{B} + k \leq \mdep{\nsinp}$ by definition, from which the result immediately follows.

\textit{Inductive step.} We make a case distinction based on the last rule applied in $\ct(\ns)$ and only consider the $\xdiar$ and $\dbr$ cases as the remaining cases are similar.

$\xdiar$. Observe that $\xdiar$ has one of the five forms shown below, depending on the content of $\cons$. The rules $\xdiari$ and $\xdiarii$ are valid instances of $\xdiar$ for all $\cons \subseteq \set{\T,\B,\D}$, $\xdiar_{3}$ and $\xdiar_{4}$ are valid instances when $\B \in \cons$, and $\xdiar_{5}$ is a valid instance when $\T \in \cons$.
\begin{center}
\begin{tabular}{c}
\AxiomC{$\ns \sbl \Gamma \sar \xdia A, \Delta, (\forb)[\Sigma, A \sar \Pi] \sbr$}
\RightLabel{$\xdiari$}
\UnaryInfC{$\ns \sbl \Gamma \sar \xdia A, \Delta, (\forb)[\Sigma \sar \Pi] \sbr$}
\DisplayProof
\end{tabular}
\end{center}
\begin{center}
\AxiomC{$\ns \sbl \Gamma \sar A, \Delta, (\conv{\forb})[\Sigma, \xdia A \sar \Pi] \sbr$}
\RightLabel{$\xdiarii$}
\UnaryInfC{$\ns \sbl \Gamma \sar \Delta, (\conv{\forb})[\Sigma, \xdia A \sar \Pi] \sbr$}
\DisplayProof
\end{center}
\begin{center}
\begin{tabular}{c}
\AxiomC{$\ns \sbl \Gamma \sar \xdia A, \Delta, (\conv{\forb})[\Sigma, A \sar \Pi] \sbr$}
\RightLabel{$\xdiar_{3}$}
\UnaryInfC{$\ns \sbl \Gamma \sar \xdia A, \Delta, (\conv{\forb})[\Sigma \sar \Pi] \sbr$}
\DisplayProof
\end{tabular}
\end{center}
\begin{center}
\begin{tabular}{@{\hskip .2em} c @{\hskip .2em} c}
\AxiomC{$\ns \sbl \Gamma \sar A, \Delta, (\forb)[\Sigma \sar \xdia A, \Pi] \sbr$}
\RightLabel{$\xdiar_{4}$}
\UnaryInfC{$\ns \sbl \Gamma \sar \Delta, (\forb)[\Sigma \sar \xdia A, \Pi] \sbr$}
\DisplayProof

&

\AxiomC{$\ns \sbl \Gamma \sar \xdia A, A, \Delta \sbr$}
\RightLabel{$\xdiar_{5}$}
\UnaryInfC{$\ns \sbl \Gamma \sar \xdia A, \Delta \sbr$}
\DisplayProof
\end{tabular}
\end{center}
Suppose that IH holds for the conclusion of one of these rules. We argue that the claim holds for the premise. In each case, if $B$ is not the auxiliary formula $A$ in the premise of the rule, then the claim immediately holds. Suppose then that $B$ is the auxiliary formula $A$. In the $\xdiar_{1}$ and $\xdiar_{3}$ cases, we have that $A \in \depk{\nsii}{k+1}$ in the premise $\nsii$. The claim follows as shown below, where the equality holds due to the definition of the modal depth and the inequality follows from IH.
$$
\mdep{A} = \mdep{\xdia A} - 1 \leq \mdep{\nsii} - (k + 1)
$$
Next, because $\mdep{A} \leq \mdep{\xdia A}$, $A \in \depk{\nsii}{k-1}$ in $\xdiar_{2}$ and $\xdiar_{4}$, and $A \in \depk{\nsii}{k}$ in $\xdiar_{5}$, the claim immediately follows by IH in all of these cases.

%$\mdep{A} \leq \mdep{\xdia A} \leq \mdep{\nsii} - k \leq \mdep{\nsii} - k + 1 \leq \mdep{\nsii} - (k - 1)$
%$\mdep{A} \leq \mdep{\xdia A} \leq \mdep{\nsii} - k$

$\dbr$. To not over-complicate matters, we only consider a specific instance of the $\dbr$ rule that has two principal formulae $\xbox A$ and $C \iimp D$. The general case is argued similarly. Suppose $\dbr$ is of the form shown below and that IH holds for the conclusion:
\begin{center}
%\resizebox{\textwidth}{!}{
\AxiomC{$\nsii^{\da}\set{\Gamma \sar (\charx)[ \ \sar A]}$}
\AxiomC{$\nsii^{\da}\set{\Gamma, C \sar D}$}
\RightLabel{$\boximpr$}
\BinaryInfC{$\nsii\set{\Gamma \sar \xbox A, C \iimp D, \Delta}$}
\DisplayProof
%}
\end{center}
If $B$ is not auxiliary in either premise of the rule, then the claim immediately follows. If $B$ is the auxiliary formula $C$ or $D$ in the right premise, then the claim follows from IH since $\mdep{C} = \mdep{D} = \mdep{C \imp D} \leq \mdep{\nsii} - k$. If $B$ is the auxiliary formula $A$ in the left premise, then observe that $A \in \depk{\nsii}{k+1}$. The claim follows by the definition of modal depth and IH:
$$
\mdep{A} = \mdep{\xbox A} - 1 \leq \mdep{\nsii} - (k + 1).
$$

(2) Since $\D \in \axs$ and $\ns \in \ctv$ is saturated, we know that $\ns$ satisfies condition $\ddrc$, and so, if $w$ is a leaf in $\ns$, then $w \in \depk{\ns}{n}$ with $n \geq k + 1$ such that $\mdep{\nsinp} = k$. Let $n = m + k + 1$ with $m \in \mathbb{N}$. Observe that the only rules that increase the depth of a nested sequent (when applied bottom-up) are $\xdial$, $\xboxr$, and $\ddr$. 

By part (1) of the lemma, we know that if $B \in \depk{\ns}{n}$, then
$$
\mdep{B} \leq \mdep{\nsinp} - n = k - (m + k + 1) = - (m+1).
$$
However, it is not possible for the modal depth of a formula to a negative number. Therefore, we know that for any $u \in \tr{\ns}$ occurring at a depth of $k+1$ that $\Gamma = \Delta = \emptyset$ for $\trl(u) = (\Gamma \sar \Delta)$. In other words, all components at a depth greater than $k$ must be empty Gentzen sequents. It follows that it is not possible to increase the depth of a nested sequent beyond $k$ during proof-search by applying either a $\xdial$ or $\xboxr$ rule since such rules require principal formulae. By our observation then, the $w$-component must have been introduced by a bottom-up application of $\ddr$. Yet, $\prove$ only bottom-up applies $\ddr$ (lines 39-42) to nodes that are at a depth of at most $k$, meaning, $w$ cannot occur at a depth greater than $k+1$ due to a $\ddr$ application. It must be the case that $n = k+1$.
\end{proof}

To support both repeat detection and the extraction of counter-models from failed proof-search, we now introduce two kinds of homomorphisms that will play a central role in our work.

\begin{definition}[Morphism] Let $\ltr = (\trv,\tre,\trl)$ and $\ltr' = (\trv',\tre',\trl')$ be seq-trees with $w$ and $w'$ their respective roots. We define a function $\wkhom : \trv \to \trv'$ %from $\ltr$ to $\ltr'$ 
to be a \emph{weak morphism} \iffi 
\begin{description}

\item[$(1)$] $\wkhom(w) = w'$;

\item[$(2)$] if $u \tre v$, then $\wkhom(u) \tre' \wkhom(v)$; 

\item[$(3)$] $\trl(u,v) = \trl(u',v')$.

\end{description}
We define a function $\sthom : \trv \to \trv'$ to be a \emph{strong morphism} \iffi (1) it is a weak morphism and (2) $\trl(u) = \trl'(\sthom(u))$ for all $u \in \trv$. 

We define a \emph{morphism} $\mor$ to be either a weak or strong morphism. % and we often write $\mor : \ns \to \nsii$ (rather than $\mor : T \to T'$) to indicate that $\mor$ is a function from $\ns$ to $\nsii$.
Given a sequence $\sigma := \mor_{1}, \ldots, \mor_{n}$ of morphisms such that the composition $\mor_{i+1} \circ \mor_{i}$ is defined and is a morphism for $i \in [n{-}1]$, we define $\mor_{\sigma} := \mor_{n} \circ \cdots \circ \mor_{1}$.

$\ltr$ and $\ltr'$ are \emph{morphically equivalent}, written $\ltr \hequiv \ltr'$, \iffi there exist strong morphisms $\wkhom : \trv \to \trv'$ and $\wkhom': \trv' \to \trv$. %The equivalence class of nested sequents homomorphically equivalent to $\ns$ is defined accordingly: 
Last, we define the equivalence class $\hec{\ltr} := \set{\ltr' \mid \ltr \hequiv \ltr' \text{ and $\ltr'$ is a seq-tree.}}$.
\end{definition}

We define $S(\nsinp)$ to be the set of all seq-trees $\ltr = (\trv,\tre,\trl)$ such that for all $w \in \trv$ with $\trl(w) = \Gamma \sar \Delta$, $\Gamma, \Delta \subseteq \sufo(\nsinp)$. $S_{k}(\nsinp)$ denotes the set of all seq-trees in $S(\nsinp)$ whose depth is at most $k$.

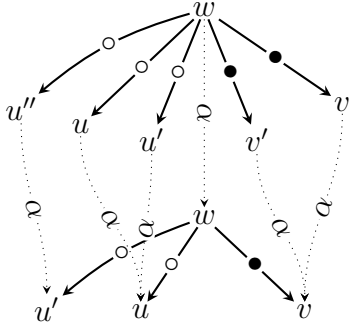
\begin{figure}
\centering

%\begin{center}
\begin{tikzpicture}[
    >=stealth,
    node distance=14mm,
    every node/.style={inner sep=1pt},
    lab/.style={midway, sloped, fill=white, inner sep=1pt}
]

% --- FIRST DIAGRAM ------------------------------------------------
\node (w) {$w$};
\node (u) [below left=of w, xshift=5mm] {$u$};
\node (v) [below right=of w] {$v$};
\node (u') [below left=of w, xshift=-7mm, yshift=1mm] {$u'$};

\draw[->,thick] (w) to[bend left=0] node[lab] {$\fd$} (u);
\draw[->,thick] (w) -- node[lab] {$\bd$} (v);
\draw[->,thick] (w) to[bend right=10] node[lab] {$\fd$} (u');

% --- SECOND DIAGRAM (PRIMED) --------------------------------------
\node (w1) [above=25mm of w] {$w$};

\node (u1) [below left=of w1, xshift=-3mm, yshift=-3mm] {$u$};
\node (v1) [below right=of w1,, xshift=5mm] {$v$};

\node (u2) [below=of w1, xshift=-7mm] {$u'$};
\node (v2) [below=of w1, xshift=7mm] {$v'$};

\node (u3) [below left=of w1, xshift=-10mm] {$u''$};

\draw[->,thick] (w1) to[bend left=0] node[lab] {$\fd$} (u1);
\draw[->,thick] (w1) -- node[lab] {$\bd$} (v1);
\draw[->,thick] (w1) -- node[lab] {$\fd$} (u2);
\draw[->,thick] (w1) -- node[lab] {$\bd$} (v2);
\draw[->,thick] (w1) to[bend right=12] node[lab] {$\fd$} (u3);

% --- ST-HOMOMORPHISM ARROWS --------------------------------------
\draw[->, dotted] (w1) to[out=270, in=90, looseness=1] node[lab] {$\sthom$} (w);

\foreach \a/\b in {u1/u, u2/u, u3/u', v1/v, v2/v}
    \draw[->, dotted] (\a) to[out=270, in=90, looseness=1] node[lab] {$\sthom$} (\b);

\end{tikzpicture}
%\end{center}

\caption{Repeat example.\label{fig:repeat-example}}
\end{figure}

\begin{lemma}\label{lem:morpic-properties} Let $\ltr_{i} = (\trv_{i},\trep{i},\trl_{i})$ be seq-tress in $\ct(\nsinp)$ for $i \in [3]$. 
\begin{description}[leftmargin=!, labelwidth=0em, labelsep=.25em]

\item[$(1)$] The relation $\hequiv$ is an equivalence relation;

\item[$(2)$] $S_{k}(\nsinp) / \hequiv$ is a finite set, that is, there are only finitely many equivalence classes of seq-trees from $S_{k}(\nsinp)$ of depth $k$; %equivalence classes $\hec{\ltr}$ of seq-trees of depth $k$ in $S_{k}(\nsinp) / \hequiv$; %There exist at most $\mathcal{O}(2^{k})$ equivalence classes of nested sequents of depth $k$.

\item[$(3)$] if $\wkhom : \trv_{1} \to \trv_{2}$ and $\wkhom' : \trv_{2} \to \trv_{3}$ are weak morphisms, then $\wkhom' \circ \wkhom$ is a weak morphism;

\item[$(4)$] if $\wkhom : \trv_{1} \to \trv_{2}$ is a weak morphism and $\ltr_{1} \models w \rable^{\cons}_{\forb} u$, then it follows that $\ltr_{2} \models \wkhom(w) \rable^{\cons}_{\forb} \wkhom(u)$.

\item[$(5)$] if $\D \in \cons$, $\wkhom : \trv_{1} \to \trv_{2}$ is a weak morphism, $\ltr_{1}$ and $\ltr_{2}$ are saturated, and $w$ is a leaf in $\ltr_{1}$, then $\wkhom(w)$ is a leaf in $\ltr_{2}$.

\end{description}
\end{lemma}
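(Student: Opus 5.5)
We prove the five items separately; all are direct, and the only one that needs an earlier result is (5).

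For (1), reflexivity of $\hequiv$ is witnessed by the identity map, which is trivially a strong morphism. Symmetry is immediate from the definition, since it asks for strong morphisms in both directions. For transitivity we first show that a composite of strong morphisms is strong. Preservation of the root, of edges and of edge labels follows exactly as in (3), and equality of vertex labels composes: $\trl(u) = \trl'(\sthom(u)) = \trl''(\sthom'(\sthom(u)))$.

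For (3), we check the three clauses for $\wkhom' \circ \wkhom$ directly.
\begin{itemize}
\item Roots map to roots, since $\wkhom'(\wkhom(w_{1})) = \wkhom'(w_{2}) = w_{3}$.
\item If $u \trep{1} v$, then $\wkhom(u) \trep{2} \wkhom(v)$, and hence $\wkhom'(\wkhom(u)) \trep{3} \wkhom'(\wkhom(v))$.
\item Edge labels are preserved at each step. We read clause (3) of the definition of weak morphism as $\trl(u,v) = \trl'(\wkhom(u),\wkhom(v))$.
\end{itemize}

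For (4), we argue by cases on which clause of Definition~\ref{def:closure} produced the edge $(w,u)$ labelled $\forb$ in $\ltr_{1}^{\cons}$.
\begin{itemize}
\item If it is a tree edge $w \trep{1} u$ with label $\forb$, then $\wkhom(w) \trep{2} \wkhom(u)$ with the same label.
\item If it is a reversed edge $u \trep{1} w$ with label $\conv{\forb}$, then the image edge $\wkhom(u) \trep{2} \wkhom(w)$ has label $\conv{\forb}$, so clause (2) gives $\wkhom(w) \rable_{\forb}^{\cons} \wkhom(u)$.
\item The edges added by $\B$ are handled in the same way, using $\B \in \cons$ for $\ltr_{2}$.
\item The loops added by $\T$ exist at every vertex of $\ltr_{2}$, in particular at $\wkhom(w)$.
\end{itemize}

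For (2), fix $k$. The idea is to attach to each seq-tree in $S_{k}(\nsinp)$ a finite signature that determines its $\hequiv$-class. Let $\Lambda$ be the finite set of Gentzen sequents $\Gamma \sar \Delta$ with $\Gamma,\Delta \subseteq \sufo(\nsinp)$. Define the type of a vertex by recursion on height: the type of $u$ is the pair consisting of $\trl(u)$ and the \emph{set} (not multiset) of pairs (edge label, type of child). Types of subtrees of height at most $k$ then range over a finite set $\Theta_{k}$, by induction on $k$, because at each level there are only finitely many labels and finitely many sets of pairs from a finite set.

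We then claim that two seq-trees with the same root type are morphically equivalent. To build a strong morphism from the first to the second, we proceed root-first: send each child of $u$ to some child of the image of $u$ having the same edge label and the same type, which exists because the child-type sets coincide. The reverse morphism is built symmetrically. This uses that nestings form a multiset while morphisms need not be injective (compare \fig~\ref{fig:repeat-example}). Consequently $|S_{k}(\nsinp)/\hequiv| \leq |\Theta_{k}|$, which is finite. The main obstacle here is the type bookkeeping, which is routine.

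For (5), suppose $w$ is a leaf of $\ltr_{1}$. By Lemma~\ref{lem:dpeth-bounded}(2), $w \in \depk{\ltr_{1}}{m+1}$ where $m = \mdep{\nsinp}$. Weak morphisms preserve roots and edges, so by induction along the path from the root they preserve depth, and $\wkhom(w) \in \depk{\ltr_{2}}{m+1}$. Lemma~\ref{lem:dpeth-bounded}(1) gives modal depth at most $-1$ for any formula at depth $m+1$, so the component at $\wkhom(w)$ is empty. Suppose $\wkhom(w)$ had a child $v$ in $\ltr_{2}$. Then $\ltr_{2}$ would have a leaf at depth greater than $m+1$, contradicting Lemma~\ref{lem:dpeth-bounded}(2) applied to the saturated $\ltr_{2}$. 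Hence $\wkhom(w)$ is a leaf.
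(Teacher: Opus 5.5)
Your proposal is correct and matches the paper's argument for (4) and (5). The paper dismisses (1)--(3) as straightforward, and your finite-type signature for (2) soundly fills in that omitted detail. In (5), the paper applies the `if' direction of Lemma~\ref{lem:dpeth-bounded}(2) to $\ltr_{2}$ directly once $\wkhom(w) \in \depk{\ltr_{2}}{m+1}$ is known, so your remark that the component at $\wkhom(w)$ is empty and your detour via a deeper leaf are unnecessary, though harmless.
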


\begin{proof} Claims (1)--(3) are straightforward, so we argue claims (4) and (5).

(4) Let $\wkhom : \trv_{1} \to \trv_{2}$ be a weak morphism. Assume $\ltr_{1} \models w \rable_{\forb}^{\cons} u$. It follows that $(w,u) \in \ltr_{1}^{\cons}$ with the label $\forb$. Regardless of if $\B \in \cons$ or $\T \in \cons$, we know that $(\wkhom(w),\wkhom(u)) \in \ltr_{2}^{\cons}$ with the label $\forb$ because $\wkhom$ is a weak morphism from $\ltr_{1}$ to $\ltr_{2}$. Therefore, $\ltr_{2} \models \wkhom(w) \rable^{\cons}_{\forb} \wkhom(u)$.

(5) Let $\mdep{\nsinp} = k$. By Lemma~\ref{lem:dpeth-bounded}, we know that if $w$ is a leaf of $\ltr_{1}$, then $w \in \depk{\ltr_{1}}{k+1}$. By the definition of a weak morphism, it follows that $\wkhom(w) \in \depk{\ltr_{2}}{k+1}$. By Lemma~\ref{lem:dpeth-bounded}, it follows that $\wkhom(w)$ is a leaf of $\ltr_{2}$.
\end{proof}

When extracting counter-models from failed proof-search, the set of worlds will be determined by the names occurring in distinct $\cons$-saturated sequents. Since different nested sequents may reuse the same names, we introduce a systematic renaming mechanism to distinguish these worlds in the extracted counter-model. Renaming ensures that worlds originating from different sequents are not inadvertently identified, thereby ensuring the correctness of our counter-model construction.

\begin{definition}[Enumeration, Renaming] We define an \emph{enumeration} $\enum = \ltr_{1}, \ldots, \ltr_{n}$ to be a finite list of seq-trees. For a seq-tree $\ltr_{i} \in \enum$, we define the \emph{$\enum$-renaming} of $\ltr_{i} = (\trv,\tre,\trl)$ to be the labeled tree $\enum\ltr_{i} = (\trv',\tre',\trl')$ such that 
\begin{description}[leftmargin=!, labelwidth=0em, labelsep=.25em]

\item[$(1)$] $\trv' := \trv \times \set{i}$;

\item[$(2)$] $(w,i) \tre' (u,i)$ \iffi $w \tre u$;

\item[$(3)$] $\trl(w,i) := \trl(w)$ and $\trl((w,i),(u,i)) := \trl(w,u)$.

\end{description}
\end{definition}

%Observe that all an $\enum$-renaming does is associate the position $i$ of $\ltr_{i}$ in $\enum$ with each vertex. 

If one inspects the rules in $\calc$, they will observe that components are never deleted in bottom-up rule applications. That is to say, if $\ns$ occurs below $\nsii$ in a branch of a derivation, then $\trv \subseteq \trv'$ and $\tre \subseteq \tre'$ for $\tr{\ns} = (\trv,\tre,\trl)$ and $\tr{\nsii} = (\trv',\tre',\trl')$. So, even though the labels of vertices may change, the vertices themselves and their edge structure is always preserved bottom-up. This induces a canonical weak morphism that we call the \emph{natural morphism}, which plays a central role in extracting counter-models from failed proof-search.

\begin{definition}[Natural Morphism] Let $\enum$ be an enumeration of seq-trees containing $\ltr_{i} = (\trv_{i},\trep{i},\trl_{i})$ and $\ltr_{j} = (\trv_{j},\trep{j},\trl_{j})$ at positions $i$ and $j$, respectively, such that $\trv_{i} \subseteq \trv_{j}$, $\trep{i} \subseteq \trep{j}$, and for each $w \trep{i} u$, $\trl_{i}(w,u) = \trl_{j}(w,u)$. We define the \emph{natural morphism} $\homnat$ from $\enum\ltr$ to $\enum\ltr'$ to be the weak morphism such that $\homnat(w,i) = (w,j)$.
\end{definition}

\begin{remark} A natural morphism always exists from a $\enum$-renamed seq-tree to itself, namely, the identity function.
\end{remark}

\begin{definition}[Repeat]\label{def:repeat} Let $\ct = (\ctv,\cte,\ctl)$ be a computation tree. We define $\ltr \in \ctv$ to be a \emph{repeat} \iffi (1) it is a leaf, (2) it is saturated, and (3) there exists a $\ltr'$ (called a \emph{companion} of the repeat) such that $\ltr'$ is an ancestor %(w.r.t $\cte$) 
of $\ltr$ and a strong morphism $\sthom$ exists from $\ltr$ to $\ltr'$.
%and $\ns \in \hec{\nsii}$. 
\end{definition}

\begin{example} Let us consider $\prove(\neg \wbox p, \neg \bbox q \sar  \ )$. This builds a derivation $\prf$ of the input that has a branch (shown below) with a repeat $\ns$ and whose companion we denote by $\nsii$.
\begin{center}
%\resizebox{\columnwidth}{!}{
\AxiomC{$\neg \wbox p, \neg \bbox q \sar \wbox p, \bbox q, (\fd)[\  \sar p], (\bd)[ \  \sar \ ], (\fd)[ \  \sar  \ ], (\bd)[ \  \sar  \ ], (\fd)[ \  \sar  \ ] \ \ \mathbf{(= \ns)}$}
\RightLabel{$\iimpl \times 2$}
\UnaryInfC{$\neg \wbox p, \neg \bbox q \sar (\fd)[\  \sar p], (\bd)[ \  \sar \ ], (\fd)[ \  \sar  \ ], (\bd)[ \  \sar  \ ], (\fd)[ \  \sar  \ ]$}
\RightLabel{$\wboxr$}
\UnaryInfC{$\neg \wbox p, \neg \bbox q \sar \wbox p, \bbox q,(\bd)[ \  \sar q], (\fd)[ \  \sar  \ ], (\bd)[ \  \sar  \ ], (\fd)[ \  \sar  \ ]$}
\RightLabel{$\iimpl \times 2$}
\UnaryInfC{$\neg \wbox p, \neg \bbox q \sar (\bd)[ \  \sar q], (\fd)[ \  \sar  \ ], (\bd)[ \  \sar  \ ], (\fd)[ \  \sar  \ ]$}

\RightLabel{$\bboxr$}
\UnaryInfC{$\neg \wbox p, \neg \bbox q \sar \wbox p, \bbox q, (\fd)[ \  \sar p], (\bd)[ \  \sar  \ ], (\fd)[ \  \sar  \ ]  \ \ \mathbf{(= \nsii)}$}
\RightLabel{$\iimpl \times 2$}
\UnaryInfC{$\neg \wbox p, \neg \bbox q \sar (\fd)[ \  \sar p], (\bd)[ \  \sar  \ ], (\fd)[ \  \sar  \ ]$}

\RightLabel{$\wboxr$}
\UnaryInfC{$\neg \wbox p, \neg \bbox q \sar \wbox p, \bbox q, (\bd)[ \  \sar q], (\fd)[ \  \sar  \ ]$}
\RightLabel{$\iimpl \times 2$}
\UnaryInfC{$\neg \wbox p, \neg \bbox q \sar (\bd)[ \  \sar q], (\fd)[ \  \sar  \ ]$}

\RightLabel{$\bboxr$}
\UnaryInfC{$\neg \wbox p, \neg \bbox q \sar \wbox p, \bbox q, (\fd)[ \  \sar p]$}
\RightLabel{$\iimpl \times 2$}
\UnaryInfC{$\neg \wbox p, \neg \bbox q \sar (\fd)[ \  \sar p]$}

\RightLabel{$\wboxr$}
\UnaryInfC{$\neg \wbox p, \neg \bbox q \sar \wbox p, \bbox q$}
\RightLabel{$\iimpl \times 2$}
\UnaryInfC{$\neg \wbox p, \neg \bbox q \sar  \ $}
\DisplayProof
%}
\end{center}
Let $\Gamma := \neg \wbox p, \neg \bbox q$ and $\Delta := \wbox p, \bbox q$. The names of the components of $\nsii$ are listed below:
%Its companion $\nsii$ will be as follows, where the names of components are listed underneath:

\smallskip

\noindent
$\underbrace{\Gamma \sar \Delta}_{w}, (\fd)[\underbrace{  \sar p}_{u'}], (\bd)[\underbrace{  \sar }_{v}], (\fd)[\underbrace{  \sar  }_{u}]$

\smallskip

\noindent
The names of the components of the repeat $\ns$ are listed below:
%The root of the repeat $\ns$ is named $w$ and the remaining components in the consequent have the following names:

\smallskip

\noindent
$\underbrace{\Gamma \sar \Delta}_{w}, (\fd)[\underbrace{  \sar  p}_{u''}], (\bd)[\underbrace{\sar}_{v'}], (\fd)[\underbrace{ \  \sar  \ }_{u'}], (\bd)[\underbrace{ \  \sar  \ }_{v}], (\fd)[\underbrace{ \  \sar  \ }_{u}]$

\smallskip

\noindent
Figure \ref{fig:repeat-example} shows graphical representations of $\tr{\ns}$ (top) and $\tr{\nsii}$ (bottom). Vertex labels are omitted for readability. The dotted arrows indicate a function $\sthom$, which one can readily verify is a strong morphism from $\tr{\ns}$ to $\tr{\nsii}$.

This example highlights a central way in which non-termination can arise during proof-search. Although $\ns \hequiv \nsii$, the nested sequent $\ns$ has strictly larger outdegree than $\nsii$. Without repeat detection, the branch of $\prf$ containing $\ns$ and $\nsii$ would continue to grow indefinitely, generating infinitely many nested sequents in the equivalence class $\hec{\ns}$ ($ = \hec{\nsii}$) with unbounded outdegree.

The root of this phenomenon lies in the behavior of the non-invertible $\xboxr$ rule, which deletes consequents when applied bottom-up. As a result, information may be lost during bottom-up rule applications, enabling the construction of nested sequents with increasingly large outdegrees. This kind of behavior does not occur in the classical setting (cf. \cite{Bru09,TiuIanGor12}).
%This implies that formulae of the form $\wbox A$ and $\bbox B$, which previously 
\end{example}

\begin{theorem}\label{thm:termination}
$\prove(\nsinp)$ terminates. % \tim{in $\twoexptime$.}
\end{theorem}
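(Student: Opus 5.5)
By König's lemma it suffices to show two things: the computation tree $\ct(\nsinp)$ is finitely branching, and it has no infinite branch. Finite branching is immediate. Each line of $\prove$ makes either one recursive call, two (the conjunctive lines), or $n+k$ calls in the $\dbr$ step. Here $n+k$ counts the output $\iimp$-, $\wbox$- and $\bbox$-formulae in a finite sequent. So it remains to show that no branch $\ltr_0, \ltr_1, \ldots$ of $\ct(\nsinp)$ is infinite. Suppose for contradiction that one is.

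The first step is a bound on the shape of sequents along the branch. By Lemma~\ref{lem:dpeth-bounded}(1), every formula at depth $k$ has modal depth at most $\mdep{\nsinp}-k$. By inspecting the rules, every formula in every $\ltr_i$ lies in $\sufo(\nsinp)$ (the subformula property). Consequently every component at depth greater than $\mdep{\nsinp}$ is empty. Such empty components can only be created by $\ddr$, which $\prove$ applies only at depth at most $\mdep{\nsinp}$. Hence every $\ltr_i$ lies in $S_{d}(\nsinp)$ with $d = \mdep{\nsinp}+2$. By Lemma~\ref{lem:morpic-properties}(2), the quotient $S_{d}(\nsinp)/\hequiv$ is finite.

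The second step shows that saturated sequents occur infinitely often on the branch. Between two consecutive $\dbr$ applications, only the invertible rules and $\ddr$ are applied. Each such application adds a new formula from $\sufo(\nsinp)$ to some component, or a new component whose creation is forced by a $\xdia$-formula or by $\ddr$, and the guards in $\prove$ block redundant re-applications. The number of components is also bounded in this phase. Each $\xdial$ step is triggered by an input formula lacking a witness, and depth is bounded, so with finitely many subformulae per component only finitely many new components can be created. Thus every such phase is finite and ends in a saturated sequent, which is then either stable (so the branch ends) or the conclusion of a $\dbr$ step. An infinite branch therefore contains infinitely many saturated sequents $\ltr_{i_1}, \ltr_{i_2}, \ldots$, all non-repeats by line 4.

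The third step is the main obstacle: deriving a repeat among the saturated sequents. Morphic equivalence alone gives strong morphisms in both directions, and Definition~\ref{def:repeat} needs a strong morphism from the later sequent to an ancestor. So by pigeonhole on the finitely many $\hequiv$-classes I would pick $j<l$ with $\ltr_{i_j} \hequiv \ltr_{i_l}$. The morphism $\ltr_{i_l}\to\ltr_{i_j}$ given by the equivalence is strong and $\ltr_{i_j}$ is an ancestor, so $\ltr_{i_l}$ is a repeat. The algorithm then returns $\false$ there, so $\ltr_{i_l}$ is a leaf, contradicting infiniteness.

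The delicate point is that the $\hequiv$-classes must really be finite even though outdegree is unbounded, as the example shows. Here I would prove Lemma~\ref{lem:morpic-properties}(2) concretely. Strong morphisms may collapse siblings carrying the same label and subtree type, so each seq-tree in $S_d(\nsinp)$ is equivalent to its reduct in which duplicate sibling subtrees are removed. By induction on depth, these reducts form a finite set, of size bounded by an iterated exponential in $|\sufo(\nsinp)|$. With this, the pigeonhole argument goes through and termination follows.
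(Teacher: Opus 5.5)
Your proposal is correct and follows the paper's own argument: analyticity plus Lemma~\ref{lem:dpeth-bounded} bound the depth, Lemma~\ref{lem:morpic-properties}-(2) gives finitely many $\hequiv$-classes, and pigeonhole along a branch forces a repeat. Your version is more careful than the paper's, because it makes explicit three things the paper leaves out: finite branching for K\"onig's lemma; the fact that saturated sequents recur on any infinite branch, which is needed because a repeat must be saturated and the paper's pigeonhole step passes over this; and a concrete proof, via reducts, that the quotient is finite.
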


\begin{proof} Let $\ct(\nsinp) = (\ctv,\cte,\ctl)$ be the computation tree for $\prove(\nsinp)$. Since the rules used in $\prove$ are analytic, we know that all seq-trees in $\ctv$ will occur in $S(\nsinp)$. Hence, there are only finitely many Gentzen sequents that can serve as labels of vertices in a seq-tree of $\ct(\nsinp)$. By this fact and Lemma~\ref{lem:dpeth-bounded}, every nested sequent that can be generated during proof-search has its depth bounded by a finite number $k$. By Lemma~\ref{lem:morpic-properties}-(2), there are only finitely many equivalence classes $\hec{\ltr_{1}}, \ldots, \hec{\ltr_{n}}$ in $S_{k}(\nsinp)$. By the pigeonhole principle, if branches in $\ct(\nsinp)$ were to continue to grow, eventually two members of the same equivalence would occur along the same branch. It follows that the height of $\ct(\nsinp)$ is bounded, meaning, $\prove(\nsinp)$ must eventually terminate. %\tim{Add complexity analysis}
\end{proof}

\subsection{Counter-Model Extraction}\label{subsec:counter-model-extract}

We now explain how to extract counter-models from computation trees when proof-search fails. \emph{For the remainder of this subsection, we assume $\prove(\nsinp) = \false$ and fix $\ct(\nsinp) = (\ctv,\cte,\ctl)$ to be the corresponding computation tree.} We define yet another structure, called a \emph{blueprint}, from which a counter-model for $\nsinp$ may be finally extracted. Let $\enum$ be an enumeration of the seq-trees in $\ctv$. We define $\bp(\nsinp) := (\bpv,\bps,\bpe)$ to be the \emph{blueprint} of $\prove(\nsinp)$ such that
\begin{description}

\item[$(1)$] $\bpv$ is the smallest set satisfying:

\end{description}
\begin{description}

\item[$\quad (a)$] $\enum\tr{\nsinp} \in \bpv$;

\item[$\quad (b)$] if $\enum\tr{\ns} \in \bpv$, $\tr{\ns} \cte \tr{\nsii}$ and $\ctl(\tr{\nsii}) = \fval$, then $\enum\tr{\nsii} \in \bpv$.
    
\end{description}
\begin{description}

\item[$(2)$] $\bps := \set{\enum\tr{\ns} \in \bpv \mid \tr{\ns} \text{ is saturated.}}$

\item[$(3)$] $\enum\tr{\ns} \bpe \enum\tr{\nsii}$ \iffi $\tr{\ns} \cte \tr{\nsii}$.

\end{description}
\emph{For the remainder of this subsection, we fix $\bp(\nsinp)$ to be the blueprint of $\prove(\nsinp)$.} By conditions (1) and (3), one can see that $\bp(\nsinp)$ is obtained by starting at the root of $\ct(\nsinp)$ and by taking the downward closure of seq-trees labeled with $\fval$.

Let $\bptr{\ns_{1}}, \ldots, \bptr{\ns_{n}}$ be all repeats in $\bp(\nsinp)$, and let $\vechom = \sthom_{1}, \ldots, \sthom_{n}$ be a list of strong morphisms such that each $\sthom_{i}$ maps the repeat $\bptr{\ns_{i}}$ to its corresponding companion.\footnote{Although more than one strong morphism may exist between a repeat and its companion, selecting any one of them suffices for the counter-model construction.} \emph{For the remainder of this subsection, we fix $\vechom$ as the chosen family of strong morphisms mapping repeats to their companions in $\bp(\nsinp)$.}

We now introduce certain kinds of paths in $\bp(\nsinp)$ that will be used to define the intuitionistic relation $\leq$ in our counter-model. Specifically, we distinguish between two types of edges that may occur along such paths: edges induced by natural morphisms and edges induced by the strong morphisms in $\vechom$:

\smallskip

\noindent
(1) $(\pathi,i) \hreach{\homnat} (\pathii,j)$ \iffi there exist $\bptr{\ns}, \bptr{\nsii} \in \bps$ s.t. $\pathi = \pathii$, $\homnat$ is the natural morphism from $\bptr{\ns}$ to $\bptr{\nsii}$, and $\bptr{\ns} \bpe^{*} \bptr{\nsii}$;\footnote{We let $\bpe^{*}$ denote the reflexive-transitive closure of $\bpe$.} % Using this in the definition implies that $(\pathi,i) \hreach{\homnat} (\pathi,i)$, which will be used to prove the reflexivity of $\leq$ in our counter-model later on.}

\smallskip

\noindent
(2)  $(\pathi,i) \hreach{\sthom} (\pathii,j)$ \iffi there exist $\bptr{\ns}, \bptr{\nsii} \in \bps$ s.t. $(\pathi,i) \in \bptr{\ns}$, $(\pathii,j) \in \bptr{\nsii}$, $\bptr{\ns}$ is a repeat of %with $\bptr{\nsii}$ its companion,
$\bptr{\nsii}$, and $\sthom \in \vechom$ maps $\bptr{\ns}$ to $\bptr{\nsii}$ with $\sthom(\pathi,i) = (\pathii,j)$.

\smallskip

\noindent
We define a \emph{morphic sequence} in $\bp(\nsinp)$ relative to $\vechom$ to be a path
$$
(\pathi_{0}, i_{0}) \hreach{\mor_{1}} (\pathi_{1}, i_{1}) \hreach{\mor_{2}} (\pathi_{2}, i_{2}) \cdots (\pathi_{n-1}, i_{n-1}) \hreach{\mor_{n}} (\pathi_{n}, i_{n})
$$
such that each $\mor_{i}$ is a strong morphism in $\vechom$ or a natural morphism. We write $(\pathi_{0},i_{0}) \hreach{\sigma} (\pathi_{n},i_{n})$ with $\sigma = \mor_{1}, \mor_{2} \ldots, \mor_{n}$ as shorthand. % for such a morphic sequence.

Let $\mdep{\nsinp} = k$. We use $\bp(\nsinp)$ to define the \emph{$\lb \cons, \bp(\nsinp),\vechom \rb$-model} $M = (W, \leq, R,V)$ such that:

\smallskip

\noindent
(1) The set of worlds is defined as follows:
$$
W := \!\!\!\!\!\!\!\!\!\!\!\! \bigcup_{\bptr{\ns} = (\trv,\tre,\trl) \in \bps} \!\!\!\!\!\!\!\!\!\!\!\! \trv
$$

\smallskip

\noindent
(2) $(\pathi,i) \leq (\pathii,j)$ \iffi there exist $\bptr{\ns}, \bptr{\nsii} \in \bps$ s.t. $(\pathi,i) \in \bptr{\ns}$, $(\pathii,j) \in \bptr{\nsii}$, and there exists a morphic sequence $(\pathi,i) \hreach{\sigma} (\pathii,j)$; % in $\bp(\nsinp)$;

\smallskip

\noindent
(3) $R$ is the smallest set satisfying (a) and (b), for all $\bptr{\ns} \in \bps$:

\smallskip

%\begin{description}[leftmargin=!, labelwidth=1.5em, labelsep=.75em, align=left]

%\item[$\quad (a)$] 

\noindent
\quad $(a)$ If $(\pathi,i), (\pathii,i) \in \bptr{\ns}$ and $\bptr{\ns} \models (\pathi,i) \rable^{\cons}_{\forb} (\pathii,i)$, then it follows that $(w,i) R (u,i)$;

%\item[$\quad (b)$] 

\smallskip

\noindent
\quad $(b)$ If $\D \in \cons$, $(\pathi,i) \in \bptr{\ns}$, and $(\pathi,i) \in \depk{\ns}{k+1}$, then it follows that $(\pathi, i) R (\pathi, i)$;

%\end{description}

\smallskip

\noindent
(4) $(\pathi,i) \in V(p)$ \iffi for some $\bptr{\ns} \in \bps$, $p \in \infs{(\pathi,i), \bptr{\ns}}$.

\smallskip

\noindent
This completes the definition of the extracted counter-model.

%\begin{lemma}
%*seems helpful to have a lemma (for counter-model extraction below) stating that saturating a sequent is a finite process in the proof search algorithm
%\end{lemma}

\begin{lemma}\label{lem:bp-monotonicity-witnesses}
Let $\bptr{\ns}\in \bps$ with $(\pathi,i) \in \bptr{\ns}$. The following hold:
\begin{description}[leftmargin=!, labelwidth=1em, labelsep=.5em]

\item[$(1)$] If $\bptr{\nsii} \in \bps$ such that $(\pathii,j) \in \bptr{\nsii}$, $(\pathi,i) \hreach{\sigma} (\pathii,j)$, and $A \in \infs{(\pathi,i),\bptr{\ns}}$, then $A \in \infs{(\pathii,j),\bptr{\nsii}}$;

\item[$(2)$] If $A \iimp B \in \outfs{(\pathi,i),\bptr{\ns}}$, then there is a $\bptr{\nsii} \in \bps$ with $(\pathii,j) \in \bptr{\nsii}$ such that $(\pathi,i) \hreach{\sigma} (\pathii,j)$, $A \in \infs{(\pathii,j),\bptr{\nsii}}$, and $B \in \outfs{(\pathii,j),\bptr{\nsii}}$;

\item[$(3)$] If $\xbox C \in \outfs{(\pathi,i),\bptr{\ns}}$, then there exists a $\bptr{\nsii} \in \bps$ where $(\pathii,j), (\pathiii,j) \in \bptr{\nsii}$, $\bptr{\nsii} \models (\pathii,j) \rable_{\forb}^{\cons} (\pathiii,j)$, $(\pathi,i) \hreach{\sigma} (\pathii,j)$, and $C \in \outfs{(\pathiii,j),\bptr{\nsii}}$;

%\item[$(4)$] If $\bbox D \in \outfs{(\pathi,i),\bptr{\ns}}$, then there exists a $\bptr{\nsii} \in \bps$ with $(\pathii,j), (\pathiii,j) \in \bptr{\nsii}$ such that $\prgr{\bptr{\nsii}} \models (\pathii,j) \rableosb (\pathiii,j)$, $(\pathi,i) \hreach{\sigma} (\pathii,j)$, and $D \in \outfs{(\pathii,j),\bptr{\nsii}}$.

%\item[(3)] If $\wboximpr$ occurs in $\bp(\nsinp)$, $\ns$ is a premise of $\wboximpr$ with $\nsii$ a closest saturated descendant, and $A \in \outfs{w,\ns}$, then $A \in \outfs{w,\nsii}$.

\end{description}
\end{lemma}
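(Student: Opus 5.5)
For (1) I would argue by induction on the length $n$ of the morphic sequence $\sigma$. When $n=0$ the claim is trivial, so it suffices to show that a single step $(\pathi,i) \hreach{\mor} (\pathii,j)$ preserves input formulae. There are two kinds of steps.

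\emph{Strong-morphism step.} Here $\mor = \sthom \in \vechom$ maps a repeat to its companion. A strong morphism satisfies $\trl(u) = \trl'(\sthom(u))$, so the label of $(\pathii,j)$ equals the label of $(\pathi,i)$. In particular the antecedents coincide.

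\emph{Natural-morphism step.} Here $\pathi = \pathii$ and $\bptr{\ns} \bpe^{*} \bptr{\nsii}$. I would inspect each rule of $(\calc \setminus \set{\iimpr,\xboxr}) \cup \set{\dbr}$ applied bottom-up. No rule removes antecedent formulae from any component. The operation $\ns^{\da}$ in $\dbr$ deletes only consequents, and the remaining rules only add formulae via $\triangleleft$. Hence along any $\bpe$-path the antecedent of the component named $\pathi$ can only grow, which is exactly the required claim.

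For (2), suppose $A \iimp B \in \outfs{(\pathi,i),\bptr{\ns}}$ with $\bptr{\ns}$ saturated. I would split on whether $\bptr{\ns}$ is a leaf of $\ct(\nsinp)$.
\begin{itemize}
\item \textbf{Not a leaf.} Then $\bptr{\ns}$ is not stable (it contains an output $\iimp$-formula) and not a repeat, so $\prove$ applies $\dbr$ (lines 38--44). Since $\ctl = \fval$ for this node and $\dbr$ branches disjunctively, every premise is labelled $\fval$ and therefore lies in $\bpv$. Take the premise $\ns^{\da}\triangleleft_{w}(A \sar B)$ and follow any $\fval$-labelled branch upward to a saturated descendant $\bptr{\nsii} \in \bps$.
\item \textbf{A repeat.} First apply the strong morphism $\sthom$ to reach the companion. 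The companion carries the same label, so it also contains $A \iimp B$ at $\sthom(\pathi,i)$, and it is an ancestor, hence not a leaf. Now run the previous case from the companion.
\end{itemize}

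Two points need care in the $\dbr$ branch. The first is that a $\fval$ node always has a saturated $\fval$ descendant. This holds because $\prove$ terminates (Theorem~\ref{thm:termination}), and leaves are either $\tval$-initial or saturated (stable or repeat). A $\fval$ node that is not saturated has a $\fval$ child, by the conjunctive or unary semantics of the other rules. The second is that $A$ persists in the antecedent and $B$ persists in the consequent along this path. For $A$ this follows from (1). For $B$, the only rule that deletes consequents is $\dbr$, which is applied only at saturated nodes. So on the path from $\ns^{\da}\triangleleft_w(A\sar B)$ to its \emph{first} saturated descendant $\bptr{\nsii}$, $B$ remains in the output of $\pathi$. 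I therefore choose that first saturated descendant. The resulting morphic sequence is one natural-morphism edge, possibly preceded by one $\sthom$ edge.

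For (3) I would argue identically, using the $\dbr$ premise $\ns^{\da}\triangleleft_{\pathi}(\ \sar (\forb)[\ \sar C])$ with a fresh component $\pathiii$. Then $(\pathi,\cdot)\tre(\pathiii,\cdot)$ with label $\forb$, so $\pathi \rableosx \pathiii$ holds in the premise by Definition~\ref{def:closure}. This edge is preserved in the first saturated descendant, because edges are never deleted. $C$ stays in the output of $\pathiii$ until that descendant, by the same argument as for $B$. Finally set $\pathii := \pathi$.

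The main obstacle is the persistence of consequent formulae together with the repeat case. In particular, I must check that the companion of a repeat is not itself a leaf, which holds because it is a proper ancestor, so that it really carries a $\dbr$ application.
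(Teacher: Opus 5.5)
Your proposal is correct and follows the paper's proof. For (1) you induct on the morphic sequence with separate natural-morphism and strong-morphism steps. For (2)--(3) you split into the non-repeat case (the $\dbr$ application, all of whose premises are labelled $\fval$, followed up to a saturated descendant reached by natural morphisms) and the repeat case (one $\sthom$-step to the companion, which is not a leaf, then the non-repeat argument). Choosing the \emph{first} saturated descendant of the $\dbr$ premise, and noting that only $\dbr$ deletes consequent formulae, makes explicit why $B$ (resp.\ $C$) is still an output formula there, which the paper's proof leaves implicit.
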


\begin{proof} We prove each statement in turn. 

(1) Suppose $\bptr{\nsii} \in \bps$ such that $(\pathii,j) \in \bptr{\nsii}$, $(\pathi,i) \hreach{\sigma} (\pathii,j)$, and $A \in \infs{(\pathi,i),\bptr{\ns}}$. We prove the claim by induction on the number of morphisms in $\sigma$. 

\textit{Base case.} There are two cases: (i) $\sigma = \homnat$ or (ii) $\sigma = \sthom \in \vechom$. We consider case (i) first. By definition, $\pathi = \pathii$ and $\bptr{\ns} \bpe^{*} \bptr{\nsii}$. By the latter fact, $\tr{\ns} \cte^{*} \tr{\nsii}$ with $\cte^{*}$ the reflexive-transitive closure of $\cte$. Therefore, $\tr{\nsii}$ was obtained from $\tr{\ns}$ by (zero or more) bottom-up applications of rules in $\calc$. If one inspects the rules of $\calc$, they will notice that all rules bottom-up preserve input formulae at components. This carries over to $\ct(\nsinp)$ and to $\bp(\nsinp)$ as well. Hence, $A \in \infs{(\pathi,j),\bptr{\nsii}} = \infs{(\pathii,j),\bptr{\nsii}}$.

Let us now consider case (ii). By our assumption, $\bptr{\ns}$ is a repeat with $\bptr{\nsii}$ its companion, and $\sthom \in \vechom$ maps $\bptr{\ns}$ to $\bptr{\nsii}$ with $\sthom(\pathi,i) = (\pathii,j)$. Since strong morphisms preserve input formulae, $A \in \infs{(\pathii,j),\bptr{\nsii}}$.

\textit{Inductive step.} Let $\sigma := \sigma', \mor$. Then, there exists a $\bptr{\nsiii} \in \bps$ with $(\pathiii,k) \in \bptr{\nsiii}$ such that $(\pathi,i) \hreach{\sigma} (\pathii,j)$ is equal to the following:
$$
(\pathi,i) \hreach{\sigma'} (\pathiii,k) \hreach{\mor} (\pathii,j).
$$
By IH, $A \in \infs{(\pathiii,k),\bptr{\nsiii}}$, and by the base case, $A \in \infs{(\pathii,j),\bptr{\nsii}}$.

(2) Suppose $A \iimp B \in \outfs{(\pathi,i),\bptr{\ns}}$. Notice that $\bptr{\ns}$ cannot be $\cons$-stable since it contains $A \iimp B$ as an output formula. This gives two cases: (i) $\bptr{\ns}$ is not a repeat or (ii) it is a repeat.

(i) If $\bptr{\ns}$ is not a repeat, then $\tr{\ns}$ is not a repeat in $\ct(\nsinp)$. Thus, there will be a path from $\tr{\ns}$ in $\ct(\ns)$ such that (a) $\dbr$ is applied bottom-up somewhere along the path, (b) all seq-trees of nested sequents along the path are labeled with $\fval$, and (c) above the $\dbr$ application is a seq-tree $\tr{\nsii}$ such that $\tr{\nsii}$ is $\cons$-saturated. %, $A \in \infs{(w,j),\tr{\nsii}}$, and $B \in \outfs{(w,j),\tr{\nsii}}$. 
Let $\tr{\ns}$ have position $i$ and $\tr{\nsii}$ have position $j$ in $\enum$. From what was said above, $\bptr{\nsii} \in \bps$ with $A \in \infs{(\pathi,j),\tr{\nsii}}$ and $B \in \outfs{(\pathi,j),\tr{\nsii}}$. Moreover, there exists a path in $\bp(\nsinp)$ from $\bptr{\ns}$ to $\bptr{\nsii}$ giving rise to a morphic sequence $(\pathi,i) \hreach{\sigma} (\pathi,j)$ such that $\sigma$ consists solely of natural morphisms. This establishes the claim in this case.

(ii) If $\bptr{\ns}$ is a repeat, then it has an ancestor $\bptr{\nsii}$ and there exists a strong morphism $\sthom$ from $\bptr{\ns}$ to $\bptr{\nsii}$. Hence, $A \iimp B \in \outfs{(\pathii,j),\bptr{\nsii}}$, $\bptr{\nsii}$ is not a repeat, and $(\pathi,i) \hreach{\sthom} (\pathii,j)$. One can argue in a manner similar to case (i) that a $\bptr{\nsiii} \in \bps$ exists with $(\pathii,k) \in \bptr{\nsiii}$ such that $(\pathii,j) \hreach{\sigma} (\pathii,k)$, $A \in \infs{(\pathii,k),\bptr{\nsiii}}$, and $B \in \outfs{(\pathii,k),\bptr{\nsiii}}$. Thus, there exists a morphic sequence $(\pathi,i) \hreach{\sthom,\sigma} (\pathii,k)$ from $\bptr{\ns}$ to $\bptr{\nsiii}$, which establishes the claim.

(3) Suppose $\xbox C \in \outfs{(\pathi,i),\bptr{\ns}}$. Notice that $\bptr{\ns}$ cannot be $\cons$-stable since it contains $\xbox C$ as an output formula. This gives two cases: (i) $\bptr{\ns}$ is not a repeat or (ii) it is a repeat.

(i) If $\bptr{\ns}$ is not a repeat, then $\tr{\ns}$ is not a repeat in $\ct(\nsinp)$. %Based on our assumptions, one can confirm that $\ctl(\tr{\ns}) = \fval$ and $\xbox C \in \outfs{\tail(\pathi),\tr{\ns}}$. 
Thus, there will be a path from $\tr{\ns}$ in $\ct(\ns)$ such that (a) $\dbr$ is applied bottom-up somewhere along the path, (b) all seq-trees of nested sequents along the path are labeled with $\fval$, and (c) above the $\dbr$ application is a seq-tree $\tr{\nsii}$ such that $\tr{\nsii} = (\trv,\tre,\trl)$ is $\cons$-saturated, $w \tre u$ with $\trl(w,u) = \forb$, and $C \in \outfs{u,\tr{\nsii}}$. Let $\tr{\ns}$ have position $i$ and $\tr{\nsii}$ have position $j$ in $\enum$. Observe that the path satisfying conditions (a)--(c) in $\ct(\nsinp)$ gives rise to a morphic sequence $(\pathi,i) \hreach{\sigma} (\pathi,j)$ such that $\sigma$ consists solely of natural morphisms. Moreover, by what was said above $\bptr{\nsii} \models (\pathi,j) \rable_{\forb}^{\cons} (\pathii,j)$. This establishes the claim in this case.

(ii) Suppose $\bptr{\ns}$ is a repeat. Then, it has an ancestor $\bptr{\nsii}$ and there exists a strong morphism $\sthom$ from $\bptr{\ns}$ to $\bptr{\nsii}$. Hence, $\xbox C \in \outfs{(\pathii,j),\bptr{\nsii}}$, $\bptr{\nsii}$ is not a repeat, and $(\pathi,i) \hreach{\sthom} (\pathii,j)$. One can argue in a manner similar to case (i) that a $\bptr{\nsiii} \in \bps$ exists with $(\pathiii,k), (\pathiv,k) \in \bptr{\nsii}$ such that $\prgr{\bptr{\nsii}} \models (\pathiii,k) \rable_{\forb}^{\X} (\pathiv,k)$, $(\pathii,j) \hreach{\sigma} (\pathiii,k)$, and $D \in \outfs{(\pathiv,k),\bptr{\nsii}}$. Thus, there exists a morphic sequence $(\pathi,i) \hreach{\sthom,\sigma} (\pathiii,k)$ from $\bptr{\ns}$ to $\bptr{\nsiii}$, which establishes the claim.
\end{proof}

Before concluding this section and proving that the extracted counter-model is indeed a valid model that falsifies the input $\nsinp$, we first present an illustrative example of a blueprint and the corresponding counter-model obtained from failed proof-search.

\begin{figure}
\centering

\begin{minipage}{.5\columnwidth}
\AxiomC{$ \sar  (\fd) [p \sar q]$}
\RightLabel{$\dbr$}
\UnaryInfC{$ \sar  (\fd) [\ \sar p \imp q]$}
\AxiomC{$r \sar s$}
\RightLabel{$\dbr$}
\BinaryInfC{$ \sar \wbox (p \imp q), r \imp s$}
\RightLabel{$\disr$}
\UnaryInfC{$ \sar \wbox (p \imp q) \lor (r \imp s)$}
\DisplayProof
\end{minipage}
\begin{minipage}{.05\columnwidth} %\textwidth}
\ 
\end{minipage}
\begin{minipage}{.4\columnwidth}
\begin{tikzpicture}
\node[] (a) %[label={[label distance=0cm]270:$\not\sat \xbox (p \imp q) \lor (r \imp s)$}]
[] {$(w,1)$};
\node[] (b) [above=of a,xshift=0mm,yshift=-2.5mm] {$(w,2)$};
\node[] (c) [right=of a,xshift=0mm,yshift=0mm,label={[label distance=0cm]0:$\stackrel{\textstyle \sat r}{\not\sat s}$}] {$(w,4)$};
\node[] (d) [right=of b,xshift=0mm,yshift=0mm,label={[label distance=0cm]0:$\stackrel{\textstyle \not\sat p}{\not\sat q}$}] {$(u,2)$};
\node[] (e) [above=of b,xshift=0mm,yshift=-2.5mm] {$(w,3)$};
\node[] (f) [right=of e,xshift=0mm,yshift=0mm,label={[label distance=0cm]0:$\stackrel{\textstyle \sat p}{\not\sat q}$}] {$(u,3)$};

\path[->] (a) edge[dotted] node[midway, left] {$\leq$} (b);

\path[->] (a) edge[dotted] node[midway, below] {$\leq$} (c);

\path[->] (b) edge[dotted] node[midway, left] {$\leq$} (e);

\path[->] (d) edge[dotted] node[midway, left] {$\leq$} (f);

\path[->] (b) edge[] node[above] {$R$} (d);

\path[->] (e) edge[] node[above] {$R$} (f);

%\draw[->] (b) to [out=0,in=225,looseness=.75] node[midway,below] {}  (d);

\end{tikzpicture}
\end{minipage}
    
\caption{Failed proof-search and extracted counter-model.\label{fig:fail-PS-counter-model-example}}
\end{figure}
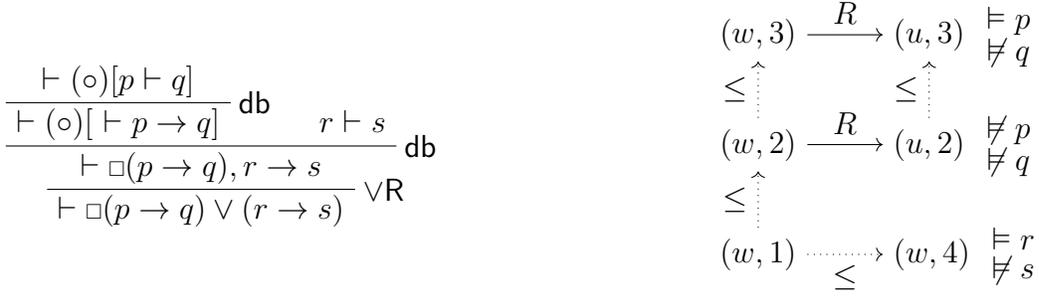

\begin{example}\label{ex:proof-search} Let us consider a run of $\prove(\nsinp)$ with the input $\nsinp = \ \sar \wbox (p \imp q) \lor (r \imp s)$. This generates the computation tree $\ct(\nsinp)$ shown left in Figure~\ref{fig:fail-PS-counter-model-example}. Note that every member of $\ct(\nsinp)$ is labeled with $\fval$. The seq-trees in $\ct(\nsinp)$ are listed below in nested sequent notation:
{\setlength{\multicolsep}{0pt}
\begin{multicols}{2}
\begin{description}[
  leftmargin=!,
  labelwidth=0em,
  labelsep=.25em,
  align=left
]

\item[$\tr{\nsinp}$]
$:=\ \sar \wbox (p \imp q) \lor r \imp s$

\item[$\tr{\ns_{1}}$]
$:=\ \sar \wbox (p \imp q),\ r \imp s$

\item[$\tr{\ns_{2}}$]
$:=\ \sar (\fd)[\ \sar p \imp q]$

\item[$\tr{\ns_{3}}$]
$:=\ \sar (\fd)[p \sar q]$

\item[$\tr{\ns_{4}}$]
$:=\ r \sar s$

\end{description}
\end{multicols}
}

\noindent
Let $\enum = \tr{\ns_{1}}, \tr{\ns_{2}}, \tr{\ns_{3}}, \tr{\ns_{4}}$ be an enumeration of the $\cons$-saturated seq-trees. The blueprint $\bp(\nsinp) = (\bpv,\bps,\bpe)$ of $\prove(\nsinp)$ is defined as follows:

\smallskip

\noindent
(1) $\bpv$ contains all seq-trees in $\ct(\nsinp)$;

\noindent
(2) $\bps := \set{\tr{\ns_{1}}, \tr{\ns_{2}}, \tr{\ns_{3}}, \tr{\ns_{4}}}$;

\noindent
(3) $\bpe := \set{(\tr{\nsinp},\tr{\ns_{1}}), (\tr{\ns_{1}}, \tr{\ns_{2}}),(\tr{\ns_{1}}, \tr{\ns_{4}}),(\tr{\ns_{2}}, \tr{\ns_{3}})}$.

\smallskip

\noindent
Using $\bp(\nsinp)$, we can extract the model $M = (W,\leq,R,V)$ shown right in Figure~\ref{fig:fail-PS-counter-model-example}. Let $\rt$ be the name of the root of $\nsinp$ and $u$ be the name of the non-root components in $\ns_{2}$ and $\ns_{3}$. One can verify that $M$ is indeed a $(\cons,\bp(\ns),\sthom)$-model with $\sthom = \emptyset$ and is based on a frame in $\frames_{\cons}$. Last, $M, (w,1) \not\sat \wbox (p \imp q) \lor r \imp s$.
\end{example}

\begin{theorem}\label{lem:counter-model-is-model}
The $\lb \cons, \bp(\nsinp), \vechom \rb$-model $M = (W, \leq, R,V)$ is a model based on a frame in $\frames_{\cons}$.
\end{theorem}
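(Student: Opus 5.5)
We have to verify four things about $M$: $W$ is non-empty, $\leq$ is a pre-order, $V$ satisfies (M), and $R$ satisfies (F1), (F2) and the conditions in $\cons$. The order is: the easy structural facts first, then the conditions in $\cons$, and finally (F1)/(F2).

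\emph{Non-emptiness, pre-order, monotonicity.} $W \neq \emptyset$ because $\prove(\nsinp) = \false$. Following $\fval$-labelled premises from $\tr{\nsinp}$, every terminating branch of such seq-trees must end in a stable sequent or a repeat; both are saturated, so $\bps \neq \emptyset$. Reflexivity of $\leq$ comes from the trivial natural morphism: $\bpe^{*}$ is reflexive, so $(\pathi,i) \hreach{\homnat} (\pathi,i)$. Transitivity holds because morphic sequences concatenate. Condition (M) is immediate from Lemma~\ref{lem:bp-monotonicity-witnesses}-(1) applied to atoms: if $p \in V(\pathi,i)$, witnessed by $p \in \infs{(\pathi,i),\bptr{\ns}}$, then $p$ is an input formula at every $(\pathii,j)$ reachable by a morphic sequence.

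\emph{Conditions in $\cons$.} Here I read $R$ straight off the $\cons$-closure in clause (3a).
\begin{itemize}
\item If $\T \in \cons$, the closure contains a loop at every vertex, so $R$ is reflexive.
\item If $\B \in \cons$, the closure is symmetric, so $R$ is symmetric. Clause (3b) only adds loops, which are symmetric anyway.
\item If $\D \in \cons$, take any world $(\pathi,i)$ in a saturated $\bptr{\ns}$. By Lemma~\ref{lem:dpeth-bounded}-(2), either it is a leaf at depth $k+1$, and clause (3b) gives a loop, or it is a non-leaf, and condition $\ddrc$ gives a $\fd$-child, hence an $R$-successor.
\end{itemize}

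\emph{(F1) and (F2).} I expect these to be the main obstacle. I would prove both by induction on the length of the morphic sequence, reducing to a single step $(\pathi,i)\hreach{\mor}(\pathii,j)$.

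For one step, $\mor$ is a weak morphism: either a natural morphism or a strong morphism in $\vechom$. Lemma~\ref{lem:morpic-properties}-(4) says weak morphisms preserve $\rable^{\cons}_{\forb}$. So if $(\pathi,i) R (\pathiii,i)$ via clause (3a), then $(\pathii,j) R (\mor(\pathiii,i))$, and $(\pathiii,i) \hreach{\mor} \mor(\pathiii,i)$ is a one-step morphic sequence, giving $(\pathiii,i) \leq \mor(\pathiii,i)$. This settles (F1) for clause (3a). For loops added by clause (3b), I use Lemma~\ref{lem:morpic-properties}-(5): weak morphisms send leaves at depth $k+1$ to leaves at depth $k+1$, so the image again carries a (3b) loop.

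(F2) is the delicate direction, because we start from $v \leq v'$ where $v$ is the successor, and must lift the intuitionistic step to the predecessor. The key observation is that each step of a morphic sequence is a morphism of \emph{whole} seq-trees. If $(\pathi,i) R (\pathiii,i)$ and $(\pathiii,i) \hreach{\mor} (\pathiii',j)$, I take $w' := \mor(\pathi,i)$. Then $(\pathi,i) \leq w'$ by the same one-step sequence, and $w' R (\pathiii',j)$ by Lemma~\ref{lem:morpic-properties}-(4) again, with (3b) loops handled as before.

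The point needing care is that the step $\hreach{\mor}$ from $(\pathiii,i)$ really is a morphism applied to the same tree $\bptr{\ns}$ that contains $(\pathi,i)$. Each world $(\pathi,i)$ lies in exactly one renamed tree, namely the $i$-th, thanks to the renaming. The definitions of $\hreach{\homnat}$ and $\hreach{\sthom}$ then require the source world to lie in the domain tree, and the enumeration makes that tree $\bptr{\ns}$ itself. So $\mor$ is defined on $(\pathi,i)$, and the induction on sequence length goes through for both (F1) and (F2).
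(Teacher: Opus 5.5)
Your proposal is correct and follows essentially the same route as the paper: reflexivity via the identity natural morphism, transitivity by concatenating morphic sequences, (M) via \lem~\ref{lem:bp-monotonicity-witnesses}-(1), and (F1)/(F2) via \lem~\ref{lem:morpic-properties}-(4). The paper composes the whole sequence into a single weak morphism $\mor_{\sigma}$ using \lem~\ref{lem:morpic-properties}-(3) and applies (4) once, whereas you apply (4) one step at a time; your explicit handling of the (3b) loops, the conditions in $\cons$, and why $\bps \neq \emptyset$ fills in points the paper's proof leaves implicit.
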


\begin{proof} Let us prove that $M$ satisfies all properties of a model based on a frame in $\frames_{\cons}$. First, if $\prove(\nsinp) = \false$, then $\prove$ must have generated at least one $\cons$-saturated seq-tree $\tr{\nsinp}$, which means $W \neq \emptyset$ as $\enum\tr{\nsinp}$ must contain at least one vertex. %This latter fact is a consequence of the fact that every tree of a nested sequent has a non-empty set of vertices.

Second, it is trivial to verify that $\leq \ \subseteq W \times W$ and $R \ \subseteq W \times W$. Let us now argue that $\leq$ is reflexive and transitive.

Let $(\pathi,i) \in W$. Then, there exists a $\cons$-saturated seq-tree $\enum\tr{\ns} \in \bps$ with $(\pathi,i) \in \enum\tr{\ns}$. By definition, the following morphic sequence exists in $\bp(\nsinp)$: $(\pathi,i) \hreach{\homnat} (\pathi,i)$. This implies that $(\pathi,i) \leq (\pathi,i)$, which shows that $\leq$ is reflexive. Let us now argue that $\leq$ is transitive. Suppose $(\pathi,i) \leq (\pathii,j)$ and $(\pathii,j) \leq (\pathiii,k)$. By the definition of $\leq$, the two morphic sequences shown below left and below middle exist in $\bp(\nsinp)$.
$$
(\pathi,i) \hreach{\sigma} (\pathii,j)
\qquad
(\pathii,j) \hreach{\sigma'} (\pathiii,k)
\qquad
(\pathi,i) \hreach{\sigma,\sigma'} (\pathiii,k)
$$
As shown above right, we may compose the two given morphic sequences to obtain a new one from $(\pathi,i)$ to $(\pathiii,k)$, implying that $(\pathi,i) \leq (\pathiii,k)$. Hence, $\leq$ is transitive. We now argue that $M$ satisfies all frame conditions in $\cons$, (F1), and (F2). It is simple to show that $M$ satisfies all conditions in $\cons$, so we show that it satisfies condition (F1) and note that the argument for (F2) is similar.

(F1). Suppose $(\pathi,i) \leq (\pathii,j)$ and $(\pathi,i) R (\pathiii,i)$. Then, there exist $\bptr{\ns}, \bptr{\nsii} \in \bps$ with $(\pathi,i), (\pathiii,i) \in \bptr{\ns}$ and $(\pathii,j) \in \bptr{\nsii}$. By our first supposition, we know there exists a morphic sequence of the form $(\pathi,i) \hreach{\sigma} (\pathii,j)$ in $\bp(\nsinp)$. Every morphism in $\sigma$ is at least a weak morphism, and so, by \lem~\ref{lem:morpic-properties}-(3) we know that $\mor_{\sigma}$ is a weak morphism from $\bptr{\ns}$ to $\bptr{\nsii}$ such that $\mor_{\sigma}(\pathi,i) = (\pathii,j)$. Since $(\pathi,i) R (\pathiii,i)$, we have that $\bptr{\ns} \models (\pathi,i) \rable_{\forb}^{\cons} (\pathii,i)$. By \lem~\ref{lem:morpic-properties}-(4), it follows that $\bptr{\nsii} \models \mor_{\sigma}(\pathi,i) \rable_{\forb}^{\cons} \mor_{\sigma}(\pathii,i)$. Let $\mor_{\sigma}(\pathii,i) = (\pathiii',j)$. By the definition of $R$ then, we have $(\pathii,j) R (\pathiii',j)$. Also, $(\pathiii,i) \leq (\pathiii',j)$ because $(\pathiii,i) \hreach{\sigma} (\pathiii',j)$, which establishes (F1).

%(F2) This case is similar to case (F1) above.

Last, we show that the model $M$ satisfies the monotonicity property (M). Suppose that $(\pathi,i) \in V(p)$ and $(\pathi,i) \leq (\pathii,j)$. By the definition of $V$, there exists a $\bptr{\ns} \in \bps$ with $p \in \infs{(\pathi,i), \bptr{\ns}}$. By definition of $\leq$ there exists a $\bptr{\nsii} \in \bps$ with $(\pathii,j) \in \bptr{\nsii}$ such that the morphic sequence $(\pathi,i) \hreach{\sigma} (\pathii,j)$ exists in $\bp(\nsinp)$. By Lemma~\ref{lem:bp-monotonicity-witnesses}-(1), $p \in \infs{(\pathii,j), \bptr{\nsii}}$, and so, $(\pathii,j) \in V(p)$.
\end{proof}

%\begin{lemma}\label{lem:cluster-hom-eq}
%(1) All nested seq in a cluster are homormorphically equivalent.
%(2) If two nested seq are hom equivalent, then prop path exists in one iffi it exists in the other
%\end{lemma}

\begin{theorem}\label{thm:false-implies-not-prove}
If $\prove(\nsinp) = \false$, then one can construct a model $M = (W, \leq, R ,V)$ such that $M \not\sat \nsinp$, i.e., $\calc \not\proves \nsinp$.
\end{theorem}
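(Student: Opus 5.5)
The plan is to take $M$ to be the $\lb \cons, \bp(\nsinp), \vechom \rb$-model. By Theorem~\ref{lem:counter-model-is-model} it is based on a frame in $\frames_{\cons}$. The heart of the argument is a \emph{truth lemma}, proved by induction on $\ell(A)$. For every $\bptr{\ns} \in \bps$ and every $(\pathi,i) \in \bptr{\ns}$ it states:
\begin{description}
\item[(a)] if $A \in \infs{(\pathi,i),\bptr{\ns}}$, then $M,(\pathi,i) \sat A$;
\item[(b)] if $A \in \outfs{(\pathi,i),\bptr{\ns}}$, then $M,(\pathi,i) \not\sat A$.
\end{description}
Once this is in hand, we take $\mint$ to be the identity on the renamed vertices of a saturated descendant $\bptr{\ns_0} \in \bps$ of $\tr{\nsinp}$ that is reachable through $\fval$-labelled nodes. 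Such a descendant exists because $\ctl(\tr{\nsinp}) = \fval$, every unary or conjunctive rule labelled $\fval$ has a premise labelled $\fval$, and branches terminate by Theorem~\ref{thm:termination}. Conditions (1) and (2) of Definition~\ref{def:sequent-semantics} then hold by clause (3a) of the definition of $R$, since $\tre \subseteq \tre_{\cons}$. The truth lemma refutes every component of $\ns_0$. All rules preserve components and antecedent formulae bottom-up, and each consequent formula of $\nsinp$ is either kept or decomposed along the $\fval$ path. By a short additional induction along that path, every component of $\nsinp$ is falsified at its image. Soundness (Theorem~\ref{thm:soundness-nested}) then yields $\calc \not\proves \nsinp$.

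For the truth lemma, the atomic cases are handled as follows. Case (a) holds by the definition of $V$. Case (b) uses saturation condition $\idc$ together with Lemma~\ref{lem:bp-monotonicity-witnesses}-(1): if $p \in V(\pathi,i)$ via some other sequent, we would need $p$ in the antecedent of the \emph{same} sequent, so $V$ should really be read as the union over all saturated sequents in which $(\pathi,i)$ appears. Since each renamed vertex lives in exactly one sequent, this is fine. The $\bot$, $\land$ and $\lor$ cases follow immediately from the saturation conditions. For $A \iimp B$ on the left, take any $(\pathi,i) \leq (\pathii,j)$. Lemma~\ref{lem:bp-monotonicity-witnesses}-(1) transports $A \iimp B$ into the antecedent at $(\pathii,j)$, and $\iimplc$ then gives $A$ in the output or $B$ in the input there, which closes the case by the induction hypothesis. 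For $A \iimp B$ on the right, Lemma~\ref{lem:bp-monotonicity-witnesses}-(2) provides the witness. The $\xbox C$ case on the right uses Lemma~\ref{lem:bp-monotonicity-witnesses}-(3) and clause (3a) of $R$. For $\xbox C$ on the left, given $(\pathi,i) \leq (\pathii,j)$ and an $R$-step from $(\pathii,j)$, first transport $\xbox C$ to $(\pathii,j)$. If the $R$-step comes from clause (3a), use $\xboxlc$. If it comes from the seriality loop (3b), the vertex sits at depth $k+1$, and by Lemma~\ref{lem:dpeth-bounded} it carries no formulae. It therefore remains to show that no $\xbox C$ can sit there, which follows from the same depth bound: a $\xbox$ formula has modal depth at least $1$. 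The diamond cases are analogous. $\xdialc$ provides a child reached via $\rable^{\cons}_{\forb}$. For $\xdia A$ on the right, every $R$-successor arises from clause (3a) or from an empty depth-$(k+1)$ loop, so $\xdiarc$ applies, or else the formula cannot occur.

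The main obstacle is making sure that $R$-successors and $\leq$-successors are always witnessed \emph{inside a single saturated sequent}. $R$ only relates vertices with the same index, so this holds there. For $\leq$, however, the box-left and implication-left cases must cope with morphic sequences that pass through strong morphisms to companions. I would handle this exclusively via Lemma~\ref{lem:bp-monotonicity-witnesses}-(1), whose proof already covers both kinds of edges. In the backward-modality cases I would also double-check that the converse edges of $\tre_{\cons}$ are matched correctly by the labels $\bd$ versus $\fd$ under clause (3a).
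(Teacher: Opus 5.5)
Your proposal is correct and follows the paper's own proof. You use the same $\lb \cons, \bp(\nsinp), \vechom \rb$-model, the same simultaneous truth lemma over the saturated sequents in $\bps$, and the same case-by-case use of Lemma~\ref{lem:bp-monotonicity-witnesses} and the saturation conditions. Your additional steps make explicit what the paper leaves implicit: passing from the truth lemma back to $\nsinp$ through a first saturated $\fval$-descendant, using the depth bound to dispose of the seriality loops, and invoking soundness for $\calc \not\proves \nsinp$.
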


\begin{proof} Let $\prove(\nsinp) = \false$ and $\ct(\nsinp) = (\ctv,\cte,\ctl)$ be the corresponding computation tree. Using this computation tree, one can define the blueprint $\bp(\nsinp) = (\bpv,\bps,\bpe)$, from which the $\lb \cons, \bp(\nsinp), \sthom \rb$-model $M = (W, \leq, R ,V)$ can be constructed. We know that $M$ is a model based on a frame in $\frames_{\cons}$ by Lemma~\ref{lem:counter-model-is-model}.

We show by simultaneous induction that if $\bptr{\ns} \in \bps$ with $(\pathi,i) \in \bptr{\ns}$, then (1) if $A \in \infs{(\pathi,i),\bptr{\ns}}$, then $M, (\pathi,i) \sat A$ and (2) if $B \in \outfs{(\pathi,i),\bptr{\ns}}$, then $M, (\pathi,i) \not\sat B$. We prove the claim for the most interesting cases and remark that all remaining cases are simple or similar.

\noindent
$\bullet$ $p \in \infs{(\pathi,i),\bptr{\ns}}$. In this case, $(\pathi,i) \in V(p)$ by definition, showing $M, (\pathi,i) \sat p$.

\noindent
$\bullet$ $p \in \outfs{(\pathi,i),\bptr{\ns}}$. Suppose $p \in \infs{(\pathi,i),\bptr{\ns}}$ for a contradiction. Then, $p \in \infs{\bptr{\ns}} \cap \outfs{\bptr{\ns}}$, meaning, condition $\idc$ is not satisfied (see \dfn~\ref{def:saturated-seq}). Hence, $\bptr{\ns}$ is not $\cons$-saturated, contrary to our assumption. Therefore, $p \not\in \infs{(\pathi,i),\bptr{\ns}}$, meaning $(\pathi,i) \not\in V(p)$. We conclude that $M, (\pathi,i) \not\sat p$.

\noindent
$\bullet$ $A \imp B \in \infs{(\pathi,i),\bptr{\ns}}$. Let $(\pathi,i) \leq (\pathii,j)$ and $M, (\pathii,j) \sat A$. By the definition of $\leq$, there exists a $\bptr{\nsii} \in \bps$ such that $(\pathii,j) \in \bptr{\nsii}$, and there exists a morphic sequence $(\pathi,i) \hreach{\sigma} (\pathii,j)$ in $\bp(\nsinp)$. By Lemma~\ref{lem:bp-monotonicity-witnesses}, we know that $A \imp B \in \infs{(\pathii,j),\bptr{\nsii}}$. Suppose for a contradiction $B \not\in \infs{(\pathii,j),\bptr{\nsii}}$. Since $\bptr{\nsii}$ is $\cons$-saturated it must be the case that $A \in \outfs{(\pathii,j),\bptr{\nsii}}$ by condition $\iimplc$. By IH, we have that $M, (\pathii,j) \not\sat A$, yielding a contradiction with our assumption. Hence, $B \in \infs{(\pathii,j),\bptr{\nsii}}$, which implies that $M, (\pathii,j) \sat B$ by IH, and so, $M, (\pathi,i) \sat A \iimp B$ since $(\pathii,j)$ was arbitrary.

\noindent
$\bullet$ $A \imp B \in \outfs{(\pathi,i),\bptr{\ns}}$. By Lemma~\ref{lem:bp-monotonicity-witnesses}, we know there exists a $\bptr{\nsii} \in \bps$ with $(\pathii,j) \in \bptr{\nsii}$ such that $A \in \infs{(\pathii,j),\bptr{\nsii}}$, $B \in \outfs{(\pathii,j),\bptr{\nsii}}$, and the morphic sequence $(\pathi,i) \hreach{\sigma} (\pathii,j)$ exists in $\bp(\nsinp)$. By the last fact, we know that $(\pathi,i) \leq (\pathii,j)$, and by IH, we know that $M, (\pathii,j) \sat A$ and  $M, (\pathii,j) \not\sat B$. Consequently,  $M, (\pathi,i) \not\sat A \iimp B$.

\noindent
$\bullet$ $\wdia A \in \infs{(\pathi,i),\bptr{\ns}}$. By the $\xdialc$ condition, we know there exists a $(\pathii,i) \in \bptr{\ns}$ such that $(\pathi,i) \tre' (\pathii,i)$, $\trl'((\pathi,i),(\pathii,i)) = \fd$, and $A \in \infs{(\pathii,i),\bptr{\ns}}$ with $\bptr{\ns} = (\trv',\tre',\trl')$. By the definition of $R$ and IH, it follows $(\pathi,i) R (\pathii,i)$ and $M, (\pathii,i) \sat A$, which implies $M, (\pathi,i) \sat \wdia A$.

\noindent
$\bullet$ $\wdia A \in \outfs{(\pathi,i),\bptr{\ns}}$. Suppose $(\pathi,i) R (\pathii,i)$. Then, by the definition of $R$, $\bptr{\ns} \models (\pathi,i) \rable_{\fd}^{\cons} (\pathii,i)$. Since $\bptr{\ns}$ is $\cons$-saturated, we know that $A \in \outfs{(\pathii,i),\bptr{\ns}}$ by condition $\xdiarc$. By IH, $M, (\pathii,i) \not\sat A$, so $M, (\pathi,i) \not\sat \wdia A$ since $(\pathii,i)$ was arbitrary.

\noindent
$\bullet$ $\wbox A \in \outfs{w_{i},\nsii}$. By Lemma~\ref{lem:bp-monotonicity-witnesses}, we know there exists a $\bptr{\nsii} \in \bps$ with $(\pathii,j), (\pathiii,j) \in \bptr{\nsii}$ such that $A \in \outfs{(\pathiii,j),\bptr{\nsii}}$ and the following two facts hold:
$$
(\pathi,i) \hreach{\sigma} (\pathii,j)
\qquad
\bptr{\nsii} \models (\pathii,j) \rableosf^{\cons} (\pathiii,j)
$$
Hence, $(\pathi,i) \leq (\pathii,j)$ and $(\pathii,j) R (\pathiii,j)$. By IH, $M, (\pathii,j) \not\sat A$, showing that $M, (\pathi,i) \not\sat \wbox A$.
\end{proof}

As a consequence of Theorems~\ref{thm:success-gives-proof}, \ref{thm:false-implies-not-prove}, and \ref{thm:termination}, we obtain: %we immediately obtain the following.

\begin{theorem}
For $\cons \subseteq \set{\T,\B,\D}$, $\iktc$ has the finite model property and is decidable.
\end{theorem}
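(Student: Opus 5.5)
The plan is to assemble the final theorem from the three main results of this section, applied to the input $\nsinp := \ \sar A$ for an arbitrary formula $A \in \lang$.

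\textbf{Decidability.} By Theorem~\ref{thm:termination}, $\prove(\nsinp)$ terminates, so it is an effective procedure returning $\true$ or $\false$. If it returns $\true$, Theorem~\ref{thm:success-gives-proof} gives $\calc \proves \ \sar A$, and Theorem~\ref{thm:soundness-nested} gives $\fcons \sat A$, i.e.\ $A \in \iktc$. If it returns $\false$, Theorem~\ref{thm:false-implies-not-prove} yields a model $M$, which by Theorem~\ref{lem:counter-model-is-model} is based on a frame in $\fcons$. By the truth lemma established in the proof of Theorem~\ref{thm:false-implies-not-prove}, the root $(w,i)$ of the saturated seq-tree reached first from $\enum\tr{\nsinp}$ satisfies $M,(w,i)\not\sat A$. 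I will need to check that $A$ is still an output formula at the root of that first saturated seq-tree. This holds because the rules applied before the first $\dbr$ are invertible and never delete consequent formulae. Hence $A \notin \iktc$, so membership in $\iktc$ is decided by running $\prove$.

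\textbf{Finite model property.} Every non-theorem is refuted in the extracted model $M$, so it remains to show that $M$ is finite. The set $W$ is the union of the vertex sets of the seq-trees in $\bps \subseteq \ctv$. Each seq-tree is finite, since nested sequents are finite objects. It therefore suffices that $\ctv$ is finite. The computation tree is finitely branching: every rule has finitely many premises, and $\dbr$ has one premise per output $\iimp$-, $\wbox$- or $\bbox$-formula. By Theorem~\ref{thm:termination} every branch has bounded height, so König's lemma makes $\ct(\nsinp)$ finite. Consequently $W$ is finite, and $\leq$, $R$ and $V$ restricted to the finitely many atoms in $\sufo(A)$ are finite as well.

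\textbf{Main obstacle.} The delicate step is the finiteness of $\ct(\nsinp)$. It relies on Theorem~\ref{thm:termination} actually bounding branch length, not merely ensuring that each branch is finite. This requires a uniform bound, which I will obtain from two facts: the depth bound of Lemma~\ref{lem:dpeth-bounded}, and the fact that only finitely many $\hequiv$-classes exist in $S_k(\nsinp)$ (Lemma~\ref{lem:morpic-properties}-(2)). Together these give the uniform bound, which is what König's lemma needs. A second point to confirm is that the truth lemma transfers from the saturated descendant back to the root. Monotonicity (Lemma~\ref{lem:persistence}) is not needed here, since the refuting world itself carries $A$ on the right.
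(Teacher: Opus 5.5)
Your proposal is correct and follows the paper's route: the paper obtains the theorem directly by combining Theorems~\ref{thm:termination}, \ref{thm:success-gives-proof} and~\ref{thm:false-implies-not-prove} (together with soundness), and your explicit checks fill in two steps the paper leaves implicit, namely that $A$ persists as an output formula at the root of the first saturated seq-tree in $\bps$ and that $\ct(\nsinp)$, and hence $W$, is finite. One small correction: for a finitely branching tree, K\"onig's lemma needs only the absence of infinite branches, so the uniform height bound you single out as the main obstacle is not actually required.
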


%%%CONCLUSION
\section{Concluding Remarks}\label{sec:conclusion}

%Notes:
%1. Modal dynamic blocking method used to presernve numebrs restrictions, so this method would work for int modal logics with graded moalities most likely
%2. Can generalize to int gram logics easily and discover far more expressive classes with int path axioms
%3. Work on how to fold models into finite counter models (i.e. how to get FMP)

There are several directions for future work. First, our proof-search methodology may be generalizable to certain classes of intuitionistic grammar logics (IGLs). IGLs were introduced as multi-modal generalizations of both intuitionistic modal and intuitionistic tense logics~\cite{Lyo21b}, and are the intuitionistic counterparts of classical grammar logics~\cite{DemNiv05,TiuIanGor12}. It seems likely that our algorithm can be extended to accommodate specific extensions of %the intuitionistic multi-modal logic 
$\mathsf{IK_m}$.

Second, it would be worthwhile to investigate whether the complexity of our procedure can be reduced, as our proof-search algorithm has super-exponential worst-case complexity due to the necessity of loop-checking. One possible direction is to employ linear nested sequents during proof-search, as has been done for classical tense logics~\cite{GorLel19}. This may yield smaller computation trees and, consequently, improved complexity bounds.

Third, the decidability of intuitionistic modal logics with transitive modal relations (e.g., $\mathsf{IK4}$ and $\mathsf{IS4}$) has remained a longstanding open problem. The techniques developed in this paper address core obstacles that arise in proof-search for such logics, particularly those stemming from loop-checking and non-invertibility. Due to the modularity of our nested sequent systems, it is straightforward to transform them into systems for transitive IMLs (cf.~\cite{Lyo25}), yielding a natural setting in which to study proof-search and decidability.

%There are a few immediate avenues for future work. First, one could attempt to generalize our proof-search method to cover certain \emph{intuitionistic grammar logics (IGLs)}. IGLs were introduced as a multi-modal generalization of both IMLs and ITLs~\cite{Lyo21b}, and are the intuitionistic counter-part of classical grammar logics~\cite{DemNiv05,TiuIanGor12}. It seems that one can generalize our proof-search algorithm to accommodate specific extensions of $\mathsf{IK_{m}}$. Second, the decidability of IMLs with a transitive modal relation (e.g., $\mathsf{IK4}$ and $\mathsf{IS4}$) has been a longstanding open problem. The work in this paper addresses core issues that arise when attempting to establish the decidability of such logics via proof-search. Due to the modularity of our nested systems, it is straightforward to transform them into systems for transitive IMLs (cf.~\cite{Lyo25}) and study proof-search in this setting. Third, it would be worthwhile to investigate how to reduce the complexity (if possible) of our proof-search method, e.g., by employing linear nested sequents during proof-search, as has been done in the setting of classical tense logics~\cite{GorLel19}. This could reduce the size of computation trees, and thus, reduce the complexity of proof-search.

% REFERENCES

\bibliography{bibliography}

\appendix

\section{Supplement for Section~\ref{sec:systems}}

\paragraph{Modal Rules.} Below, we have included explicit presentations of the modal rules $\xdial$, $\xdiar$, $\xboxl$, and $\xboxr$ depending on the value of the parameters $\forb \in \set{\fd,\bd}$. Note that $\wboxl$, $\bboxl$, $\wdiar$, and $\bdiar$ are subject to the side condition $\dag(\cons)$.
\begin{center}
\begin{tabular}{c @{\hskip .25em} c}
\AxiomC{$\ns \sbl \Gamma \sar \Delta, (\fd) [ A \sar \emptyset ] \sbr$}
\RightLabel{$\wdial$}
\UnaryInfC{$\ns \sbl \Gamma, \wdia A \sar \Delta \sbr$}
\DisplayProof

&

\AxiomC{$\ns \sbl \Gamma \sar \wdia A, \Delta \sbr_{w} \sbl \Sigma \sar A, \Pi \sbr_{u}$}
\RightLabel{$\wdiar$}
\UnaryInfC{$\ns \sbl \Gamma \sar \wdia A, \Delta \sbr_{w} \sbl \Sigma \sar \Pi \sbr_{u}$}
\DisplayProof
\end{tabular}
\end{center}

\smallskip

\begin{center}
\begin{tabular}{c @{\hskip .25em} c}
\AxiomC{$\ns \sbl \Gamma, \wbox A \sar \Delta \sbr_{w} \sbl \Sigma, A \sar \Pi \sbr_{u}$}
\RightLabel{$\wboxl$}
\UnaryInfC{$\ns \sbl \Gamma, \wbox A \sar \Delta \sbr_{w} \sbl \Sigma \sar \Pi \sbr_{u}$}
\DisplayProof

&

\AxiomC{$\ns^{\downarrow} \sbl \Gamma \sar (\fd) [ \emptyset \sar A ] \sbr$}
\RightLabel{$\wboxr$}
\UnaryInfC{$\ns \sbl \Gamma \sar \wbox A, \Delta \sbr$}
\DisplayProof
\end{tabular}
\end{center}

\smallskip

\begin{center}
\begin{tabular}{c @{\hskip .25em} c}
\AxiomC{$\ns \sbl \Gamma \sar \Delta, (\bd) [ A \sar \emptyset ] \sbr$}
\RightLabel{$\bdial$}
\UnaryInfC{$\ns \sbl \Gamma, \bdia A \sar \Delta \sbr$}
\DisplayProof

&

\AxiomC{$\ns \sbl \Gamma \sar \bdia A, \Delta \sbr_{u} \sbl \Sigma \sar A, \Pi \sbr_{w}$}
\RightLabel{$\bdiar$}
\UnaryInfC{$\ns \sbl \Gamma \sar \bdia A, \Delta \sbr_{u} \sbl \Sigma \sar \Pi \sbr_{w}$}
\DisplayProof
\end{tabular}
\end{center}

\smallskip

\begin{center}
\begin{tabular}{c @{\hskip .25em} c}
\AxiomC{$\ns \sbl \Gamma, \bbox A \sar \Delta \sbr_{u} \sbl \Sigma, A \sar \Pi \sbr_{w}$}
\RightLabel{$\bboxl$}
\UnaryInfC{$\ns \sbl \Gamma, \bbox A \sar \Delta \sbr_{u} \sbl \Sigma \sar \Pi \sbr_{w}$}
\DisplayProof

&

\AxiomC{$\ns^{\downarrow} \sbl \Gamma \sar (\bd) [ \emptyset \sar A ] \sbr$}
\RightLabel{$\bboxr$}
\UnaryInfC{$\ns \sbl \Gamma \sar \bbox A, \Delta \sbr$}
\DisplayProof
\end{tabular}
\end{center}

\begin{customthm}{\ref{thm:soundness-nested}} If $\calc \proves \ns$, then $\ns$ is $\fcons$-valid.
\end{customthm}

\begin{proof} It is straightforward to show that all instances of $\id$ and $\botl$ are valid. Therefore, let us consider the other rules of $\calc$. To show soundness, we assume the conclusion is $\frames_{\cons}$-invalid, and argue that at least one premise is $\frames_{\cons}$-invalid. We present the $\iimpr$ and $\xdiar$ cases; the remaining cases are similar. Let $M = (W, \leq,R,V)$ be a model based on a frame in $\frames_{\cons}$ with $\mint$ an $M$-interpretation.

%, (\charx_{1})[\nsiii_{1}], \ldots, (\charx_{n})[\nsiii_{n}]

$\iimpr$. Let $\nsii := \ns \sbl \Gamma \sar A \iimp B, \Delta \sbr_{w}$ be the conclusion of an $\iimpr$ instance and let $\tr{\nsii} := (\trv, \tre, \trl)$. Suppose $M, \mint \not\sat \nsii$. Then, $M, \mint(w) \not\sat A \iimp B$, which implies that there exists a world $w' \in W$ such that $\mint(w) \leq w'$, $M, w' \sat A$, and $M, w' \not\sat B$. We now define a new interpretation $\mint'$ such that $M, \mint' \not\sat \ns^{\downarrow} \sbl \Gamma, A \sar B \sbr$. 

%First, we set $\mint'(w) = u$. Second, we let $v_{1}, \ldots, v_{n}$ be (the names of) all children of $w$, that is, $v_{i}$ is the name of the root of $\nsiii_{i}$ for $i \in [n]$. 

First, we set $\mint'(w) = w'$. Second, we let $v_{1}, \ldots, v_{n}$ be (the names of) all children of $w$ (if they exist) with $u$ the (name of the) parent of $w$ (if it exists) in $\nsii$. Let $\trl(w,v_{i}) = \forb_{i}$ for $i \in [n]$ and $\trl(u,w) = \forb$. Since $\mint(w) R_{\forb_{i}} \mint(v_{i})$ and $\mint(u) R_{\forb} \mint(w)$ hold in $M$ and $\mint(w) \leq w'$, we know there exists a $v_{i}'$ and $u'$ such that $\mint(v_{i}) \leq v_{i}'$ and $w'R_{\forb_{i}}v_{i}'$ by conditions (F1) and (F2), and $\mint(u) \leq u'$ and $u'R_{\forb}w'$ by conditions (F1) and (F2). Thus, if we set $\mint'(v_{i}) = v_{i}'$ and $\mint'(u) = u'$, then it follows that $\mint'(w) R_{\forb_{i}} \mint'(v_{i})$ and $\mint'(u') R_{\forb} \mint'(w)$. Moreover, since $M, \mint \sat \bigwedge \Gamma_{i}$ and $M, \mint \sat \bigwedge \Sigma$ for $\trl(v_{i}) = \Gamma_{i} \sar \Delta_{i}$ and $\trl(u) = \Sigma \sar \Pi$, it follows that $M, \mint' \sat \bigwedge \Gamma_{i}$ and $M, \mint' \sat \bigwedge \Sigma$ by \lem~\ref{lem:persistence} as $\mint(v_{i}) \leq \mint'(v_{i})$ and $\mint(u) \leq \mint'(u)$. We successively repeat this process until all components of $\nsii$ have been processed. One can confirm that $M, \mint' \not\sat \ns^{\downarrow} \sbl \Gamma, A \sar B \sbr$.

$\xdiar$. Let $\nsii = \ns \sbl \Gamma \sar \xdia A, \Delta \sbr_{w} \sbl \Sigma \sar \Pi \sbr_{u}$ and $\tr{\nsii}^{\cons} = (\trv_{\cons},\tre_{\cons},\trl_{\cons})$. Suppose $M, \mint \not\sat \nsii$. Then, $M, \mint(w) \not\sat \xdia A$. By the side condition on $\xdiar$, we know that $\tr{\nsii} \models w \rable_{\forb}^{\cons} u$. We have four cases to consider depending on if $\T$ or $\B$ are members of $\cons$ or not. We consider two cases and note that the remaining cases are similar. 

(1) If $\B,\T \not\in \cons$, then $w \tre u$ with $\trl(w,u) = \forb$, so by Definition~\ref{def:sequent-semantics}, we know that $wRu$. Hence, $M, \mint(u) \not\sat A$, showing that $M, \mint \not\sat \ns \sbl \Gamma \sar \xdia A, \Delta \sbr_{w} \sbl \Sigma, A \sar \Pi \sbr_{u}$.

(2) If $\T \in \cons$ and $\B \not\in \cons$, then either (i) $w \tre u$ with $\trl(w,u) = \forb$ or (ii) $w = u$, $w \tre_{\cons} w$, and $\trl(w,w) = \forb$. Case (i) is resolved as above. In case (ii), observe that since $R$ is reflexive, we have $wRw$, so $M, \mint(w) \not\sat A$. Therefore, $M, \mint \not\sat \ns \sbl \Gamma \sar \xdia A, \Delta \sbr_{w} \sbl \Sigma, A \sar \Pi \sbr_{u}$.
\end{proof}

%%%%%%%%%%%%%%%
%%%%%%%%%%%%%%%
%%%%%%%%%%%%%%%

\section{Supplement for Section~\ref{subsec:comp-trees}}

\paragraph{Computation Tree.} For the sake of completeness, we add the formal definition of the computation tree $\ct(\nsinp) = (\ctv,\cte,\ctl)$ corresponding to $\prove(\nsinp)$. We define $(\ctv,\cte)$ root-first based on the number of recursive calls in $\prove(\nsinp)$. Initially, our structure is taken to be $(\ctv,\cte) := (\set{\nsinp},\emptyset)$. Once the finite structure $(\ctv,\cte)$ has been built, we define (1) $\ctl(\tr{\ns}) = \tval$ \iffi $\prove(\ns) = \true$ and (2) $\ctl(\tr{\ns}) = \fval$ \iffi $\prove(\ns) = \false$,  for all $\tr{\ns} \in \ctv$.
\begin{description}[leftmargin=!, labelwidth=0em, labelsep=.25em]

\item[$\bullet$] If lines 1-3 are executed, then stop building $(\ctv,\cte)$ along the branch ending at the initial sequent;

\item[$\bullet$] If lines 4-6 are executed, then stop building $(\ctv,\cte)$ along the branch ending at the repeat or stable sequent;

\item[$\bullet$] If lines 7-10, 17-20, or 21-24 are executed, then set $\ctv := \ctv \cup \set{\tr{\ns_{1}}, \tr{\ns_{2}}}$ and $\cte \ := \ \cte \cup \set{(\tr{\ns},\tr{\ns_{1}}),(\tr{\ns},\tr{\ns_{2}})}$;

\item[$\bullet$] If lines 11-13, 14-16, 25-27, 28-30, 31-33, or 34-37 are executed, then set $\ctv := \ctv \cup \set{\ns}$ and $\cte \ := \ \cte \cup \set{(\tr{\ns},\tr{\ns'})}$;

\item[$\bullet$] If lines 38-44 are executed, then we set the vertices to be $\ctv := \ctv \cup \set{\tr{\ns_{1}}, \ldots, \tr{\ns_{n{+}k}}}$ and we set the edges to be $\cte \ := \ \cte \cup \set{(\tr{\ns},\tr{\ns_{1}}), \ldots, (\tr{\ns},\tr{\ns_{n{+}k}})}$;

\end{description}

%\section{Supplement for Section~\ref{subsec:repeats-termination}}

%\section{Supplement for Section~\ref{subsec:counter-model-extract}}

\end{document}